\newenvironment{proof}{\noindent{\bf Proof:}}
{\begin{flushright} \vspace{-0.5cm} $\square$ \end{flushright}}
\newcommand {\abs}[1]{\left\vert#1\right\vert}
\newcommand {\set}[1]{\left\{#1\right\}}
\newcommand {\defined} {\stackrel{def} {=}}
\newtheorem{observation} {Observation}  [section]
\newtheorem{theorem}  {Theorem}         [section]
\newtheorem{lemma}    {Lemma}           [section]
\newcommand{\runningtitle}[1]{\vspace{0.5ex}\noindent{{\small \textbf{\boldmath #1~}}}\\}
\newcommand{\nph}{\textsc{NP}\textrm{-Hard}}
\newcommand{\npc}{\textsc{NP}\textrm{-Complete}}
\newcommand{\eptg}[1] {\textsc{Ept}(#1)}
\newcommand{\vptg}[1] {\textsc{Vpt}(#1)}
\newcommand{\enptg}[1] {\textsc{Enpt}(#1)}
\newcommand{\epg} {\textsc{EPG}}
\newcommand{\ept} {\textsc{EPT}}
\newcommand{\vpt} {\textsc{VPT}}
\newcommand{\enpt} {\textsc{ENPT}}
\newcommand{\pp} {{\cal P}}
\newcommand{\eptgp} {\eptg{\pp}}
\newcommand{\vptgp} {\vptg{\pp}}
\newcommand{\enptgp} {\enptg{\pp}}
\newcommand{\rep} {\left< T,\pp \right>}
\newcommand{\repn}[1] {\left< T_{#1},\pp_{#1} \right>}
\newcommand{\ppbar} {\bar{\pp}}
\newcommand{\repbar} {\left< \bar{T},\ppbar \right>}
\newcommand{\repbarprime} {\left< \bar{T'},\ppbar' \right>}
\newcommand{\repprime} {\left< T',\pp' \right>}
\newcommand{\repdprime} {\left< T'',\pp'' \right>}
\renewcommand{\split} {\textit{split}}
\newcommand{\op} {{\cal O}}
\newcommand{\opg} {\op(G,C)}
\newcommand{\wdt} {{\cal W}}
\newcommand{\wdtg} {\wdt(G,C)}
\newcommand{\wdtgdprime} {\wdt(G'',C'')}
\newcommand{\contract}[2] {{#1}_{/{#2}}}
\newcommand{\contractge} {\contract{G}{e}}
\newcommand{\contractgce} {\contract{(G,C)}{e}}
\newcommand{\contractp}[1] {\contract{\rep}{#1}}
\newcommand{\contractppq}{\contractp{P_p,P_q}}
\newcommand{\kp} {(K_4,P_4)}
\newcommand{\aggressive}[2] {\contract{#1}{#2}}
\newcommand{\aggressivegcdpk} {\aggressive{(G'',C'')}{K}}
\newcommand{\comp} {\textit{comp}}
\newcommand{\xx} {{\cal X}}
\newcommand{\yy} {{\cal Y}}
\newcommand{\prb} {\textsc{HamiltonianPairRec}}
\newcommand{\prbpthree} {\textsc{P3-HamiltonianPairRec}}
\title{Graphs of Edge-Intersecting Non-Splitting Paths in a Tree: Representations of Holes-Part II\thanks{This work was supported in part by the Israel Science Foundation grant No. 1249/08, by TUBITAK PIA BOSPHORUS Grant No. 111M303 and by the TUBITAK 2221 Programme.}
\thanks{A preliminary version of this paper was presented in WG 2013 (\cite{BESZ13}).}
}
\author{Arman Boyac{\i} \affiliationmark{1}
  \and T{\i}naz Ekim \affiliationmark{1}
  \and Mordechai Shalom \affiliationmark{1,2} \thanks{Part of this work is accomplished while this author was visiting Bogazici University, Department of Industrial  Engineering, under the TUBITAK 2221 Program whose support is greatly acknowledged.}
  \and Shmuel Zaks \affiliationmark{3}}
\affiliation{
  Department of Industrial Engineering, Bogazici University, Istanbul, Turkey \\
  TelHai College, Upper Galilee, 12210, Israel\\
  Department of Computer Science, Technion, Haifa, Israel
}
\keywords{Intersection Graphs, Path Graphs, EPT Graphs}
\begin{document}
\publicationdetails{20}{2018}{1}{2}{3974}
\maketitle
\begin{abstract}
Given a tree and a set $\pp$ of non-trivial simple paths on it, $\vptgp$ is the $\vpt$ graph (i.e. the vertex intersection graph) of the paths $\pp$, and $\eptgp$ is the $\ept$ graph (i.e. the edge intersection graph) of $\pp$. These graphs have been extensively studied in the literature.
Given two (edge) intersecting paths in a graph, their \emph{split vertices} is the set of vertices having degree at least $3$ in their union. A pair of (edge) intersecting paths is termed \emph{non-splitting} if they do not have split vertices (namely if their union is a path).
We define the graph $\enptgp$ of edge intersecting non-splitting paths of a tree, termed the $\enpt$ graph, as the graph having a vertex for each path in $\pp$, and an edge between every pair of vertices representing two paths that are both edge-intersecting and non-splitting. A graph $G$ is an $\enpt$ graph if there is a tree $T$ and a set of paths $\pp$ of $T$ such that $G=\enptgp$, and we say that $\rep$ is a \emph{representation} of $G$.

Our work follows Golumbic and Jamison's research, in which they defined the $\ept$ graph class, and characterized the representations of chordless cycles (holes). Our main goal is the characterization of the representations of chordless $\enpt$ cycles. To achieve this goal, we assume that the $\ept$ graph induced by the vertices of an  $\enpt$ hole is given. We use the results of that research as building blocks in order to discover this characterisation, which turn out to have a more complex structure than in the case of $\ept$ holes. In the first part of this work we have shown that cycles, trees and complete graphs are $\enpt$ graphs. We also introduced three assumptions $(P1)$, $(P2)$, $(P3)$ defined on $\ept$, $\enpt$ pairs of graphs, and characterized the representations of $\enpt$ holes that satisfy $(P1)$, $(P2)$, $(P3)$. In this work we relax two of these three assumptions and characterize the representations of $\enpt$ holes satisfying $(P3)$. These two results are achieved by providing polynomial-time algorithms. Last we show that the problem of finding such a representation is $\nph$ in general, i.e. without assumption $(P3)$. This result extends in some sense the $\textsc{NP}$-Hardness of $\ept$ graph recognition shown in Golumbic and Jamison's work.

\end{abstract}

\section{Introduction}\label{sec:intro}
\subsection{Background}
Given a tree $T$ and a set $\pp$ of non-trivial simple paths in $T$, the Vertex Intersection Graph of Paths in a Tree (\vpt) and  the Edge Intersection Graph of Paths in a Tree ($\ept$) of $\pp$ are denoted by $\vptgp$ and $\eptgp$, respectively. Both graphs have $\pp$ as vertex set. $\vptgp$ (resp. $\eptgp$) contains an edge between two vertices if the corresponding two paths intersect in at least one vertex (resp. edge). A graph $G$ is $\vpt$ (resp. $\ept$) if there exist a tree $T$ and a set $\pp$ of non-trivial simple path in $T$ such that $G$ is isomorphic to $\vptgp$ (resp. $\eptgp$). In this case we say that $\rep$ is a $\vpt$ (resp. an $\ept$) representation of $G$.

In this work we focus on edge intersections of paths, therefore whenever we are concerned with intersection of paths we omit the word "edge" and simply write that two paths \emph{intersect}. The graph of edge intersecting and non-splitting paths of a tree ($\enpt$) of a given representation $\rep$ as described above, denoted by $\enptgp$, has a vertex $v$ for each path $P_v$ of $\pp$ and two vertices $u,v$ of $\enptgp$ are adjacent if the paths $P_u$ and $P_v$ intersect and do not split (that is, their union is a path). A graph $G$ is an $\enpt$ graph if there is a tree $T$ and a set of paths $\pp$ of $T$ such that $G$ is isomorphic to $\enptgp$. We note that $\eptgp=\enptgp$ is an interval graph whenever $T$ is a path. Therefore, the class $\enpt$ includes all interval graphs.

$\ept$ and $\vpt$ graphs have applications in communication networks. Consider a communication network of a tree topology $T$. The message routes to be delivered in this communication network are paths on $T$. Two paths conflict if they both require to use the same link (node). This conflict model is equivalent to an $\ept$ (a $\vpt$) graph. Suppose we try to find a schedule for the messages such that no two messages sharing a link (node) are scheduled in the same time interval. Then a vertex coloring of the $\ept$ ($\vpt$) graph corresponds a feasible schedule on this network.

$\ept$ graphs also appear in all-optical telecommunication networks. The so-called Wavelength Division Multiplexing (WDM) technology can multiplex different signals onto a single optical fiber by using different wavelength ranges of the laser beam (\cite{CGK92,R93}). WDM is a promising technology enabling us to deal with the massive growth of traffic in telecommunication networks, due to applications such as video-conferencing, cloud computing and distributed computing (\cite{DV93}). A stream of signals traveling from its source to its destination in optical form is termed a \emph{lightpath}. A lightpath is realized by signals traveling through a series of fibers, on a certain wavelength. Specifically, Wavelength Assignment problems (WLA) are a family of path coloring problems that aim to assign wavelengths (i.e. colors) to lightpaths, so that no two lightpaths with a common edge receive the same wavelength and a certain objective function (depending on the problem) is minimized. \emph{Traffic Grooming} is the term used for combination of several low-capacity requests into one lightpath using Time Division Multiplexing (TDM) technology (\cite{GRS98}). In this context a set of  paths can be combined into one lightpath, thus receiving the same color, as long as they satisfy the following two conditions:
\begin{itemize}
\item The load condition: at most $g$ lightpaths using the same fiber may receive the same color, where $g$ is an integer termed \emph{the grooming factor}.
\item  The union of the paths receiving the same color should constitute a path or a set of disjoint paths.
\end{itemize}
It follows that a feasible solution of the traffic grooming problem is a vertex coloring of the graph $\eptgp \setminus \enptgp$ where each color class induces a sub-graph of $\eptgp$ with clique number at most $g$. Therefore, it makes sense to analyze the structure of these graph pairs, i.e. the two graphs $\eptgp$ and $\enptgp$ defined on the same set of vertices.

\subsection{Related Work}
$\ept$ and $\vpt$ graphs have been extensively studied in the literature. Although $\vpt$ graphs can be characterized by a fixed number of forbidden subgraphs (\cite{Leveque2009Characterizing}), \cite{Golumbic1985151} showed that $\ept$ graph recognition is $\npc$. Edge intersection and vertex intersection give rise to identical graph classes in the case of paths in a line and in the case of subtrees of a tree. However, $\vpt$ graphs and $\ept$ graphs are incomparable in general; neither $\vpt$ nor $\ept$ contains the other. Main optimization and decision problems such as recognition, maximum clique, minimum vertex coloring and the maximum stable set are shown to be polynomial-time solvable for $\vpt$ graphs by \cite{Fanica1978211}, \cite{Fanica2000181}, \cite{Golumbic:2004:AGT:984029}  and \cite{Spinrad1995181}, respectively. The recognition and minimum vertex coloring problems are shown to remain $\npc$ for $\ept$ graphs by \cite{Golumbic1985151,GJ85}. In contrast, one can solve in polynomial time the maximum clique and the maximum stable set problems in $\ept$ graphs (\cite{GJ85} and \cite{RobertE1985221}, respectively).

After these works on $\ept$ and $\vpt$ graphs in the early 80's, this topic did not attract much further study until very recently. Current research on intersection graphs is concentrated on the comparison of various intersection graphs of paths in a tree and their relation to chordal and weakly chordal graphs (\cite{Golumbic2008Equivalences,GolumbicLS08}). Also, a tolerance model is studied in \cite{Golumbic2008Kedge} via $k$-edge intersection graphs where two vertices are adjacent if their corresponding paths intersect on at least $k$ edges . Besides, several recent papers consider the edge intersection graphs of paths on a grid (e.g. \cite{BiedlS10}).

Another related line of research is subsets of $\epg$ (i.e., edge-intersection of paths in a grid) graphs. Since every graph is an $\epg$ graph, research in this field focuses on subsets of this class of graphs that are obtained by imposing some restrictions on the paths in their representation. The most researched such restriction is an upper bound on the number of bends on every single path (e.g. \cite{GolumbicLS09}, \cite{HKU10}, \cite{EGM2013}). The recent work of \cite{CCH16} considers the subsets of one bend path obtained by restricting their orientation in the plane. \cite{PR16} focus on graphs admitting $\epg$ representations by paths with at most 2 bends and show that their recognition along with some subclasses of them is $\npc$.

\subsection{Our Contribution}
In the first part of this work (\cite{BESZ13-ENPT1-DAM}), we define the family of $\enpt$ graphs, and investigate its basic properties. We show that trees, cliques and cycles are $\enpt$ graphs. In the same work we started the study of the characterization of $\enpt$ representations of holes. To achieve this goal we assume that the input is augmented by the underlying $\ept$ graph. More particularly, we study pairs $(G,C)$ of graphs where $G=\eptgp$ and $C=\enptgp$ for some representation $\rep$. For any given such pair of graphs $(G,C)$ that satisfy three assumptions $(P1)$, $(P2)$, $(P3)$ and $C$ is a Hamiltonian cycle of $G$, we characterize the minimal representation of $(G,C)$ as a planar tour of the weak dual of $G$. In this work we extend these results. Namely a) we provide an algorithm that returns the minimal representation of a given pair of graphs $(G,C)$ that satisfy $(P3)$ and $C$ is a Hamiltonian cycle of $G$, b) using this algorithm we characterize the representations of such pairs, and c) we show that for a given pair $(G,C)$ in general, i.e. when $(G,C)$ does not necessarily satisfy $(P3)$, it is $\nph$ to decide whether $(G,C)$ has a representation.

Our approach can be summarized as follows. Given a pair $(G,C)$, we first remove $\kp$ pairs and contract all contactable $\enpt$ edges. A contraction operation corresponds to a union of two paths in the representation. Then we study the reverse operations of contractions (union in the representation) and characterization of representations of $\kp$s. In Section \ref{sec:prelim} we give definitions and preliminaries. In Section \ref{sec:uncontraction} we present basic results related to contraction, describe an algorithm returning minimal representations of pairs satisfying assumptions $(P2)$ and $(P3)$. Using this algorithm we characterize these representations. In Section \ref{sec:K4P4} we characterize the representations of $\kp$ pairs, we define an aggressive contraction operation and we present an algorithm that returns the minimal representation of a given pair $(G,C)$ satisfying $(P3)$, through which we characterize the representations. In Section \ref{sec:General} we show that when $(P3)$ does not hold, there does not exist a polynomial-time algorithm to find a representation of a given pair $(G,C)$, unless $\textsc{P}=\textsc{NP}$.

\section{Preliminaries}\label{sec:prelim}
In this section we provide definitions used in the paper, present known results related to our work, and develop basic results. The section is organized as follows: Section \ref{subsec:prelim-definitions} is devoted to basic definitions. In Section \ref{subsec:prelim-eptn-knownresults} we present known results on $\ept$ graphs that are closely related to our work. Section \ref{subsec:eptn-OurPreviousResults} summarizes known results related to $\enpt$ graphs and pairs.

\subsection{Definitions}\label{subsec:prelim-definitions}
\runningtitle{Notation:}
Given a graph $G$ and a vertex $v$ of $G$, we denote by $\delta_G(v)$ the set of edges of $G$ incident to $v$, by $N_G(v)$ the set consisting of $v$ and its neighbors in $G$, and by $d_G(v)=\abs{\delta_G(v)}$ the degree of $v$ in $G$. A vertex is called a \emph{leaf} (resp. \emph{intermediate vertex}, \emph{junction}) if $d_G(v)=1$ (resp. $=2$, $\geq 3$). Whenever there is no ambiguity we omit the subscript $G$ and write $\delta(v)$, $d(v)$, and $N(v)$.

Given a graph $G$, $\bar{V} \subseteq V(G)$ and $\bar{E} \subseteq E(G)$ we denote by $G[\bar{V}]$ and $G[\bar{E}]$ the subgraphs of $G$ induced by $\bar{V}$ and by $\bar{E}$, respectively.

The \emph{union} of two graphs $G, G'$ is the graph $G \cup G' \defined (V(G) \cup V(G'), E(G) \cup E(G'))$. The \emph{join} $G+G'$ of two disjoint graphs $G,G'$ is the graph $G \cup G'$ together with all the edges joining $V(G)$ and $V(G')$, i.e. $G + G' \defined (V(G) \cup V(G'), E(G) \cup E(G') \cup (V(G) \times V(G')))$.

Given a (simple) graph $G$ and $e \in E(G)$, we denote by $\contractge$ the (simple) graph obtained by contracting the edge $e$ of $G$, i.e. by coinciding the two endpoints of $e=\set{p,q}$ to a single vertex $p.q$, and then removing self loops and parallel edges. Let $\bar{E}=\set{e_1, e_2, \ldots e_k} \subseteq E(G)$. We denote by $\contract{G}{e_1,\ldots,e_k}$ the graph obtained from $\contract{G}{e_1,\ldots,e_{k-1}}$ by contracting the (image of the) edge $e_k$. The effect of such a sequence of contractions is equivalent to contracting every connected component of $G[\set{e_1,\ldots,e_k}]$ to a vertex. Therefore, the order of contractions is not important, i.e. for any permutation $\pi$ of $\set{1,\ldots,k}$ we have $\contract{G}{e_1,\ldots,e_{k-1}}=\contract{G}{e_{\pi(1)},\ldots,e_{\pi(k-1)}}$. Based on this fact, we denote by $\contract{G}{\bar{E}}$ the graph obtained by contracting the edges of $\bar{E}$ (in any order).

For two vertices $u,v$ of a tree $T$ we denote by $p_T(u,v)$ the unique path between $u$ and $v$ in $T$. A vertex $w$ of a path $P$ that is not an endpoint of $P$ is termed an \emph{internal vertex} of $P$. We also say that $P$ \emph{crosses} $w$. A \emph{cherry} of a tree $T$ is a connected subgraph of $T$ consisting of two leaves of $T$ adjacent to an internal vertex of $T$.

\runningtitle{Intersections and union of paths:}
Given two paths $P,P'$ in a graph, we write $P \parallel P'$ to denote that $P$ and $P'$ are non-intersecting, i.e. \emph{edge-disjoint}. The \emph{split vertices} of $P$ and $P'$ is the set of junctions in their union $P \cup P'$ and is denoted by $\split(P,P')$. Whenever $P$ and $P'$ intersect and $\split(P, P') = \emptyset$  we say that $P$ and $P'$ are \emph{non-splitting} and denote this by $P \sim P'$. In this case $P \cup P'$ is a path or a cycle. When $P$ and $P'$ intersect and $\split(P, P') \neq \emptyset$ we say that they are \emph{splitting} and denote this by $P \nsim P'$.  Clearly, for any two paths $P$ and $P'$ exactly one of the following holds: $P \parallel P'$, $P \sim P'$, $P \nsim P'$.

When the graph $G$ is a tree, the union $P \cup P'$ of two intersecting paths $P,P'$ on $G$ is a tree with at most two junctions, i.e. $\abs{\split(P,P')} \leq 2$ and $P \cup P'$ is a path whenever $P \sim P'$.

\runningtitle{The VPT, EPT and ENPT graphs:}
Let $\pp$ be a set of paths in a tree $T$. The graphs $\vptgp, \eptgp$ and $\enptgp$ are graphs such that $V(\enptgp) = V(\eptgp) = V(\vptgp)=\set{p | P_p \in \pp}$. Given two distinct paths $P_p,P_q \in \pp$, $\set{p,q}$ is an edge of $\enptgp$ if $P_p \sim P_q$, and $\set{p,q}$ is an edge of $\eptgp$ (resp. $\vptgp$) if $P_p$ and $P_q$ have a common edge (resp. vertex) in $T$. See Figure \ref{fig:vpt-ept-eptn} for an example. From these definitions it follows that
\begin{observation}
$E(\enptgp) \subseteq E(\eptgp)\subseteq E(\vptgp)$.
\end{observation}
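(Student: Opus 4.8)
The plan is to verify the two inclusions separately by simply unwinding the definitions given above. All three graphs share the same vertex set $\set{p \mid P_p \in \pp}$, so the claim is purely about the containment of edge sets, and it suffices to show that membership in the edge set of each graph implies membership in the edge set of the next.

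First I would establish $E(\enptgp) \subseteq E(\eptgp)$. Take any edge $\set{p,q} \in E(\enptgp)$; by definition this means $P_p \sim P_q$. Recalling that $P_p \sim P_q$ holds precisely when $P_p$ and $P_q$ intersect (i.e. are not edge-disjoint, $P_p \not\parallel P_q$) and additionally $\split(P_p,P_q)=\emptyset$, the relation $\sim$ entails in particular that $P_p$ and $P_q$ share at least one edge of $T$. This is exactly the condition for $\set{p,q}$ to be an edge of $\eptgp$, which gives the first inclusion.

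Next I would establish $E(\eptgp) \subseteq E(\vptgp)$. Take any edge $\set{p,q} \in E(\eptgp)$, so that $P_p$ and $P_q$ have a common edge $e=\set{u,v}$ in $T$. Since an edge consists of its two endpoints, both paths pass through the vertices $u$ and $v$, and in particular they share a common vertex of $T$. This is the defining condition for $\set{p,q} \in E(\vptgp)$, yielding the second inclusion and completing the chain $E(\enptgp) \subseteq E(\eptgp)\subseteq E(\vptgp)$.

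There is essentially no obstacle here: the statement is an immediate consequence of the definitions, resting only on the elementary facts that ``intersecting and non-splitting'' implies ``edge-intersecting,'' and that sharing an edge implies sharing its endpoints. The only point requiring any care is to keep the common vertex set fixed across the three graphs, so that each step is genuinely an inclusion of edge sets over the same vertices rather than a comparison between differently-labelled graphs.
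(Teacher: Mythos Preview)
Your proof is correct and matches the paper's approach: the paper states this observation without proof, simply noting that it ``follows from these definitions,'' and your argument does exactly that by unwinding the relevant definitions.
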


\begin{figure}[htbp]
\centering
\includegraphics[width=\textwidth]{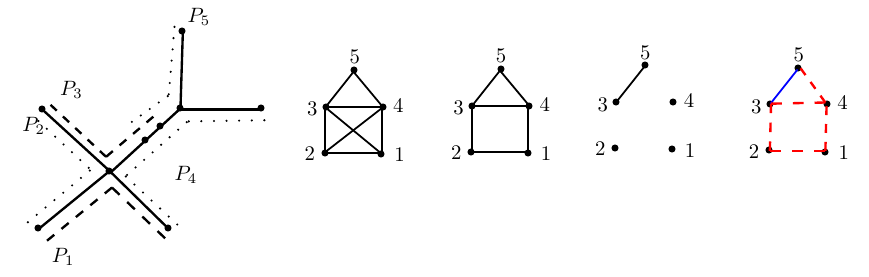}
\caption{A host tree $T$, a collection of paths $\pp = \left\{ P_1, P_2, P_3, P_4, P_5 \right\} $ defined on $T$ and the corresponding graphs $\vptgp, \eptgp$ and $\enptgp$. The last sub-figure shows the graphs $\eptgp$ and $\enptgp$ where their common edge is depicted as a solid line.}
\label{fig:vpt-ept-eptn}
\end{figure}

Two graphs $G$ and $G'$ such that $V(G)=V(G')$ and $E(G') \subseteq E(G)$ are termed a \emph{pair (of graphs)} denoted as $(G,G')$. If $\eptgp=G$ (resp. $\enptgp=G$) we say that $\rep$ is an $\ept$ (resp. $\enpt$) representation for $G$. If $\eptgp=G$ and $\enptgp=G'$ we say that $\rep$ is a representation for the pair $(G,G')$. Given a pair $(G,G')$ the sub-pair induced by $\bar{V} \subseteq V(G)$ is the pair $(G[\bar{V}],G'[\bar{V}])$. Clearly, any representation of a pair induces representations for its induced sub-pairs, i.e. the pairs have the hereditary property.

A \emph{cherry} of a representation $\rep$ is a cherry of $T$ with leaves $v,v'$ such that $v$ (resp. $v'$) is an endpoint of exactly one path $P$ (resp. $P'$) of $\pp$, and $P \neq P'$.

Throughout this work, in all figures, the edges of the tree $T$ of a representation $\rep$ are drawn as solid edges whereas the paths on the tree are shown by dashed, dotted, etc. edges. Similarly, edges of $\enptgp$ are drawn with solid or blue lines whereas edges in $E(\eptgp) \setminus E(\enptgp)$ are dashed or red. We sometimes refer to them as blue and red edges respectively. For an edge $e=\set{p,q}$ we use $\split(e)$ as a shorthand for $\split(P_p,P_q)$. We note that $e$ is a red edge if and only if $\split(e) \neq \emptyset$.

\runningtitle{Cycles, Chords, Holes, Outerplanar graphs, Weak dual trees:}
Given a graph $G$ and a cycle $C$ of it, a \emph{chord} of $C$ in $G$ is an edge of $E(G) \setminus E(C)$ connecting two vertices of $V(C)$. The \emph{length} of a chord connecting the vertices $i$,$j$ is the length of a shortest path between $i$ and $j$ on $C$. $C$ is a \emph{hole} (chordless cycle) of $G$ if $G$ does not contain any chord of $C$. This is equivalent to saying that the subgraph $G[V(C)]$ of $G$ induced by the vertices of $C$ is a cycle. For this reason a chordless cycle is also called an \emph{induced} cycle.

An \emph{outerplanar} graph is a planar graph that can be embedded in the plane such that all its vertices are on the unbounded face of the embedding. An outerplanar graph is Hamiltonian if and only if it is biconnected; in this case the unbounded face forms the unique Hamiltonian cycle. The \emph{weak dual} graph of a planar graph $G$ is the graph obtained from its dual graph, by removing the vertex corresponding to the unbounded face of $G$. The weak dual graph of an outerplanar graph is a forest, and in particular the weak dual graph of a Hamiltonian outerplanar graph is a tree~(\cite{CH67}). When working with outerplanar graphs we use the term \emph{face} to mean a bounded face.

\subsection{$\ept$ Graphs} \label{subsec:prelim-eptn-knownresults}
We now present definitions and results from the work of \cite{GJ85} that we use throughout this work.

A \emph{pie} of a representation $\rep$ of an $\ept$ graph is an induced star $K_{1,k}$ of $T$ with $k$ leaves $v_0, v_1, \ldots, v_{k-1} \in V(T)$, and $k$ paths $P_0, P_1, \ldots P_{k-1} \in \pp$, such that for every $0 \leq i \leq k-1$ both $v_i$ and $v_{(i+1)\mod k}$ are vertices of $P_i$. We term the central vertex of the star as the \emph{center} of the pie (See Figure \ref{fig:ept-cycle}). It is easy to see that the $\ept$ graph of a pie with $k$ leaves is the hole $C_k$ on $k$ vertices. Moreover, this is the only possible $\ept$ representation of $C_k$ when $k \geq 4$.

\begin{figure}
\centering
\includegraphics{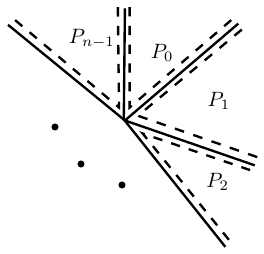}
\caption{The $\ept$ representation of cycle graph on $n$ vertices: a pie.}
\label{fig:ept-cycle}
\end{figure}

\begin{theorem}\cite{GJ85}\label{thm:golumbicpie}
If an $\ept$ graph contains a hole with $k \geq 4$ vertices, then every representation of it contains a pie with $k$ paths.
\end{theorem}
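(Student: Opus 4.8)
The plan is to fix an arbitrary representation $\rep$ with $\eptgp=G$ and to read the pie off the $k$ paths that realise the hole. Index the hole vertices $0,1,\dots,k-1$ cyclically and let $P_0,\dots,P_{k-1}\in\pp$ be the corresponding paths. Because the hole is an induced cycle, two of these paths share an edge exactly when their indices are consecutive; in particular $P_i\parallel P_j$ whenever $i$ and $j$ are non-consecutive mod $k$. Write $Q_i\defined P_i\cap P_{i+1}$ (indices mod $k$) for the overlap of consecutive paths. Since $k\ge4$ the hole is triangle-free, so no edge of $T$ can lie on three of the paths; hence each $Q_i$ is a subpath with at least one edge, and the overlaps $Q_0,\dots,Q_{k-1}$ are pairwise edge-disjoint. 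The target is a single vertex $c$ lying on all the $P_i$ together with $k$ distinct $T$-edges at $c$ such that each $P_i$ passes through $c$ using exactly the edges contributed by $Q_{i-1}$ and $Q_i$, which is precisely a pie with $k$ paths and center $c$.

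The engine of the argument is the acyclicity of $T$, used twice through the fact that a closed walk in a tree crosses every edge an even number of times. First I would show that inside each path the two overlaps abut. Within $P_i$ the edge-disjoint subpaths $Q_{i-1}$ and $Q_i$ are separated by a (possibly trivial) bridge $B_i\subseteq P_i$ consisting of edges of $P_i$ lying strictly between them. Concatenating, for $i=0,\dots,k-1$, the subpath of $P_i$ joining a chosen vertex $x_{i-1}\in Q_{i-1}$ to a chosen vertex $x_i\in Q_i$ yields a closed walk $W$ in $T$. A bridge edge belongs to $P_i$ only, since any other hole path sharing it would be consecutive with $P_i$ and the edge would then lie in $Q_{i-1}$ or $Q_i$; hence such an edge is crossed by $W$ exactly once, via the $P_i$-segment. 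As $W$ must cross every edge an even number of times, there are no bridge edges, so $Q_{i-1}$ and $Q_i$ meet in a single vertex $c_i$ through which $P_i$ passes.

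The same parity idea, applied to the cyclic chain of overlaps, collapses the centers. Both $c_i$ and $c_{i+1}$ are endpoints of $Q_i$, so the walk $W'$ that runs from $c_i$ to $c_{i+1}$ along $Q_i$ for each $i$ is closed; if $c_i\ne c_{i+1}$ then, being the two endpoints of $Q_i$, they force $W'$ to cross every edge of $Q_i$ exactly once, and since distinct overlaps are edge-disjoint no cancellation can occur, again violating the even-crossing property. Therefore $c_i=c_{i+1}$ for all $i$, so the $c_i$ all equal one vertex $c$ lying on every $P_i$. Letting $e_i$ be the edge of $Q_i$ incident to $c$, the edges $e_0,\dots,e_{k-1}$ are pairwise distinct (a common edge would sit on two edge-disjoint overlaps), they span an induced star $K_{1,k}$ centered at $c$, and since $P_i$ contains both $Q_{i-1}$ and $Q_i$ it contains $e_{i-1}$ and $e_i$, hence both of their endpoints other than $c$. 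Naming these endpoints $v_0,\dots,v_{k-1}$ gives $v_i,v_{i+1}\in P_i$, i.e.\ a pie with $k$ paths.

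I expect the two parity arguments to be where the real care is needed: one must verify that the concatenations are genuinely closed walks, that a bridge edge (respectively an overlap edge when $c_i\ne c_{i+1}$) is crossed exactly once and by no other segment, and one must handle the degenerate configurations where an overlap is a single edge or where consecutive overlaps already share a vertex. The reduction is the easy part---that consecutive hole paths share an edge while non-consecutive ones satisfy $P_i\parallel P_j$, and that $k\ge4$ makes the hole triangle-free so that no edge carries three paths---so the whole weight of the proof rests on showing that the cyclically overlapping paths cannot wind through $T$ without collapsing onto a single center, which is exactly what the absence of cycles in a tree forbids.
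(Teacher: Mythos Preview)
The paper does not give its own proof of this theorem; it is quoted from Golumbic and Jamison \cite{GJ85} and used as a black box. So there is nothing in the paper to compare your argument against, and it has to be judged on its own.

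Your argument is correct. The two parity steps are exactly the right mechanism: first, the closed walk built from the segments $x_{i-1}\to x_i$ inside $P_i$ forces each bridge $B_i$ to be empty because a bridge edge lies in $P_i$ alone (any other hole path containing it would be a neighbour of $P_i$ and the edge would fall into $Q_{i-1}$ or $Q_i$); second, the closed walk $c_0\to c_1\to\cdots\to c_0$ along the $Q_i$ forces all the $c_i$ to coincide because the $Q_i$ are pairwise edge-disjoint (a consequence of the hole being triangle-free for $k\ge4$), so any non-trivial step would contribute edges traversed exactly once. Both uses of ``a closed walk in a tree crosses every edge an even number of times'' are clean, and the preliminary facts you need (that $Q_i$ is a path with at least one edge, that the $Q_i$ are edge-disjoint, that two edge-disjoint subpaths of a path meet in at most one vertex) are all immediate.

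One small slip at the very end: from $e_{i-1},e_i\in P_i$ you get $v_{i-1},v_i\in P_i$, not $v_i,v_{i+1}\in P_i$ as written. This is only a cyclic shift of the labelling and matches the pie definition after renaming, so it does not affect the argument.
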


Let $\pp_e \defined \set{p \in \pp|~e \in p}$ be the set of paths in $\pp$ containing the edge $e$. A star $K_{1,3}$ is termed a \emph{claw}. For a claw $K$ of a tree $T$, $\pp[K] \defined \set{p \in \pp|~p \textrm{~uses two edges of~}K}$. It is easy to see that both $\eptg{\pp_e}$ and $\eptg{\pp[K]}$ are cliques. These cliques are termed \emph{edge-clique} and \emph{claw-clique}, respectively. Moreover, these are the only possible representations of cliques.
\begin{theorem}\cite{GJ85}\label{thm:golumbiccliques}
Any maximal clique of an $\ept$ graph with representation $\rep$ corresponds to a subcollection $\pp_e$ of paths for some edge $e$ of $T$, or to a subcollection $\pp[K]$ of paths for some claw $K$ of $T$.
\end{theorem}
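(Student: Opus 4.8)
The plan is to reduce this geometric statement to a small combinatorial fact about intersecting sets of size at most two, by ``localizing'' the whole clique at a single vertex of $T$. Let $Q \subseteq \pp$ denote the paths of a maximal clique of $\eptgp$. By definition the members of $Q$ pairwise share an edge of $T$, and in particular pairwise share a vertex. Since each path is a subtree of $T$ and subtrees of a tree enjoy the (vertex) Helly property, the paths of $Q$ have a common vertex $v \in V(T)$. This $v$ is the vertex around which the entire argument is organized.

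First I would show that every intersection within $Q$ is already witnessed at $v$: if $P_p,P_q \in Q$ both contain $v$ and share some edge $e'=\set{a,b}$, then they in fact share an edge incident to $v$. Indeed, since $P_p$ is a path containing both $v$ and $a$, the subpath of $P_p$ between them is exactly the tree path $p_T(v,a)$, and likewise for $P_q$; hence $p_T(v,a) \subseteq P_p \cap P_q$. If $v \neq a$ the first edge of $p_T(v,a)$ is incident to $v$ and lies in both paths, and if $v = a$ then $e'$ itself is incident to $v$. Consequently, associating to each $P_p \in Q$ the set $S_p \subseteq \delta_T(v)$ of edges at $v$ that it uses (so $\abs{S_p} \in \set{1,2}$, according to whether $v$ is an endpoint of $P_p$ or $P_p$ crosses $v$), the sets $S_p$ pairwise intersect.

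It then remains to classify a pairwise-intersecting family of subsets of size at most two. The key combinatorial claim is a dichotomy: either $\bigcap_p S_p \neq \emptyset$, or all the $S_p$ have size exactly two and lie among the three sets $\set{g_a,g_b},\set{g_b,g_c},\set{g_a,g_c}$ determined by three \emph{distinct} edges $g_a,g_b,g_c \in \delta_T(v)$. A singleton in the family would force a common element, and once two ``disjoint directions'' occur one is driven to this triangle pattern; I would prove it by the elementary case analysis starting from a fixed $S_p=\set{g_a,g_b}$. In the first case the common edge $e$ gives $Q \subseteq \pp_e$; in the second case the three edges span a claw $K=K_{1,3}$ centered at $v$, and every path of $Q$ uses two of its edges, so $Q \subseteq \pp[K]$. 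Since both $\pp_e$ and $\pp[K]$ are themselves cliques (two $2$-subsets of a $3$-set always meet), maximality of $Q$ promotes each inclusion to an equality, which is precisely the assertion.

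The main obstacle, and the conceptual heart of the statement, is exactly this localization step combined with the fact that the edge-version of Helly's property \emph{fails}: pairwise edge-intersecting paths need not share a common edge. The proof must therefore isolate the claw as the unique obstruction to a common edge, and the intersecting-set lemma for sets of size at most two is what pins this down. A secondary point to treat carefully is verifying that $g_a,g_b,g_c$ are genuinely distinct, so that $K$ is an induced $K_{1,3}$ of $T$ rather than a degenerate configuration.
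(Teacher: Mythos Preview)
The paper does not give its own proof of this theorem; it is quoted from \cite{GJ85} as a known result. Your argument is correct and is essentially the classical proof: localize at a common vertex via the vertex-Helly property of subtrees of a tree, observe that pairwise edge-intersection is then witnessed by edges incident to that vertex, and invoke the dichotomy for pairwise-intersecting families of sets of size at most two to separate the edge-clique and claw-clique cases.
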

Note that a claw-clique is a pie with $3$ leaves.

\subsection{$\enpt$ Graphs and $\ept,\enpt$ Graph Pairs} \label{subsec:eptn-OurPreviousResults}
In this section we present definitions and results from \cite{BESZ13-ENPT1-DAM} that we use throughout this work, introduce new terms, and prove basic results.

\runningtitle{Equivalent and minimal representations:}
We say that the representations $\repn{1}$ and $\repn{2}$ are \emph{equivalent}, and denote by $\repn{1} \approxeq \repn{2}$, if their corresponding $\ept$ and $\enpt$ graphs are isomorphic under the same isomorphism (in other words, if they constitute representations of the same pair of graphs $(G,G')$).

We write $\repn{1} \gtrsim \repn{2}$ or equivalently $\repn{2} \lesssim \repn{1}$ if $\repn{2}$ can be obtained from $\repn{1}$ by zero or more of the following
two operations that we term as \emph{minifying} operations.

\begin{itemize}
\item{} Contraction of an edge $e$ of $T_1$ (and of all the paths in $\pp_1$ using $e$). We denote this operation as $contract(e)$.
\item{} Removal of an initial edge (\emph{tail}) $e$ of a path $P$ of $\pp_1$. We denote this operation as $tr(P,e)$.
\end{itemize}

We say that $\rep$ is a \emph{minimal} representation, if it is minimal in the partial order $\lesssim$ defined over all the representations representing the same pair as $\rep$. Throughout the work we aim at characterizing minimal representations. Whenever a sequence of minifying operations contains two operations $tr(P,e)$ and $contract(e)$ in this order, the first operation can be deleted from the sequence, to obtain a shorter and equivalent sequence. A sequence that can not be shortened in this way and contains the biggest number of $contract(e)$ operations is termed a \emph{minimal} sequence. Whenever $\repn{1} \gtrsim \repn{2}$ we consider only minimal sequences of minifying operations that transform $\repn{1}$ to $\repn{2}$. We observe two properties of such sequences.
\begin{lemma}\label{lem:MinOperationsExchangeable}
Let $\repn{1} \gtrsim \repn{2}$, and $s$ be a minimal sequence of minifying operations transforming $\repn{1}$ to $\repn{2}$.
Then every permutation of $s$ also transforms $\repn{1}$ to $\repn{2}$.
\end{lemma}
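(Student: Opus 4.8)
The plan is to show that the individual operations composing $s$ commute with one another in a strong sense, and then to deduce permutation-invariance by a routine bubbling argument. Concretely, I would regard each minifying operation as a partial function on representations, and call two operations $o,o'$ \emph{strongly commuting} if, on every representation $\rep$ at which one of $o\circ o'$, $o'\circ o$ is defined, the other is defined as well and yields the same representation. The key meta-step is that if every pair of operations occurring in $s$ strongly commutes, then the statement follows: swapping two adjacent operations $o_i,o_{i+1}$ in any valid ordering preserves both definedness and the final result (apply the pair property at the intermediate representation reached after $o_{i-1}\circ\cdots\circ o_1$), and since adjacent transpositions generate all permutations, applying them one at a time shows that every permutation of $s$ is defined at $\repn{1}$ and produces $\repn{2}$. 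Note that no operation repeats in $s$ (once $e$ is contracted it is gone, and once the initial edge of $P$ is trimmed it cannot be trimmed again), so all operations are distinct and the pairwise analysis suffices.

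The first substantive step is to extract the structural consequence of minimality, which is the only place the hypothesis is used. I claim that in a minimal sequence no tree edge is both trimmed and contracted. Indeed, if $contract(e)$ occurs in $s$, the edge $e$ (or its image) persists in the tree until it is contracted, so any operation $tr(P,e)$ would necessarily precede $contract(e)$; but the shortening rule of Section~\ref{subsec:prelim-definitions} deletes a $tr(P,e)$ that precedes $contract(e)$, contradicting the non-shortenability of $s$. Hence the set $E_c$ of contracted edges is disjoint from the set of edges removed by trims. This disjointness is exactly what forces contractions and trims to act on separate parts of the representation.

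With this in hand I would verify strong commutation pair by pair. Two contractions strongly commute because contracting a set of tree edges is order-independent (Section~\ref{subsec:prelim-definitions}) and a contraction shortens precisely the paths through the edge in the same way in either order; definedness is order-independent since each edge survives until it is the one contracted. A contraction $contract(e)$ and a trim $tr(P,e')$ commute because $e\neq e'$ by the disjointness above: trimming changes no tree edge, so it does not affect the contractibility of $e$; and contracting $e$ keeps the image of $P$ a path whose initial edge on the $e'$-side is still $e'$, so the trim remains applicable, the two orders yielding the same representation. Two trims on \emph{distinct} paths commute trivially, as they alter disjoint paths and no tree edge.

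The remaining case, two trims $tr(P,e)$ and $tr(P,e')$ on the \emph{same} path $P$, is where I expect the real difficulty. Trims at the two opposite ends of $P$ clearly commute; the dangerous configuration is two trims biting into the \emph{same} end, where the inner edge becomes an initial edge only after the outer one is removed, so a naive swap is undefined. The crux of the proof is thus to show that a minimal sequence never trims two edges from the same end of a single path --- equivalently, that every edge removed by a trim is already a tail of its path in $\repn{1}$, so that the trimmed tails are mutually independent. I would try to establish this using the maximality of the number of $contract$ operations together with the structure of the representations under study, arguing that such a doubled trim could be reorganized into a sequence reaching $\repn{2}$ that is either strictly shorter or contains strictly more contractions, contradicting minimality. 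Granting this independence, all operations of $s$ have pairwise disjoint footprints, strong pairwise commutation holds in every case, and the bubbling argument of the first paragraph completes the proof; pinning down this last independence claim precisely is the step I expect to require the most care.
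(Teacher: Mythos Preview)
Your strategy---show that any two operations in $s$ commute and conclude via adjacent transpositions---is precisely the paper's. The paper likewise uses minimality to ensure that if $contract(e)$ occurs then no other operation of $s$ involves $e$, and then simply asserts interchangeability case by case: two contractions on distinct edges; a contraction and a trim on distinct edges; and ``for two not necessarily distinct edges $e,e'$ the operations $tr(P,e),tr(P',e')$.'' It does not isolate or discuss the same-path same-end configuration you flag.

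Your proposed resolution of that configuration via the max-contract clause is, however, a genuine gap. Let $T$ be a path on four vertices $a,b,c,d$, let $P$ be all of $T$ and $Q$ the single edge $\{a,b\}$, and let $\repn{2}$ arise from $s=\bigl(tr(P,\{a,b\}),\,tr(P,\{b,c\})\bigr)$. Any sequence of minifying operations from $\repn{1}$ to this $\repn{2}$ preserves the tree and hence contains no contraction; thus $s$ already maximizes the number of contractions, is not shortenable, and is minimal in the paper's sense---yet its reversal is undefined at $\repn{1}$. So nested same-end trims cannot be ruled out by minimality alone, and your final step would need a different argument. The paper's proof does not supply one either: it simply treats two trim operations as interchangeable without further comment. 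In the paper's downstream uses of the lemma this is harmless (in the relevant minimal sequences all operations are eventually shown to be contractions), but as a stand-alone proof neither version closes this case.
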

\begin{proof}
If $contract(e)$ is an operation of $s$ then there is no other operation in $s$ involving $e$. This is because such an operation is impossible after $contract(e)$, and if it appears before $contract(e)$ it contradicts the minimality of $s$. To conclude the result, we observe that any two successive operations in $s$ are interchangeable. Indeed, for two distinct edges $e,e'$ the operations $contract(e), contract(e')$ (resp. $contract(e),tr(P,e')$) are interchangeable, and for two not necessarily distinct edges $e,e'$ the operations $tr(P,e),tr(P',e')$ are interchangeable.
\end{proof}

\begin{lemma}\label{lem:MinOperationsMonotone}
If $\repn{1} \gtrsim \cdots \gtrsim \repn{n}$ and $\repn{1} \approxeq \repn{n}$, then $\repn{1} \approxeq \cdots \approxeq \repn{n}$.
\end{lemma}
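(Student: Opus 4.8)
The plan is to track two monotone quantities along the chain, namely the edge set of the $\ept$ graph and, conditionally on the former, the edge set of the $\enpt$ graph. Since $\gtrsim$ is the reflexive-transitive closure of $\rightsquigarrow$, I would first refine the given chain into a single sequence of individual minifying operations that passes through $\repn{1},\ldots,\repn{n}$, so that it suffices to understand the effect of one operation $contract(e)$ or $tr(P,e)$. Both operations keep the set of paths in bijection with $\pp$ (a path is only shortened or has its tail removed, never deleted), so all the graphs $\eptg{\pp_i}$ and $\enptg{\pp_i}$ share one common labeled vertex set and their edge sets may be compared directly; in particular the hypothesis $\repn{1} \approxeq \repn{n}$ says precisely that $E(\eptg{\pp_1})=E(\eptg{\pp_n})$ and $E(\enptg{\pp_1})=E(\enptg{\pp_n})$ as edge sets on this common set.

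The first technical claim is that a single minifying operation never creates an $\ept$-edge, i.e. $E(\eptg{\pp'}) \subseteq E(\eptg{\pp})$. For $tr(P,e)$ this is immediate, since only $P$ changes and it merely loses the edge $e$, so no pair of paths can acquire a new common edge. For $contract(e)$ I would use that $T$ is a tree: contracting $e$ induces a bijection between $E(T)\setminus\set{e}$ and $E(\contractge)$ with no new parallel edges or self-loops (either would force a cycle in $T$), and a path uses an edge $f \neq e$ after contraction exactly when it did before; hence two paths share an edge after the contraction iff they shared an edge other than $e$ before, which can only shrink the intersection. Composing over the refined chain yields $E(\eptg{\pp_1}) \supseteq \cdots \supseteq E(\eptg{\pp_n})$, and $\repn{1} \approxeq \repn{n}$ forces equality throughout; in particular no operation in the refined chain ever deletes an $\ept$-edge.

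The second claim concerns the $\enpt$ edges, which by the Observation are exactly the non-splitting pairs inside the (now fixed) $\ept$ edge set. Here I would show that neither operation can create a split vertex: removing a tail edge only shrinks $P_p \cup P_q$, and contracting a tree edge $e=\set{u,v}$ leaves the union unchanged when $e \notin P_p \cup P_q$, while if $e \in P_p \cup P_q$ the merged vertex has degree $d(u)+d(v)-2$ in the union, which exceeds $2$ only when $u$ or $v$ was already a junction. Consequently, if $\set{p,q}$ is blue (its union $P_p \cup P_q$ is junction-free) and remains an $\ept$-edge after the operation, its union stays junction-free and it remains blue. Since we have already established that no $\ept$-edge is ever deleted along the refined chain, every blue edge stays blue, giving $E(\enptg{\pp_1}) \subseteq \cdots \subseteq E(\enptg{\pp_n})$; the hypothesis $\repn{1} \approxeq \repn{n}$ again forces equality throughout.

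Combining the two parts, every $\repn{i}$ has the same $\ept$ graph and the same $\enpt$ graph as $\repn{1}$ on the common label set, i.e. all the $\repn{i}$ represent the same pair $(G,G')$, which is exactly the assertion $\repn{1} \approxeq \cdots \approxeq \repn{n}$. I expect the main obstacle to be the contraction case of the two monotonicity claims: one must argue carefully, using that the host is a tree, both that contraction introduces no new common edge between two paths and that it introduces no new junction into their union. The trimming operation is comparatively routine, and the passage from these per-operation statements to the global conclusion is a direct squeeze argument once the endpoints are known to agree.
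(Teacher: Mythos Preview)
Your proof is correct and follows the same approach as the paper: both argue that minifying operations never create new intersections or new splits, and then apply a squeeze argument using the endpoint hypothesis. The only cosmetic difference is that the paper tracks the red edges $E(G_i)\setminus E(G'_i)$, which are directly non-increasing under each operation, whereas you track the blue edges $E(G'_i)$, whose non-decrease you obtain after first pinning down that $E(G_i)$ is constant along the chain.
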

\begin{proof}
Let $G_i=\eptg{\pp_i}$ and $G'_i=\enptg{\pp_i}$. We observe that both minifying operations are monotonic in the sense that they neither introduce neither new intersections, nor new splits. Namely, for $1 \leq i < n$, $E(G_{i+1}) \subseteq E(G_i)$ and $E(G_{i+1}) \setminus E(G'_{i+1}) \subseteq E(G_i) \setminus E(G'_i)$. As $\repn{1} \approxeq \repn{n}$ we have $(G_1,G'_1)=(G_n,G'_n)$, i.e. $E(G_n)=E(G_1)$ and $E(G_n) \setminus E(G'_n) = E(G_1) \setminus E(G'_1)$. Therefore, $E(G_1) = \cdots = E(G_n)$ and $E(G_1) \setminus E(G'_1) = \cdots =  E(G_n) \setminus E(G'_n)$, concluding $(G_1,G'_1) = \cdots =(G_n,G'_n)$.
\end{proof}

\runningtitle{$\ept$ holes:}
\begin{lemma}\label{lem:NoBlueHole}
A hole of size at least $4$ of an $\ept$ graph does not contain blue (i.e. $\enpt$) edges.
\end{lemma}
\begin{proof}
Consider the pie representation of some hole of an $\ept$ graph. For any two paths $P_p,P_q$ of this pie, we have either $P_p \nsim P_q$ or $P_p \parallel P_q$, therefore $\set{p,q}$ is not an
$\enpt$ edge.
\end{proof}

Combining Lemma \ref{lem:NoBlueHole} with Theorem \ref{thm:golumbicpie}, we obtain the following characterization of pairs $(C_k, G')$. For a pair $(C_k$, $G'$), exactly one of the following holds:
\begin{itemize}
\item{$k>3$.} In this case $C_k$ is represented by a pie. Therefore, $G'$ is an independent set. In other words, $C_k$ consists of red edges. We term such a cycle, a red hole.
\item{$k=3$ and $C_k$ consists of red edges.} $G'$ is an independent set. We term such a cycle a red triangle.
\item{$k=3$ and $C_k$ contains exactly one $\enpt$ (blue) edge.} We term such a cycle a $BRR$ triangle, and its representation is an edge-clique.
\item{$k=3$ and $C_k$ contains two $\enpt$ (blue) edges.} We term such a cycle a $BBR$ triangle, and its representation is an edge-clique.
\item{$k=3$ and $C_k$ consists of blue edges ($G' = C_3$).} We term such a cycle a blue triangle, and its representation is an edge-clique.
\end{itemize}

\runningtitle{$\ept$ contraction:}
Let $\rep$ be a representation and $P_p,P_q \in \pp$ such that $P_p \sim P_q$. We denote by $\contractppq$ the representation that is obtained from $\rep$ by replacing the two paths $P_p,P_q$ by the path $P_p \cup P_q$, i.e. $\contractppq \defined \left<T,\pp \setminus \set{P_p,P_q} \cup \set{P_p \cup P_q}\right>$. We term this operation a \emph{union}. Lemma \ref{lem:contraction-union} follows from the below observation.
\begin{observation}\label{obs:splitofunion}
For every $P_p,P_q,P_r \in \pp$ such that $P_p \sim P_q$, $\split(P_p \cup P_q, P_r) = \split(P_p,P_r) \cup \split(P_q,P_r)$.
\end{observation}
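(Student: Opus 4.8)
The plan is to turn the set equality into a purely local statement, evaluated independently at each vertex of the host tree $T$, and to invoke the global hypothesis $P_p \sim P_q$ only at the very end in order to discard a single degenerate configuration. Recall that $\split(X,Y)$ is exactly the set of junctions of the subgraph $X \cup Y$ of $T$, i.e. the vertices of degree at least $3$ in $X \cup Y$. For a vertex $w$ and a path $P$, let $E_P(w)$ denote the set of tree-edges incident to $w$ that $P$ uses; then $\abs{E_P(w)} \le 2$, and $w \in \split(X,Y)$ if and only if $\abs{E_X(w) \cup E_Y(w)} \ge 3$. Since $(P_p \cup P_q) \cup P_r = P_p \cup P_q \cup P_r$, proving the claim amounts to showing, for every $w$, that $\abs{E_p(w) \cup E_q(w) \cup E_r(w)} \ge 3$ holds if and only if $\abs{E_p(w) \cup E_r(w)} \ge 3$ or $\abs{E_q(w) \cup E_r(w)} \ge 3$ (writing $E_p$ for $E_{P_p}$, and so on).

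The inclusion $\split(P_p,P_r) \cup \split(P_q,P_r) \subseteq \split(P_p \cup P_q, P_r)$ is immediate from monotonicity: both $P_p \cup P_r$ and $P_q \cup P_r$ are subgraphs of $P_p \cup P_q \cup P_r$, so a vertex of degree at least $3$ in either of them has degree at least $3$ in the triple union. For the reverse inclusion I would argue by contradiction. Suppose some $w$ lies in $\split(P_p \cup P_q, P_r)$ but in neither $\split(P_p,P_r)$ nor $\split(P_q,P_r)$. Because $P_p \sim P_q$, the union $P_p \cup P_q$ is a path, hence $w$ has degree at most $2$ in it as well; thus all three pairwise edge-sets $E_p(w) \cup E_q(w)$, $E_p(w) \cup E_r(w)$, $E_q(w) \cup E_r(w)$ have size at most $2$, while $E_p(w) \cup E_q(w) \cup E_r(w)$ has size at least $3$. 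A short counting argument, using only $\abs{E_p(w)},\abs{E_q(w)},\abs{E_r(w)} \le 2$, forces the unique surviving configuration: each of the three paths has $w$ as an endpoint, and the three incident edges they use at $w$ are pairwise distinct.

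It remains to rule out this last configuration, and this is the step where the global hypothesis does the real work. If $P_p$ and $P_q$ both have $w$ as an endpoint and leave $w$ along two distinct tree-edges, then, $T$ being a tree, each of $P_p,P_q$ is confined to the branch of $T$ hanging off $w$ through its own starting edge; these two branches are edge-disjoint, so $P_p$ and $P_q$ share no edge. This contradicts $P_p \sim P_q$, which by definition requires $P_p$ and $P_q$ to intersect. I expect the local counting in the previous paragraph to be routine; the main obstacle is this final tree-branch argument, since it is the only place where one must connect the purely local degree condition at $w$ to the global edge-disjointness structure that $P_p \sim P_q$ forbids.
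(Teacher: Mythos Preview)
Your argument is correct. The local reduction to the sets $E_p(w),E_q(w),E_r(w)$ is clean, the counting that forces $\abs{E_p(w)}=\abs{E_q(w)}=\abs{E_r(w)}=1$ with three distinct edges is valid, and the final branch argument (two paths ending at $w$ along distinct edges of a tree are edge-disjoint, contradicting the edge-intersection required by $P_p \sim P_q$) is exactly the right way to close it.

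As for comparison: the paper states this as an Observation and does not supply a proof at all; it is treated as immediate. So there is no ``paper's own proof'' to compare against. Your write-up is a fully worked verification of a fact the authors regard as evident once one unpacks the definition of $\split$. One minor remark: you invoke $P_p \sim P_q$ twice, first to bound $\abs{E_p(w)\cup E_q(w)}\le 2$ and then to obtain the final contradiction. Both uses are legitimate, but note that the second use alone would already suffice: even without knowing $P_p\cup P_q$ is a path, any configuration with $\abs{E_p(w)\cup E_q(w)}\ge 3$ makes $w$ a split vertex of $P_p$ and $P_q$, directly contradicting $P_p \sim P_q$. This is a cosmetic point and does not affect correctness.
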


\begin{lemma}\label{lem:contraction-union}
\cite{BESZ13-ENPT1-DAM} Let $\rep$ be a representation for the pair $(G,G')$, and let $e=\set{p,q} \in E(G')$. Then
$\contractge$ is an $\ept$ graph. Moreover, $\contractge=\eptg{\contractppq}$.
\end{lemma}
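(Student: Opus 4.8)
The plan is to prove the stronger ``moreover'' equality $\contractge = \eptg{\contractppq}$ directly; the first assertion then comes for free, since $\contractppq$ is itself a representation and hence its $\ept$ graph $\eptg{\contractppq}$ is, by definition, an $\ept$ graph. Before that, I would record that the operations involved are well defined. Since $e=\set{p,q} \in E(G')=E(\enptgp) \subseteq E(\eptgp)=E(G)$, the contraction $\contractge$ makes sense, and the two paths satisfy $P_p \sim P_q$. As noted in the preliminaries, the union of two non-splitting intersecting paths in a tree is again a path; hence $P_p \cup P_q$ is a path of $T$ and $\contractppq = \left<T,\pp \setminus \set{P_p,P_q} \cup \set{P_p \cup P_q}\right>$ is a valid representation, with one fewer path than $\rep$.

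Next I would set up the vertex bijection between $\contractge$ and $\eptg{\contractppq}$. Contracting $e$ coincides $p$ and $q$ into a single vertex $p.q$, while in $\contractppq$ the two paths $P_p,P_q$ are replaced by the single path $P_p \cup P_q$; I identify $p.q$ with the vertex representing $P_p \cup P_q$ and keep every other vertex $r \notin \set{p,q}$ fixed. This is a bijection of the two vertex sets, and it remains only to check that it preserves adjacency.

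Then I would verify edge-preservation in two cases. For $r,s \notin \set{p,q}$, the paths $P_r,P_s$ are identical in $\rep$ and in $\contractppq$, and contracting $e$ affects neither the presence nor the absence of $\set{r,s}$; so $\set{r,s}$ is an edge of $\contractge$ exactly when $P_r$ and $P_s$ share an edge of $T$, which is exactly the condition for $\set{r,s} \in E(\eptg{\contractppq})$. For an edge incident to the merged vertex, the crucial identity is $E(P_p \cup P_q) = E(P_p) \cup E(P_q)$: the edge $\set{p.q,r}$ appears in $\contractge$ iff at least one of $\set{p,r},\set{q,r}$ is an edge of $G$ (the self-loop from $e$ and any resulting parallel edges being discarded), i.e. iff $P_r$ shares an edge of $T$ with $P_p$ or with $P_q$; by the edge-union identity this holds iff $P_r$ shares an edge with $P_p \cup P_q$, i.e. iff $\set{p.q,r} \in E(\eptg{\contractppq})$.

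I do not expect a genuine obstacle: the whole conceptual content lies in the identity $E(P_p \cup P_q)=E(P_p)\cup E(P_q)$ together with the fact that $P_p \sim P_q$ forces $P_p \cup P_q$ to be a path. The only points requiring care are bookkeeping --- confirming that $P_p \cup P_q$ is a legitimate path so that $\contractppq$ is defined, and correctly discarding self-loops and parallel edges during contraction --- and neither causes difficulty, since we reason only about the \emph{existence} of a shared edge, not its multiplicity.
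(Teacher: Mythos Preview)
Your proof is correct. The paper does not actually prove this lemma; it is cited from \cite{BESZ13-TR1}, and the surrounding text merely remarks that it ``follows from'' Observation~\ref{obs:splitofunion} (an observation about split vertices that is really aimed at the $\enpt$ analogue in Lemma~\ref{lem:contraction-pairs}). Your direct verification via the edge-set identity $E(P_p \cup P_q)=E(P_p)\cup E(P_q)$ is exactly the natural argument for the $\ept$ statement and is complete as written.
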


\runningtitle{Contraction of pairs:}
The definition of the contraction operation extends to pairs: The contraction of an $\enpt$ edge does not necessarily correspond to the union operation in the $\enpt$ representation. For example, let $P_p$,$P_q$ and $P_{q'}$ be such that $P_p \sim P_q$, $P_p \sim P_{q'}$ and $P_q \nsim P_{q'}$. Then $\contract{G'}{\set{p,q}}$ is not isomorphic to $\enptg{\contractppq}$ as $ \set{q',p.q} \notin E(\enptg{\contractppq})$. Let $(G,G')$ be a pair and $e \in E(G')$. If for every edge $e' \in E(G')$ incident to $e$, the edge $e''= e \triangle e'$ (forming a triangle together with $e$ and $e'$) is not an edge of $G$ then $\contract{(G,G')}{e} \defined (\contractge,\contract{G'}{e})$, otherwise $\contract{(G,G')}{e}$ is undefined. Whenever $\contract{(G,G')}{e}$ is defined we say that $(G,G')$ is \emph{contractable} on $e$, or when there is no ambiguity about the pair under consideration we say that $e$ is \emph{contractable}. We say that a pair is contractable if it is contractable on some $\enpt$ edge. Clearly, $(G,G')$ is not contractable if and only if every edge of $G'$ is contained in at least one $BBR$ triangle.

\runningtitle{Weak Dual Trees:}
The definition of weak dual tree is extended from Hamiltonian outerplanar graphs to any Hamiltonian graph as follows. Given a pair $(G,C)$ where $C$ is a Hamiltonian cycle of $G$, a weak dual tree of $(G,C)$ is the weak dual tree $\wdtg$ of an arbitrary Hamiltonian maximal outerplanar subgraph $\opg$ of $G$. $\opg$ can be built by starting from $C$ and adding to it arbitrarily chosen chords from $G$ as long as such chords exist and the resulting graph is planar.

By definition of a dual graph, vertices of $\wdtg$ correspond to faces of $\opg$. By maximality, the faces of $\opg$ correspond to holes of $G$. The degree of a vertex of $\wdtg$ is the number of red edges in the corresponding face of $\opg$. To emphasize the difference, for an outerplanar graph $G$ we will say \emph{the} weak dual tree of $G$, whereas for a (not necessarily outerplanar) graph $G$ we will say \emph{a} weak dual tree of $G$.

The following lemma describes the effect of contraction on weak dual trees.
\begin{lemma}\label{lem:contraction-wdt}
\cite{BESZ13-ENPT1-DAM} Let $(G,C)$ be a pair satisfying $(P2),(P3)$ and let $\wdtg$ be a weak dual tree of $(G,C)$. (i) There is a bijection between the contractable edges of $(G,C)$ and the intermediate vertices of $\wdtg$. (ii) The tree obtained from $\wdtg$ by smoothing out the intermediate vertex corresponding to a contractable edge $e$ is a possible weak dual tree of $\contractgce$.
\end{lemma}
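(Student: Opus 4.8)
The plan is to fix one maximal Hamiltonian outerplanar subgraph $\opg$ together with its weak dual tree $\wdtg$, and to read the whole statement off the edge-colouring of the faces of $\opg$. The starting observation is that, since the $\enpt$ graph of the pair is $C$ itself, every edge of $C$ is blue and every chord of $C$ in $G$ is red; hence on the boundary of a bounded face the red edges are exactly the interior (chord) edges, which is why the degree of a face-vertex of $\wdtg$ equals its number of red edges. I would then classify the faces: by Lemma \ref{lem:NoBlueHole} every hole of length $>3$ consists only of red edges, so a bounded face carrying a blue edge must be a triangle, and it is either a $BBR$ triangle (two blue edges, one red, hence degree $1$) or a $BRR$ triangle (one blue edge, two red, hence degree $2$); faces with no blue edge are $RRR$ triangles or longer all-red holes and have degree $\ge 3$. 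Consequently the intermediate (degree-$2$) vertices of $\wdtg$ are precisely the $BRR$-triangle faces, each of which carries a single blue edge.

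For part (i) I would take the map sending a $BRR$ face to its unique blue edge $e$. Since each blue edge lies on exactly one bounded face of $\opg$, this map is injective, so it remains to identify its image with the contractible edges. Recall that $e$ fails to be contractible exactly when, for one of the two $C$-edges $e'$ adjacent to $e$, the third edge $e\triangle e'$ of the triangle is present in $G$, i.e. when $e$ lies in a $BBR$ triangle. I would argue that such an ear chord, if it exists in $G$, is forced to belong to $\opg$ and to cut off that ear, so that the unique face $f(e)$ of $e$ is then the $BBR$ triangle and has degree $1$; contrapositively, a degree-$2$ face $f(e)$ forces $e$ to lie in no $BBR$ triangle, i.e. to be contractible. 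Conversely a contractible $e$ cannot have two consecutive $C$-edges on $f(e)$ (that would produce an ear chord or contradict Lemma \ref{lem:NoBlueHole}), so $f(e)$ has a single blue edge and, being a blue-edge-carrying face, is a $BRR$ triangle. This gives the required bijection.

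For part (ii), write the $BRR$ face of a contractible edge $e=\set{p,q}$ as the triangle $\set{p,q,x}$ with red chords $\set{p,x},\set{q,x}$. By Lemma \ref{lem:contraction-union}, $\contractge$ is the $\ept$ graph $\eptg{\contractppq}$; concretely, identifying $p$ and $q$ turns the two red chords into parallel edges that merge into a single red edge $\set{p.q,x}$, collapsing the triangle $f(e)$. I would then show that $\opg$ with this triangle collapsed is a maximal Hamiltonian outerplanar subgraph of $\contractge$, so it may serve as the outerplanar graph of the contracted pair. Its weak dual is obtained from $\wdtg$ by deleting the vertex $f(e)$ and joining its two neighbours — which are distinct and both bounded faces, since $\wdtg$ is a tree and both collapsed edges are interior chords — through the merged edge; this is exactly the smoothing of the intermediate vertex $f(e)$.

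The main obstacle is the middle step: relating the purely local, $\opg$-dependent property ``$f(e)$ has degree $2$'' to the global property ``$e$ is contractible in $G$''. Without extra hypotheses an ear chord of $G$ could cross another chord and be excluded from the chosen $\opg$, leaving $f(e)$ a $BRR$ triangle while $e$ is non-contractible. This is precisely where the assumptions $(P2)$ and $(P3)$ enter: they rule out such crossings (equivalently, they force every length-$2$ chord of $C$ into every maximal outerplanar subgraph), making the degree of $f(e)$ faithfully record whether $e$ belongs to a $BBR$ triangle. Once this is secured, both the bijection and the compatibility with smoothing follow from the face bookkeeping above.
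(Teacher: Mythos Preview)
This lemma carries the citation \cite{BESZ13-TR1} and is stated in the present paper without proof; it is imported from the companion work. There is therefore no in-paper argument to compare your proposal against.

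On its own merits, your outline is sound through the easy direction and through part~(ii). The map ``intermediate face $\mapsto$ its unique blue edge'' is the natural candidate; the fact that every face of $\opg$ is a hole of $G$ (by maximality) combined with Lemma~\ref{lem:NoBlueHole} forces every blue edge to sit on a triangular face, and when $e=\{i,i+1\}$ is contractible neither $\{i-1,i+1\}$ nor $\{i,i+2\}$ lies in $G$, so the third vertex of $f(e)$ is forced away from $i-1,i+2$ and $f(e)$ is $BRR$. The smoothing description in~(ii) is likewise correct once~(i) is available.

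The real issue is exactly the one you flag, and your proposed resolution is not yet an argument. If $f(e)=\{i,i+1,j\}$ is $BRR$ with $j\notin\{i-1,i+2\}$ while $\{i-1,i+1\}\in G$, then the ear chord $\{i-1,i+1\}$ \emph{crosses} the chord $\{i,j\}\in\opg$, so maximality of $\opg$ does \emph{not} force the ear chord in; your sentence ``they force every length-$2$ chord of $C$ into every maximal outerplanar subgraph'' is precisely what has to be proved, and you have not said how $(P2)$ or $(P3)$ achieves it. A direct attempt does not obviously succeed: on $\{i-1,i,i+1,j\}$ the induced $C$-subgraph is $P_3\cup K_1$ (not $P_4$) unless $j=i-2$, so $(P2)$ is not immediately violated, and no red triangle is produced without further information about $\{i-1,j\}$. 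To complete the proof you will need an explicit argument here---most likely by working in a representation $\rep$ satisfying $(P3)$ (which exists by hypothesis) rather than purely inside $\opg$---or else by a more careful simultaneous analysis of the two faces incident to the chord $\{i,j\}$.
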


\runningtitle{Representations of $\enpt$ holes:}
Our goal in this work is to characterize the representations of $\enpt$ holes. More precisely we characterize representations of pairs $(G,C_n)$ where $C_n$ is a Hamiltonian cycle of $G$. For this purpose we define the following problem.

\begin{center} \fbox{\begin{minipage}{11.9cm}
\noindent  \prb

{\bf Input:} A pair $(G,C_n)$ where $C_n$ is a Hamiltonian cycle of $G$

{\bf Output:} A minimal representation $\rep$ of $(G,C_n)$ if such a representation exists, ``NO'' otherwise.

\end{minipage}}\\
\end{center}

For $n=3$ the only possible pair is $(C_3,C_3)$ whose unique minimal representation is by 3 identical paths consisting of one edge each.

Let $T$ be a tree with $k$ leaves and $\pi=(\pi_0,\ldots,\pi_{k-1})$ a cyclic permutation of the leaves. The \emph{tour} $(T,\pi)$ is the following set of $2k$ paths: $(T,\pi)$ contains $k$ \emph{long} paths, each of which connecting two consecutive leaves $\pi_i, \pi_{i+1 \mod k}$. $(T,\pi)$ contains $k$ \emph{short} paths, each of which connecting a leaf $\pi_i$ and its unique neighbor in $T$.

A \emph{planar embedding} of a tour is a planar embedding of the underlying tree such that any two paths of the tour do not cross each other. A tour is \emph{planar} if it has a planar embedding. Note that a tour $(T,\pi)$ is planar if and only if $\pi$ corresponds to the order in which the leaves are encountered by some DFS traversal of $T$.

Consider the following three properties
\begin{itemize}
\item{$(P1)$:} $(G,C_n)$ is not contractable.
\item{$(P2)$:} $(G,C_n)$ is $(K_4, P_4)$-free, i.e., it does not contain an induced sub-pair isomorphic to a $(K_4, P_4)$.
\item{$(P3)$:} Every red triangle of $(G,C_n)$ is a claw-clique, i.e. corresponds to a pie of $\rep$.
\end{itemize}

Note that $(P1)$ and $(P2)$ are properties of pairs and $(P3)$ is a property of representations. We say that $(P3)$ holds for a pair $(G,C)$ whenever it has a representation $\rep$ satisfying $(P3)$. It is convenient to define the following problem.

\begin{center} \fbox{\begin{minipage}{11.9cm}
\noindent  \prbpthree

{\bf Input:} A pair $(G,C_n)$ where $C_n$ is a Hamiltonian cycle of $G$ and $n \geq 4$.

{\bf Output:} A minimal representation $\rep$ of $(G,C_n)$ that satisfies $(P3)$ if such a representation exists, ``NO '' otherwise.

\end{minipage}}\\
\end{center}

In this work we extend the following results of \cite{BESZ13-ENPT1-DAM}.
\begin{theorem}\label{thm:BuildPlanarTourCorrect}
\cite{BESZ13-ENPT1-DAM} Instances of $\prbpthree$ satisfying properties $(P1),(P2)$ can be solved in polynomial time. YES instances have a unique solution, and whenever $n \geq 5$ this solution is a planar tour.
\end{theorem}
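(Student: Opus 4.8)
The plan is to establish the three assertions — polynomial solvability, uniqueness of the YES-solution, and the planar-tour form for $n \geq 5$ — in one structural sweep: first pin down the shape of \emph{any} representation $\rep$ of $(G,C_n)$ satisfying $(P1),(P2),(P3)$, and then read an algorithm off that structure.

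First I would extract the combinatorial skeleton forced by the hypotheses. Since $C_n$ is the blue cycle $\enptgp$ and $n \geq 4$, Lemma~\ref{lem:NoBlueHole} forbids $C_n$ from being a hole of $G$, so $G$ contains chords of $C_n$, and each such chord is red. Property $(P1)$ says every blue edge lies in a $BBR$ triangle; because the only blue edges are the edges of $C_n$, the second blue edge of that triangle is again a cycle edge, so the triangle is a consecutive triple $\set{c_{i-1},c_i,c_{i+1}}$ whose red edge is a length-$2$ chord. I would then use $(P2)$ to control how these chords interact: a cycle vertex incident to a chord that is itself the blue-blue middle of another $BBR$ triangle, or two crossing chords, produces an induced $\kp$ or a blue hole. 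This forces the vertices of $C_n$ to alternate between \emph{apex} vertices (chord-free, lying in exactly one $BBR$ triangle) and \emph{base} vertices (the chord endpoints), which already makes $n$ even, say $n=2k$, and earmarks the $k$ apexes as the future short paths and the $k$ bases as the future long paths.

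Next I would translate this skeleton into a host tree via the weak dual tree $\wdtg$, which is well defined under $(P2),(P3)$. By Lemma~\ref{lem:contraction-wdt}(i) the intermediate (degree-$2$) vertices of $\wdtg$ biject with contractible edges; since $(P1)$ guarantees none, $\wdtg$ has only leaves and junctions. Reading a dual vertex's degree as the number of red edges on its face, the leaves are exactly the $BBR$ ``ears'' and the junctions are the red holes, which are represented by pies (Theorem~\ref{thm:golumbicpie}) with every size-$3$ red hole already a claw-clique by $(P3)$. Thus the faces of $\opg$ are $k$ $BBR$ ears together with pies glued along the base chords, and I would show that attaching a claw (one pie-center) for every junction of $\wdtg$ and a pendant edge for every ear reconstructs a host tree $T$ whose leaves correspond to the apexes. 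The $2k$ paths are then precisely the long paths (one per base, joining two consecutive leaves) and the short paths (one per apex, the edge at a leaf) of a tour $(T,\pi)$. Planarity for $n \geq 5$ follows from the outerplanarity of $\opg$: the chosen outerplanar embedding lists the leaves of $T$ in a DFS order, which is exactly the criterion characterizing planar tours.

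Finally I would assemble the algorithm and uniqueness from this analysis. The algorithm builds a maximal Hamiltonian outerplanar subgraph $\opg$ by greedily adding chords, computes its weak dual tree, reads off the candidate tour $(T,\pi)$ as above, and then verifies in polynomial time that $\eptgp=G$ and $\enptgp=C_n$; it returns this representation if the check succeeds and ``NO'' otherwise. Minimality holds because the tour is not contractible and admits no tail removal preserving the pair, so no minifying operation keeps it in its equivalence class. Uniqueness follows because every ingredient was forced: the pie of a hole of size $\geq 4$ is its only $\ept$ representation (Theorem~\ref{thm:golumbicpie}), the apex/base labelling and chord pattern are determined by $G$ and $C_n$, and the gluing of pies and ears is dictated by $\wdtg$; hence any two solutions coincide up to isomorphism. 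The hard part will be the structural step — using $(P2)$ to rule out every alternative face configuration (longer chords, adjacent ears sharing a base, pies overlapping in more than one chord) and to justify the apex/base alternation — together with treating the degenerate small cases separately ($n=4$, and odd $n \geq 5$, which the argument should expose as NO-instances since a tour always has an even number of paths).
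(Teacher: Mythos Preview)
This theorem is not proved in the present paper: it is imported verbatim from \cite{BESZ13-TR1} (as is the companion Theorem~\ref{theo:P123}), so there is no in-paper proof to compare your proposal against. What the paper does give you is the \emph{statement} of the structural characterization you are aiming for, namely Theorem~\ref{theo:P123}(iii): under $(P1),(P2),(P3)$ with $n>4$, $G$ is Hamiltonian outerplanar and every face adjacent to the unbounded face is a $BBR$ triangle. Your apex/base alternation, the evenness of $n$, and the identification of short/long paths with ears/bases are all reformulations of exactly this, so your outline is pointed at the right target.

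Two cautions about the sketch itself. First, the step ``$(P2)$ forbids crossing length-$2$ chords and adjacent ears sharing a base'' is doing most of the work, and the mechanism is not quite what you wrote: two consecutive ears $\{i-1,i,i+1\}$ and $\{i,i+1,i+2\}$ do \emph{not} by themselves create an induced $\kp$ (the edge $\{i-1,i+2\}$ need not be present); what actually kills this configuration is that the two chords $\{i-1,i+1\}$ and $\{i,i+2\}$ cross in the outer cycle, contradicting the outerplanarity you must establish. So the logical order should be: use $(P3)$ and the pie theorem to force every red hole to sit at a single junction, deduce that $\opg$ exists and equals $G$ (this is where $(P2)$ enters, ruling out chords that would force a $K_4$ over four consecutive vertices), and only then read off the ear structure. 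Second, your uniqueness argument leans on ``the weak dual tree is determined by $G$ and $C_n$'', but in general $\opg$ and hence $\wdtg$ are not unique; what makes it unique here is precisely that $G$ is already outerplanar, so $\opg=G$ --- you should make that explicit rather than treat it as automatic.
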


\begin{theorem}\label{theo:P123}
\cite{BESZ13-ENPT1-DAM} If $n>4$ the following statements are equivalent:\\
(i) $(G,C_n)$ satisfies assumptions $(P1-3)$.\\
(ii) $(G,C_n)$ has a unique minimal representation satisfying $(P3)$ which is a planar tour of a weak dual tree of $G$.\\
(iii) $G$ is Hamiltonian outerplanar and every face adjacent to the unbounded face $F$ is a triangle having two edges in common with $F$, (i.e. a BBR triangle).
\end{theorem}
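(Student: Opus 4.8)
The plan is to establish the equivalence as a cycle of implications $(i) \Rightarrow (ii) \Rightarrow (iii) \Rightarrow (i)$, using Theorem~\ref{thm:BuildPlanarTourCorrect} as the main engine and devoting the remaining two steps to the structural bookkeeping that ties together planar tours, weak dual trees, and outerplanar faces. Throughout I would rely on two conventions that make the three statements speak the same language: in the pair $(G,C_n)$ the blue (i.e.\ $\enpt$) edges are exactly the edges of the Hamiltonian cycle $C_n$, which in any Hamiltonian outerplanar drawing are precisely the boundary edges of the unbounded face $F$, while the red edges are the interior chords. Consequently the degree of a face-vertex in a weak dual tree counts its red edges, and a bounded face is adjacent to $F$ if and only if it carries a blue edge. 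Two running dictionary entries will be: a leaf of a weak dual tree is a face with one chord, and a junction is an interior face whose boundary is all red.

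For $(i)\Rightarrow(ii)$ I would note that a pair satisfying $(P1),(P2),(P3)$ is in particular a YES instance of $\prbpthree$ satisfying $(P1),(P2)$, so Theorem~\ref{thm:BuildPlanarTourCorrect} already gives a \emph{unique minimal} representation satisfying $(P3)$, and for $n\geq 5$ this representation is a planar tour $(T,\pi)$. The only extra work is to identify the host tree $T$ with a weak dual tree of $G$. Here I would invoke Lemma~\ref{lem:contraction-wdt}: since $(P1)$ says $(G,C_n)$ is non-contractible, the bijection of that lemma forces $\wdt(G,C_n)$ to have \emph{no} intermediate vertices, so it consists solely of leaves and junctions. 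I would then match, through the planar embedding of the tour, the leaves of $T$ to the ear faces of $G$ and the internal (junction) vertices of $T$ to the pies, obtaining $T\cong\wdt(G,C_n)$.

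For $(ii)\Rightarrow(iii)$ I would start from the planar tour $(T,\pi)$ and read off the structure of $G=\eptgp$ directly. A planar embedding of the tour orders the $2k$ paths cyclically around the boundary, yielding a planar drawing of $G$ with all vertices on the outer face; hence $G$ is Hamiltonian outerplanar with outer cycle $C_n$. I would then classify the bounded faces: each long path and the short path sharing its terminal leaf-edge are non-splitting and contribute a blue edge, and the two blue edges meeting at a common short path, together with the red edge between the two flanking long paths, bound a triangle with exactly two blue edges, i.e.\ a $BBR$ triangle. These are exactly the faces carrying blue edges, hence exactly the faces adjacent to $F$, which is the content of $(iii)$; the remaining interior faces carry only red edges and are the red holes realized by the pies at the internal vertices of $T$ (Theorem~\ref{thm:golumbicpie}, Lemma~\ref{lem:NoBlueHole}).

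For $(iii)\Rightarrow(i)$ I would verify the three properties from the outerplanar-plus-$BBR$ structure: since every face adjacent to $F$ is a $BBR$ triangle, every blue edge lies in a $BBR$ triangle, so the pair is non-contractible, giving $(P1)$; outerplanarity forbids a $K_4$ subgraph altogether, hence no $\kp$ sub-pair, giving $(P2)$; and for $(P3)$ I would exhibit a representation by building the unique weak dual tree $W$ of $G$ (its leaves are the $BBR$ ears and its junctions are the all-red interior faces, so it has no intermediate vertex) and taking the planar tour of $W$, which by the face analysis above reproduces $(G,C_n)$ and realizes every red triangle as a claw-clique. The step I expect to be the main obstacle is precisely this structural dictionary shared by $(i)\Rightarrow(ii)$ and $(ii)\Rightarrow(iii)$: proving rigorously that the host tree of the planar tour coincides with the weak dual tree and that the non-splitting/splitting pattern of the $2k$ tour paths matches exactly the $BBR$-ear/interior-pie decomposition of the faces. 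A subtlety I would address explicitly through the face count is a parity constraint — a tour on a $k$-leaf tree has $2k$ paths, so $n$ is forced to be even (each $BBR$ ear accounting for two boundary blue edges); thus for odd $n>4$ all three statements fail and the equivalence holds vacuously, and I would make the face-counting argument carry this conclusion rather than leave it implicit.
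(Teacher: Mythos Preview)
This theorem is not proved in the present paper: it is quoted from \cite{BESZ13-TR1} (note the citation immediately after the theorem label) and appears in Section~\ref{subsec:eptn-OurPreviousResults} among the imported preliminaries. Consequently there is no in-paper proof to compare your proposal against; the paper simply treats the statement as a black box to be used in Section~\ref{sec:uncontraction}.

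That said, your outline is a reasonable reconstruction of how such a theorem would be established, and the cycle $(i)\Rightarrow(ii)\Rightarrow(iii)\Rightarrow(i)$ with Theorem~\ref{thm:BuildPlanarTourCorrect} driving the first implication is the natural strategy. Two remarks on the content itself. First, your parity observation is on target: a tour of a $k$-leaf tree has exactly $2k$ paths, and condition $(iii)$ forces the $n$ boundary edges to be partitioned into pairs by the $BBR$ ears, so for odd $n>4$ all three statements fail; it is good that you flag this explicitly rather than leave it to the reader. Second, the step you correctly identify as the crux, namely matching the host tree of the planar tour with the weak dual tree of $G$, is genuinely where the work lies; invoking Lemma~\ref{lem:contraction-wdt} to conclude that $\wdt(G,C_n)$ has no intermediate vertices under $(P1)$ is the right lever, but the actual isomorphism between $T$ and $\wdt(G,C_n)$ (leaves $\leftrightarrow$ $BBR$ ears, junctions $\leftrightarrow$ pie centers, with matching adjacencies) still needs to be written out, and that argument belongs to \cite{BESZ13-TR1} rather than to the present paper.
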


For $n=4$ there are two possible pairs, namely $\kp$ and $(K_4 - e, C_4)$, each of which satisfies $(P1),(P2)$ and has a unique minimal representation. Therefore, in this work we assume $n \geq 5$.

The opposite of a sequence of union operations that create one path is termed \emph{breaking apart}. Namely, breaking apart a path $P$ is to replace it with paths $P_1, \ldots ,P_k$ such that $\cup_{i=1}^k P_i=P$, $\forall 1 \leq i < k, P_i \cap P_{i+1} \neq \emptyset$, and $P_i \subseteq P_j$ if and only if $i=j$.
A \emph{broken tour} is a representation obtained from a tour by subdividing edges and breaking apart long paths of the tour. Clearly, if the tour is planar the broken tour is also planar, i.e. has a planar embedding.

\section{Pairs $(G,C)$ Satisfying $(P2)$ and $(P3)$}\label{sec:uncontraction}
In this section our goal is to get rid of the assumption that $(P1)$ holds, so that to characterize representation of pairs that satisfy only $(P2)$ and $(P3)$. Recall that property $(P1)$ states that the pair is uncontractable. We therefore, consider contractable pairs and examine the effects of the contraction on the representation. In this way we reduce the problem to the base case of uncontractable pairs where all assumptions hold for which the representations are already known. In Section \ref{subsec:ContactionOfPairs} we investigate the basic properties of the contraction operation. In Section \ref{subsec:c5c6} we investigate the case of small cycles (i.e., those of size at most 6) that require special handling due to the nature of the base case (recall that Theorems \ref{theo:P123} and \ref{thm:BuildPlanarTourCorrect} do not hold for a cycle of length 4). Finally, in Section \ref{subsec:P23Representations} we develop an algorithm for the general case, i.e. cycles bigger than 6.
\subsection{Contraction of Pairs}\label{subsec:ContactionOfPairs}
In this section we investigate properties of the contraction operation in our goal to extend Theorem \ref{theo:P123} to contractable pairs. More specifically, we characterize representations of pairs $(G,C)$ satisfying $(P2),(P3)$. We show that (i) the contraction operation preserves $\enpt$ edges, (ii) the order of contractions is irrelevant and (iii) the contraction operation preserves properties $(P2)$ and $(P3)$.

\begin{lemma}\label{lem:contraction-pairs}
Let $\rep$ be a representation for the pair $(G,G')$, and let $e=\set{p,q} \in E(G')$. If $\contract{(G,G')}{e}$ is defined then
$\contractppq$ is a representation for the pair $\contract{(G,G')}{e}$.
\end{lemma}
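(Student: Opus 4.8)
The plan is to verify the two coordinates of the pair $\contract{(G,G')}{e} = (\contractge, \contract{G'}{e})$ separately, using the natural vertex identification that collapses $p$ and $q$ into the single vertex $p.q$ representing the path $P_p \cup P_q$ (which is genuinely a path, since $e \in E(G')$ means $P_p \sim P_q$). The $\ept$-coordinate needs no new work: because $e \in E(G')$, Lemma~\ref{lem:contraction-union} already gives $\eptg{\contractppq} = \contractge$. So the whole burden is to prove the $\enpt$-coordinate, namely $\enptg{\contractppq} = \contract{G'}{e}$.

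For this I would first dispose of the easy adjacencies. For $r,s \notin \set{p,q}$ the paths $P_r,P_s$ are literally unchanged by the union operation, so $\set{r,s}$ is an $\enpt$ edge of $\contractppq$ exactly when it is one of $\rep$, and contracting $e$ does not touch these edges in $\contract{G'}{e}$ either; hence the two graphs agree off the vertex $p.q$. Everything therefore reduces to the neighbourhood of $p.q$: for each $r \notin \set{p,q}$ I must show
\[
P_r \sim P_p \cup P_q \iff P_r \sim P_p \ \text{or}\ P_r \sim P_q,
\]
because the right-hand side is precisely the condition ``$\set{r,p}\in E(G')$ or $\set{r,q}\in E(G')$'' defining adjacency of $r$ to $p.q$ in $\contract{G'}{e}$.

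The forward implication uses only Observation~\ref{obs:splitofunion}: if $P_r \sim P_p\cup P_q$ then $\split(P_p,P_r)\cup\split(P_q,P_r)=\split(P_p\cup P_q,P_r)=\emptyset$, and a common edge of $P_r$ and $P_p\cup P_q$ lies in $P_p$ or in $P_q$, so $P_r$ is non-splitting with whichever of the two it meets. For the backward implication, assume without loss of generality $P_r \sim P_p$; this is exactly where the hypothesis that $\contract{(G,G')}{e}$ is \emph{defined} enters. Applying that hypothesis to the $G'$-edge $\set{p,r}$, which is incident to $e$, the third edge $\set{q,r}=e\,\triangle\,\set{p,r}$ is not an edge of $G$, i.e. $P_q \parallel P_r$. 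If I can show $\split(P_q,P_r)=\emptyset$, then Observation~\ref{obs:splitofunion} gives $\split(P_p\cup P_q,P_r)=\split(P_p,P_r)\cup\split(P_q,P_r)=\emptyset$, and since $P_r$ shares an edge with $P_p\subseteq P_p\cup P_q$ it meets $P_p\cup P_q$; hence $P_r \sim P_p\cup P_q$.

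The main obstacle, and the only genuinely geometric step, is thus the claim that edge-disjoint paths $P_q \parallel P_r$ with $P_p \sim P_q$ and $P_p \sim P_r$ cannot have a split vertex. I expect to argue by contradiction on the host tree: a split vertex $v$ of $P_q\cup P_r$ is, by edge-disjointness and degree counting, internal to one of the two paths, say $P_q$. Then both edges of the path $Q := P_p\cup P_q$ at $v$ belong to $P_q$, while $P_r$ uses at $v$ a further edge $c\notin Q$. But $P_r$ also shares an edge with $P_p\subseteq Q$, and as $P_r$ is a connected path in a tree, the portion of $P_r$ joining $c$ to that shared edge must run along $Q$ through $v$, forcing $P_r$ to contain one of the two $Q$-edges at $v$ — each of which lies in $P_q$. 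This contradicts $P_q \parallel P_r$, and the symmetric case ($v$ internal to $P_r$) is identical after swapping $q$ and $r$. Care is needed only here, in tracking which edges at $v$ belong to $Q$ versus to $P_q$; the rest is bookkeeping, and it is worth noting that the ``defined'' hypothesis is used precisely to supply the edge-disjointness $P_q\parallel P_r$ that this geometric step consumes.
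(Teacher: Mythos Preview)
Your proposal is correct and follows the same overall decomposition as the paper's own proof: invoke Lemma~\ref{lem:contraction-union} for the $\ept$ coordinate, then verify the $\enpt$ coordinate by reducing to the single biconditional $P_r \sim P_p\cup P_q \iff P_r\sim P_p \text{ or } P_r\sim P_q$ at the merged vertex.

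The one substantive difference is that you isolate and explicitly prove the geometric claim that if $P_p\sim P_q$, $P_p\sim P_r$, and $P_q\parallel P_r$, then $\split(P_q,P_r)=\emptyset$. The paper's version of the backward implication dispatches this step more quickly: it writes that $\split(P_q,P_{q'})\neq\emptyset$ ``implies $P_q\nsim P_{q'}$'' and then immediately concludes $\set{q,q'}\in E(G)$, contradicting contractibility. On a literal reading of the definitions---where two edge-disjoint paths may still have a junction in their union---that inference is exactly the tree-geometric fact you spell out. So your argument is essentially the paper's, with that implicit step made explicit; what you gain is airtightness at the one point where the paper is terse, at the cost of a short extra paragraph.
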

\begin{proof}
By Lemma \ref{lem:contraction-union} $\contractppq$ is an $\ept$ representation for $\contractge$. It remains to show that it is an $\enpt$ representation for $\contract{G'}{e}$, i.e. that for any two paths $P_{p'},P_{q'} \in \contractppq$, the edge $e'=\set{p',q'}$ is in $E(\contract{G'}{e}) \iff P_{p'} \sim P_{q'}$. Let $P_s = P_p \cup P_q$ and $s$ be the vertex obtained by the contraction. We assume first that $P_s \notin \set {P_{p'},P_{q'}}$. Then $e' \in E(\contract{G'}{e}) \iff e' \in E(G') \iff P_{p'} \sim P_{q'}$ as required. Now we assume without loss of generality that $P_{p'}=P_s$ and we recall that $e=\set{p,q} \in E(G')$ is the contracted edge. We have to show that $e'=\set{s,q'} \in E(\contract{G'}{e}) \iff P_s \sim P_{q'}$. We observe that
\begin{eqnarray}
\set{s,q'} \in E(\contract{G'}{e}) & \iff & \set{p,q'} \in E(G') \vee \set{q,q'} \in E(G') \nonumber\\
& \iff &  P_p \sim P_{q'} \vee P_q \sim P_{q'}\label{eqn:one}
\end{eqnarray}
and
\begin{eqnarray}
P_s \sim P_{q'} & \iff & (P_p \cup P_q) \cap P_{q'} \neq \emptyset \wedge \split(P_p \cup P_q, P_{q'})=\emptyset \nonumber\\
& \iff & (P_p \cap P_{q'} \neq \emptyset \vee P_q \cap P_{q'} \neq \emptyset) \wedge \split(P_p, P_{q'})=\emptyset \wedge \split(P_q, P_{q'})=\emptyset.\label{eqn:two}
\end{eqnarray}
Clearly, (\ref{eqn:two}) implies (\ref{eqn:one}). To conclude the proof, assume that ($\ref{eqn:one}$) holds. Then $P_p \cap P_{q'} \neq \emptyset \vee P_q \cap P_{q'} \neq \emptyset$. Now assume, by way of contradiction, that (\ref{eqn:two}) does not hold. Then $\split(P_p,P_{q'}) \neq \emptyset \vee \split(P_q,P_{q'}) \neq \emptyset$ implying $P_p \nsim P_{q'} \vee P_q \nsim P_{q'}$. Combining with (\ref{eqn:one}) this implies that exactly one of $P_p \sim P_{q'}$ and $P_q \sim P_{q'}$ holds. Therefore, without loss of generality $P_p \sim P_{q'}, P_q \nsim P_{q'}$. Then $e'=\set{p,q'} \in E(G')$ and $e \triangle e' \in E(G)$, therefore $\contract{(G,G')}{e}$ is undefined, thus constituting a contradiction to the assumption of the lemma.
\end{proof}

Let $\bar{E}=\set{e_1, e_2, \ldots, e_k} \subseteq E(G')$. For every $k>1$ we define $\contract{(G,G')}{e_1,\ldots,e_k} \defined \contract{\contract{(G,G')}{e_1,\ldots,e_{k-1}}}{e_k}$  provided that both contractions on the right hand side are defined, otherwise it is undefined. The following Lemma follows from Lemma \ref{lem:contraction-pairs} and states that the order of contraction of the edges is irrelevant.

\begin{lemma}\label{lem:orderIrrelevantInContraction}
Let $(G,G')$ be a pair, $\bar E= \set{e_1, e_2, \ldots, e_k} \subseteq E(G)$, and $\pi$ a permutation of the integers $\set{1,\ldots, k}$. Then $\contract{(G,G')}{e_1,\ldots,e_k}$ is defined if and only if $\contract{(G,G')}{e_{\pi(1)},\ldots,e_{\pi(k)}}$ is defined. Moreover, whenever they are defined, we have $\contract{(G,G')}{e_1,\ldots,e_k}=\contract{(G,G')}{e_{\pi(1)},\ldots,e_{\pi(k)}}$.
\end{lemma}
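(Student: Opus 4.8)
The plan is to split the claim into two independent parts: equality of the two outcomes (assuming both are defined) and invariance of definedness under $\pi$. The first part is essentially free. By the definition of pair contraction, $\contract{(G,G')}{e} = (\contract{G}{e}, \contract{G'}{e})$ whenever it is defined, so by a trivial induction $\contract{(G,G')}{e_1,\ldots,e_k} = (\contract{G}{e_1,\ldots,e_k}, \contract{G'}{e_1,\ldots,e_k})$ whenever the left-hand side is defined. Since ordinary edge contraction of a graph is order-independent — as already recorded in Section~\ref{subsec:prelim-definitions}, $\contract{G}{e_1,\ldots,e_k}=\contract{G}{\bar E}$ — both componentwise contractions depend only on the set $\bar E$. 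Hence, once I know that both orderings are defined, they necessarily produce the same pair $(\contract{G}{\bar E},\contract{G'}{\bar E})$. (I also note that the meaningful case is $\bar E \subseteq E(G')$; if some $e_i \notin E(G')$ then no ordering is ever defined and the statement holds vacuously.)

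All the content therefore lies in showing that \emph{definedness} does not depend on the order. First I would reduce this to a single adjacent transposition. Any permutation $\pi$ is a product of adjacent transpositions, so it suffices to prove that exchanging two consecutive edges in a contraction sequence preserves definedness as a biconditional; invariance under $\pi$ then follows by composing these biconditionals along a chain of orderings joining the identity to $\pi$. For a single adjacent swap, the common prefix $e_1,\ldots,e_{i-1}$ is identical in both orderings: if it is undefined, both full sequences are undefined; if it is defined it yields a common pair $(H,H')$, and the suffix $e_{i+2},\ldots,e_k$ is applied afterwards to the \emph{same} pair in both orderings, because (by the equality part above) the two-edge middle contraction yields the same result in either order whenever it is defined. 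Thus everything collapses to the two-edge statement: for a pair $(H,H')$ and $f,g \in E(H')$, $\contract{(H,H')}{f,g}$ is defined if and only if $\contract{(H,H')}{g,f}$ is.

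To prove the two-edge statement I would rewrite contractibility in triangle form: from the definition, $f=\set{a,b}\in E(H')$ fails to be contractible in $(H,H')$ exactly when some vertex $r$ gives a triangle $\set{a,b,r}$ of $H$ in which, besides the blue edge $f$, at least one of $\set{a,r},\set{b,r}$ is blue — i.e.\ a $BBR$ or all-blue triangle on $f$ (the local form of the remark that a pair is non-contractible on $e$ iff $e$ lies in such a triangle). With this reformulation I would do a short case analysis on whether $f$ and $g$ share a vertex. If they share a vertex, say $f=\set{a,b}, g=\set{b,d}$, the only obstructions are triangles whose apex is external to $\set{a,b,d}$, which are untouched by contracting the other edge, and the triangle $\set{a,b,d}$ itself with chord $\set{a,d}\in E(H)$, which blocks $f$ and $g$ symmetrically; so both orders behave identically. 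If $f=\set{a,b}$ and $g=\set{c,d}$ are disjoint, contracting $f$ merges $a,b$ into a single vertex, and the only way this can newly block the image of $g$ is through a triangle with that merged apex, which requires (up to symmetry) a blue edge $\set{a,c}\in E(H')$ together with a present edge $\set{b,d}\in E(H)$. The main point — and the step I expect to be the real work — is to check that this very pair of edges symmetrically blocks the image of $f$ after contracting $g$: the blue edge $\set{a,c}$ makes the edge to the merged vertex $\set{c,d}$ blue and the present edge $\set{b,d}$ makes the other edge to it present, so one again obtains a blocking triangle for $f$, and both orders are undefined. Tracking these ``merged-apex'' cross terms is the only genuinely delicate part; every other sub-case is a triangle that survives the other contraction unchanged and is therefore trivially symmetric.

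Finally, I would remark that for the special (but principal) case of pairs that actually admit an $\enpt$ representation there is a shorter route consistent with the surrounding text: by Lemma~\ref{lem:contraction-pairs} each defined contraction of an $\enpt$ edge corresponds to replacing two non-splitting paths by their union, and since the union of a family of paths is associative and commutative, the representation obtained after contracting all of $\bar E$ — and hence the pair it represents — is manifestly independent of the order. The combinatorial argument above is what is needed to cover arbitrary pairs $(G,G')$ with no representation assumed.
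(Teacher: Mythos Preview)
Your approach is correct and takes a genuinely different route from the paper. The paper's proof works through the representation: assuming $\contract{(G,G')}{e_1,\ldots,e_k}$ is defined, it invokes Lemma~\ref{lem:contraction-pairs} to pass to a representation of $(G,G')$, replaces each contraction by the corresponding path-union, and then uses commutativity and associativity of union together with Observation~\ref{obs:splitofunion} (union preserves split vertices) to argue that the reordered unions never introduce splits, so the reordered contractions are all defined. Your argument, by contrast, stays entirely at the level of the pair: you reduce to an adjacent transposition, then to a two-edge biconditional, and settle that by a direct case analysis on $BBR$/blue-triangle obstructions. What this buys you is generality --- your proof does not assume $(G,G')$ admits a representation, which the paper's argument tacitly does --- at the price of a longer case check; you correctly flag this trade-off in your closing paragraph. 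One small caution: in the shared-vertex case your claim that external-apex obstructions are ``untouched by contracting the other edge'' is not quite accurate, since contracting $g=\{b,d\}$ turns $f$ into $\{a,bd\}$ and obstructions to the latter can now involve edges through $d$ as well as through $b$; these are cross terms of exactly the same flavour as in your disjoint case and require the same symmetric verification, which goes through along the lines you already sketch there.
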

\begin{proof}
Assume that $\contract{(G,G')}{e_1,\ldots,e_k}$ is defined. By $k-1$ successive applications of Lemma \ref{lem:contraction-pairs} we conclude that a representation of $\contract{(G,G')}{e_1,\ldots,e_k}$ can be obtained from a representation of $(G,G')$ by applying a sequence of $k-1$ union operations. The result of $k-1$ union operations is a set of paths. As the union operation is commutative and associative, this set of paths is independent of the order of the union operations. On the other hand as union preserves split vertices, and the result does not contain split vertices, there are no split vertices at any given step of new sequence of union operations. We conclude that $\contract{(G,G')}{e_{\pi(1)},\ldots,e_{\pi(k)}}$ is defined. The other direction holds by symmetry. Whenever both $\contract{(G,G')}{e_1,\ldots,e_k}$ and $\contract{(G,G')}{e_{\pi(1)},\ldots,e_{\pi(k)}}$ are defined we have
$\contract{(G,G')}{e_1,\ldots,e_k}$=$(\contract{G}{e_1,\ldots,e_k},\contract{G'}{e_1,\ldots,e_k})$=
$(\contract{G}{e_\pi(1),\ldots,e_\pi(k)},\contract{G'}{e_\pi(1),\ldots,e_\pi(k)})=\contract{(G,G')}{e_{\pi(1)},\ldots,e_{\pi(k)}}$.
\end{proof}

Based on this result, we denote the contracted pair as $\contract{(G,G')}{\bar{E}}$ and say that $\bar{E}$ is \emph{contractable}.

We write $\repn{1} \gtrsim_U \repn{2}$, or equivalently $\repn{2} \lesssim_U \repn{1}$ the fact that the representation $\repn{2}$ can be obtained from $\repn{1}$ by applying zero or more union operations. Similarly, we write $(G_1,G'_1) \gtrsim_C (G_2, G'_2)$ if the pair $(G_2, G'_2)$ can be obtained from $(G_1, G'_1)$ by applying zero or more (valid) edge contraction operations. 
By Lemma \ref{lem:contraction-pairs}, $\lesssim_U$ is homomorphic to $\lesssim_C$.
Following the above definitions, a non-contractable pair of graphs is said to be \emph{contraction-minimal}, because it is minimal in the partial order $\lesssim_C$.

We proceed by showing that the contraction operation preserves assumptions $(P2),(P3)$.

\begin{lemma}\label{lem:BBRTopDoesNotComeFromContraction}
Let $\set{p,q,r}$ be a $BBR$ triangle of $\contract{(G,G')}{e}$ where $\set{p,r}$ is the red edge. Then $q \in V(G)$, i.e. $q$ is not the vertex obtained by the contraction.
\end{lemma}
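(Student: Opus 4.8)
The plan is to argue by contradiction, exploiting that the pairs treated in this section are Hamiltonian, i.e.\ $G'$ is a Hamiltonian cycle $C_n$ of $G$; this is used crucially. Contracting the blue edge $e=\set{a,b}\in E(G')$ produces a single new vertex $s=a.b$, and every other vertex of $\contract{(G,G')}{e}$ lies in $V(G)$. Hence $q\in V(G)$ is equivalent to $q\neq s$, and I would assume for contradiction that $q=s$.

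First I would locate the two blue neighbours of $s$ using the cycle structure. The two blue edges of the $BBR$ triangle are $\set{p,s}$ and $\set{s,r}$, so they are edges of $\contract{G'}{e}$, which is again a Hamiltonian cycle, namely $C_{n-1}$. In that cycle $s$ has exactly two neighbours, the two cycle-neighbours of the contracted path $a\!-\!b$: writing $a'$ for the neighbour of $a$ in $C_n$ other than $b$ and $b'$ for the neighbour of $b$ other than $a$, these are $a'$ and $b'$. Therefore $\set{p,r}=\set{a',b'}$, and since $p,r\neq s$ the red edge of the triangle gives $\set{a',b'}\in E(G)\setminus E(G')$.

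Next I would exhibit a blue hole of length four. Consider the $4$-cycle $a'\!-\!a\!-\!b\!-\!b'\!-\!a'$ of $G$: its edges $\set{a',a}$, $\set{a,b}=e$ and $\set{b,b'}$ all lie on $C_n$ and are therefore blue, while $\set{a',b'}$ is the red edge found above. Its only possible chords are $\set{a',b}=e\triangle\set{a',a}$ and $\set{a,b'}=e\triangle\set{b,b'}$, where $\set{a',a}$ and $\set{b,b'}$ are exactly the two cycle-edges incident to $e$. Since $\contract{(G,G')}{e}$ is defined, $e$ is contractible, so $e\triangle e'\notin E(G)$ for every edge $e'$ of $G'$ incident to $e$; applied to $\set{a',a}$ and $\set{b,b'}$ this discards both candidate chords. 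Hence the $4$-cycle is chordless, i.e.\ a hole of length $4$ carrying three blue edges, contradicting Lemma~\ref{lem:NoBlueHole}. This contradiction gives $q\neq s$, as desired.

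I expect the delicate point to be the step that forces the two blue neighbours of $s$ to enter through \emph{different} endpoints of $e$, which is what yields the four distinct vertices $a',a,b,b'$ of the hole. This is exactly where the degree-two constraint coming from Hamiltonicity is indispensable: if a vertex of $G'$ were allowed higher degree, both blue edges of the triangle could reach $s$ through a single endpoint of $e$, and then $q=s$ can genuinely occur, so the statement would fail. Beyond this, the only routine care is to invoke $n\geq 5$ so that $C_{n-1}$ has at least four vertices and $a,b,a',b'$ are pairwise distinct, ensuring the $4$-cycle is a genuine hole rather than a degenerate shorter cycle.
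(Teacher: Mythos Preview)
Your proof is correct and takes essentially the same approach as the paper: assume $q$ is the contracted vertex, exhibit the $4$-hole $p$--$q'$--$q''$--$r$--$p$ (in your notation $a'$--$a$--$b$--$b'$--$a'$) with three blue edges, rule out both chords via contractibility, and invoke Lemma~\ref{lem:NoBlueHole}. Your explicit use of the cycle structure to force $p$ and $r$ through \emph{different} endpoints of $e$ is precisely what underlies the paper's bare ``without loss of generality'' at that step, and you are right that this is where the argument actually needs the Hamiltonicity of $G'$.
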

\begin{proof}
Assume, by contradiction that $e=\set{q',q''}$ and $q$ is the vertex obtained by the contraction of $e$. Assume without loss of generality that $\set{p,q'}$ and $\set{q'',r}$ are edges of $G'$. Then both $\set{p,q''}$ and $\set{r,q'}$ are non-edges of $G$, because otherwise $e$ is not contractable. Then $\set{p,q',q'',r}$ is a hole of size $4$ with blues edges, a contradiction.
\end{proof}

\begin{lemma}\label{lem:ContractionPreservesP2P3}
(i) If $(P2)$ holds for $(G,G')$ then $(P2)$ holds for $\contract{(G,G')}{e}$.\\
(ii) If $(P3)$ holds for $(G,G')$ then $(P3)$ holds for $\contract{(G,G')}{e}$.\\
\end{lemma}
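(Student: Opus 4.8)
The plan is to prove both parts by fixing a representation $\rep$ of $(G,G')$ and working inside the representation $\contractppq$ of $\contract{(G,G')}{e}$ furnished by Lemma~\ref{lem:contraction-pairs}; write $s=p.q$ for the merged vertex and recall that $P_s=P_p\cup P_q$ is a path because $P_p\sim P_q$. In both parts, if the offending configuration (a $\kp$ for (i), a non-claw-clique red triangle for (ii)) avoids $s$, then it is carried by unchanged paths and is already an induced sub-pair of $(G,G')$, immediately contradicting the hypothesis. So the real work is the case where $s$ participates, and the two recurring tools are Observation~\ref{obs:splitofunion}, giving $\split(P_s,P_w)=\split(P_p,P_w)\cup\split(P_q,P_w)$ for every $w$, and the defining property of contractibility, which here says that $P_w\sim P_p$ implies $P_w\parallel P_q$ (and symmetrically) for every path $P_w$.

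For part (i), suppose $\contract{(G,G')}{e}$ contains an induced $\kp$ on a set $S\ni s$. The four paths pairwise intersect, so by Theorem~\ref{thm:golumbiccliques} the maximal clique of $\eptg{\contractppq}$ containing them is an edge-clique or a claw-clique; it cannot be a claw-clique, because there two paths are non-splitting exactly when they use the same pair of claw edges, which makes the induced $\enpt$ graph a disjoint union of cliques, whereas a $\kp$ needs this graph to be $P_4$. Hence the four paths share a common edge $e^\dagger$, and $e^\dagger\in P_s=P_p\cup P_q$. Now I would pin down a single endpoint $c\in\{p,q\}$ to un-contract to: each blue neighbor $w$ of $s$ satisfies $\split(P_p,P_w)=\split(P_q,P_w)=\emptyset$, so by contractibility $P_w$ meets exactly one of $P_p,P_q$; since $e^\dagger\in P_w$, that meeting forces $e^\dagger$ into the corresponding side. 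Two blue neighbors meeting opposite sides would place $e^\dagger$ outside $P_p\cup P_q=P_s$, which is impossible, so all blue neighbors of $s$ designate the same side $c$, and $e^\dagger\in P_c$. Because every path indexed by $S$ contains $e^\dagger\in P_c$, $P_c$ meets all of them; blue neighbors stay blue with $c$, and for a red neighbor $w$ the edge $\{c,w\}$ cannot be blue (else the split making $\{s,w\}$ red would come from the other endpoint, which $w$ then also meets, violating contractibility), so it is red. Thus $(S\setminus\{s\})\cup\{c\}$ induces a $\kp$ in $(G,G')$, contradicting $(P2)$.

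For part (ii), let $\{a,b,d\}$ be a red triangle of $\contract{(G,G')}{e}$ and assume, towards a contradiction, that it is not a claw-clique; I only need to treat $s\in\{a,b,d\}$, say $a=s$. A red triangle is a clique of $\eptg{\contractppq}$, and inside a claw-clique a pairwise-splitting triple uses three distinct pairs of claw edges and is therefore itself a pie; since $\{s,b,d\}$ is assumed not to be a claw-clique, Theorem~\ref{thm:golumbiccliques} forces it into an edge-clique, i.e. $P_s,P_b,P_d$ share a common edge $e^\dagger$, with $e^\dagger\in P_c$ for some $c\in\{p,q\}$. Exactly as in part (i), $P_c$ meets both $P_b$ and $P_d$ through $e^\dagger$, and each of $\{c,b\},\{c,d\}$ is red (a blue edge there would, via Observation~\ref{obs:splitofunion} and contractibility, make the corresponding edge at $s$ non-red). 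Hence $\{c,b,d\}$ is a red triangle of $\rep$, which by $(P3)$ is a claw-clique: a claw $K$ whose three edges $e_1,e_2,e_3$ are used two-at-a-time by $P_c,P_b,P_d$.

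It remains to transfer this pie back to $\contractppq$, and this is the step I expect to be the crux. Say $P_c$ uses $e_1,e_2$. Then $P_s=P_p\cup P_q$ certainly uses $e_1,e_2$; if it also used $e_3$, then $e_3$ would lie on the endpoint other than $c$, so $P_p\cup P_q$ would contain all three edges of the claw, giving the center degree three and hence failing to be a path — contradicting $P_p\sim P_q$. Therefore $P_s$ uses exactly $e_1,e_2$, so $K$ witnesses that $\{s,b,d\}$ is a pie, hence a claw-clique. But a pie has no edge common to all three paths, contradicting the common edge $e^\dagger$; this contradiction shows $\{s,b,d\}$ was a claw-clique all along, so $(P3)$ is preserved. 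Throughout, the genuine difficulty — and the place the tree structure is indispensable — is guaranteeing that a single endpoint $c$ simultaneously accounts for every adjacency and every split attributed to $s$; the common-edge argument together with the non-splitting condition $P_p\sim P_q$ are precisely what rule out the ``split between $P_p$ and $P_q$'' configurations that would otherwise break the un-contraction.
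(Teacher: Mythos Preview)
Your argument is correct. For part (ii) it is essentially the paper's proof: once you have the red triangle $\{c,b,d\}$ in $\rep$ sharing the edge $e^\dagger$, you already contradict $(P3)$ directly (three paths forming a pie with distinct pairs cannot share a common edge), so the ``transfer back to $\contractppq$'' paragraph is unnecessary.

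For part (i) your route differs from the paper's. The paper argues combinatorially in the pair: Lemma~\ref{lem:BBRTopDoesNotComeFromContraction} forces the merged vertex to be an endpoint of the $P_4$, after which a short case split on which pre-image sees the far endpoint produces a $4$-hole with a blue edge, contradicting Lemma~\ref{lem:NoBlueHole}. You instead work in the representation $\contractppq$: locate a common edge $e^\dagger$ of the four paths, pick the side $c\in\{p,q\}$ containing it (your ``all blue neighbours designate the same side'' argument), and show $(S\setminus\{s\})\cup\{c\}$ already induces a $\kp$ in $(G,G')$. This avoids Lemma~\ref{lem:BBRTopDoesNotComeFromContraction} and handles the ``$s$ is a middle vertex'' case uniformly rather than ruling it out in advance. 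One small imprecision to fix: when arguing the four paths share an edge, you claim that inside a claw-clique two paths are non-splitting \emph{exactly when} they use the same pair of claw edges, making the induced $\enpt$ graph a disjoint union of cliques. Only the forward implication holds (distinct pairs force a split at the centre; same-pair paths may still split elsewhere). This does not damage your conclusion --- connectedness of the $P_4$ together with the forward implication already forces all four paths into one pair, hence onto a common claw edge --- so the correct statement is not ``the maximal clique cannot be a claw-clique'' but ``even in that case the four paths share an edge''.
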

\begin{proof}
(i) Assume, by contradiction, that $(G,G')$ does not have an induced sub-pair isomorphic to $\kp$ and without loss of generality $\contract{(G,G')}{e}$ has a sub-pair isomorphic to $\kp$ induced by the vertices $U=\set{p,q,r,s}$ where $p$ and $s$ are the endpoints of the subgraph isomorphic to $P_4$. Let $v$ be the vertex created by the contraction of $e$. If $v \notin U$ then the sub-pair induced by $U$ is also a sub-pair of $(G,G')$, contradicting our assumption. Therefore, $v \in U$. By Lemma \ref{lem:BBRTopDoesNotComeFromContraction} we have that $v \notin \set{q,r}$. Therefore, let without loss of generality $v=p$, $e=\set{p',p''}$ and $p''$ is adjacent to $q$ in $G'$. $\set{p',q}$ is a non-edge of $G$, because $e$ is contractable. As $\set{p,s}$ and $\set{p,r}$ are edges of $\contractge$, $\set{p',s}$ or $\set{p'',s}$ is an edge of $G$, and $\set{p',r}$ or $\set{p'',r}$ is an edge of $G$. If  $\set{p'',s}$ is an edge of $G$ then  $\set{p'',r}$ is a non-edge of $G$ since otherwise  $\set{p'',q,r,s}$ induce a sub-pair isomorphic to $\kp$ in $(G,G')$. Therefore, $\set{p',r}$  is an edge of $G$. Then  $\set{p',p'',q,r}$ induces a hole of size 4 with blue edges, a contradiction. Thus $\set{p',s}$ is an edge of $G$, since $\set{p',q}$ and $\set{p'',s}$ are non-edges of $G$, $\set{p',p'',q,s}$ induce a hole of length 4 with blue edges, a contradiction.

(ii) Assume, by contradiction, that $(G,G')$ has a representation $\rep$ satisfying $(P3)$ and that no representation of $\contract{(G,G')}{e}$ satisfies $(P3)$. Let $v$ be the vertex created by the contraction of $e=\set{v',v''}$. Then by Lemma \ref{lem:contraction-union} $\contract{\rep}{P_{v'},P_{v''}}$ is a representation of $\contract{(G,G')}{e}$ and it contains a red edge-clique $\set{p,q,r}$. If $v \notin \set{p,q,r}$ then $\set{p,q,r}$ is an edge-clique of $\rep$, contradicting our assumption. Assume without loss of generality that $v=p$. Let $e_0$ be an edge of $T$ defining the edge-clique $\set{v,q,r}$. Since $e_0 \in P_{v'} \cup P_{v''}$, without loss of generality $e_0 \in P_{v'}$. Then $\set{v',q,r}$ induces an edge-clique on $e_0$. This is clearly a red edge-clique since $e$ is contractable, a contradiction.
\end{proof}

We now describe how minimal representations of $\contract{(G,G')}{e}$ can be obtained from minimal representations of $(G,G')$.
\begin{lemma}\label{lem:MinimizationAfterUnion}
Let $\rep$ be a minimal representation, $\repprime$ a representation such that $\repprime \lesssim \contractppq$ and $\repprime  \approxeq \contractppq$. Let $e$ be an edge of $T$ involved in a minimal sequence of minifying operations $s$ that obtains $\repprime$ from $\contractppq$. There is an operation of $s$ and a path $P$ such that the operation removes $e$ from $P$ (i.e., $s$ is either $tr(P,e)$ or $contract(e)$, and $e \in P$) where at least one of the following holds:\\
(i) $e$ is a tail of $P_p \cap P_q$, $P \cap P_p \cap P_q=\set{e}$ and $P \cap (P_p \cup P_q) \supsetneq \set{e}$.\\
(ii) $e$ is incident to an internal vertex $u$ of $P_p \cup P_q$, $e$ is a tail of $P$, $P$ is not in a pie with center $u$.
\end{lemma}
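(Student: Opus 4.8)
The plan is to leverage the minimality of $\rep$ in order to pin down every edge $e$ that a minimal sequence $s$ can touch. First I would use Lemma~\ref{lem:MinOperationsExchangeable} to permute $s$ so that the operation removing $e$ becomes its first operation, and Lemma~\ref{lem:MinOperationsMonotone} (applied to the chain $\contractppq \gtrsim \cdots \gtrsim \repprime$, whose endpoints are equivalent) to conclude that every prefix of $s$, in particular this first operation, is pair-preserving. Thus I reduce to a single minifying operation, removing $e$ from a path $P$, that is applied directly to $\contractppq$ and preserves the pair. By the monotonicity recorded in the proof of Lemma~\ref{lem:MinOperationsMonotone}, such an operation can fail to be pair-preserving in only two ways: by destroying an edge-intersection, i.e. some path $P'$ satisfies $P \cap P' = \set{e}$ (an \emph{intersection obstruction}), or by destroying a split, i.e. for some $P'$ with $e \notin P'$ the endpoint $u$ of $P$ incident to $e$ is the unique vertex of $\split(P,P')$ while $P \cap P' \neq \emptyset$ (a \emph{split obstruction}).

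The crux is that performing the \emph{same} deletion on $\rep$ is \emph{not} pair-preserving, since $\rep$ is minimal; hence an obstruction of one of the two types above must be present in $\rep$. Writing $P_s = P_p \cup P_q$, the representations $\rep$ and $\contractppq$ share the host tree and all paths other than the pair $\set{P_p,P_q}$, which is replaced by $P_s$; therefore any obstruction witnessed by a path distinct from $P_p,P_q$ would be present in $\contractppq$ as well, contradicting the validity there. Consequently the witnessing path must be $P_p$ or $P_q$, and the obstruction is exactly the one repaired by the union. In the intersection case, say $P \cap P_q = \set{e}$ while validity on $\contractppq$ forces $P \cap P_s \supsetneq \set{e}$; expanding $P \cap P_s = (P\cap P_p)\cup(P\cap P_q)$ and using that $P$ is a single path in a tree, I would show that $e$ lies in the overlap $P_p \cap P_q$, that $P$ meets this overlap only in $e$, and that $e$ is a boundary edge of $P_p \cap P_q$ with $P$ reaching beyond it, which is precisely condition~(i). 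In the split case the witnessing $P_q$ furnishes a unique split vertex $u$ of $P\cup P_q$ incident to the tail $e$ of $P$; since $e$ is a tail and $u$ a junction, $u$ is internal to $P_q$ and hence to $P_s$, and validity on $\contractppq$ holds precisely because, by Observation~\ref{obs:splitofunion}, $\split(P,P_s)=\split(P,P_p)\cup\split(P,P_q)$ still contains a split vertex off $e$ (contributed by $P_p$) after $e$ is removed; this is condition~(ii). The requirement that $P$ not lie in a pie centred at $u$ then follows from the validity on $\contractppq$: were $P$ in such a pie, deleting $e$ would shrink the hole it represents (Theorem~\ref{thm:golumbicpie}) and change the $\ept$ graph, contradicting that the operation preserves the pair.

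A few technical points will need care. When the operation is $contract(e)$ rather than $tr(P,e)$ I would run the identical obstruction analysis over all paths through $e$ and take $P$ to be the one that actually witnesses the obstruction. When the trimmed path is $P_s$ itself, I would pull the operation back to the corresponding tail of $P_p$ or $P_q$ in $\rep$ (an endpoint of $P_s$ is an endpoint of $P_p$ or of $P_q$), so that the minimality of $\rep$ again forces an obstruction; here one must also allow the degenerate situation in which the overlap $P_p \cap P_q$ is the single edge $e$. I expect the genuine obstacle to be the geometric bookkeeping in the split case: verifying that $u$ is the \emph{unique} split vertex of $P \cup P_q$ but not of $P \cup P_s$, and that the surviving split vertex genuinely comes from the $P_p$-part and is not itself removed together with $e$, so that the split status of $P$ against $P_q$ flips in $\rep$ while its status against $P_s$ is unchanged in $\contractppq$. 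Observation~\ref{obs:splitofunion} will be the main instrument for this step, and the fact that $P_p \sim P_q$ (so that $P_p \cup P_q$ is a path) will repeatedly be used to rule out spurious junctions at $u$.
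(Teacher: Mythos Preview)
Your plan is essentially the paper's proof: permute the operation to the front via Lemma~\ref{lem:MinOperationsExchangeable}, use Lemma~\ref{lem:MinOperationsMonotone} to see that this single operation is pair-preserving on $\contractppq$, invoke minimality of $\rep$ to produce an affected edge there, argue the witness must be one of $P_p,P_q$, and split into the intersection/split cases to land in (i)/(ii). Your handling of the pull-back when the trimmed path is $P_p\cup P_q$ itself, and of $contract(e)$ versus $tr(P,e)$, also matches what the paper does.

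The one genuine divergence is how you dispose of the sub-case ``$P$ lies in a pie centred at $u$''. You claim this yields a contradiction with pair-preservation on $\contractppq$, citing Theorem~\ref{thm:golumbicpie}. That citation is misplaced: the theorem produces pies from $\ept$ holes, not holes from pies, so ``shrinking the hole it represents'' is not justified. The underlying mechanism you want is simpler and does not need that theorem: since $e$ is a tail of $P$ incident to the pie centre $u$, $P$ meets the pie-neighbour $P''$ on the $e$-side in exactly $\set{e}$, so removing $e$ kills the $\ept$ edge $\set{P,P''}$. The paper uses exactly this observation, but draws a different conclusion: rather than a contradiction, it notes that this new intersection-loss is itself an affected pair and therefore (by the same ``witness must involve $P_p$ or $P_q$'' step) lands back in case~(i). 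Your contradiction route is valid provided the pie is read in $\contractppq$ (where $P''$ is a genuine path of the representation); the paper's reduction-to-(i) route is what you need if the pie is read in $\rep$, since then $P''$ may be $P_p$ or $P_q$ and the loss is invisible after the union. Either reading proves the lemma; just be explicit about which representation you mean, and drop the appeal to Theorem~\ref{thm:golumbicpie}.
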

\begin{proof}
Let $G=\eptgp$ and $G'=\enptgp$ and consider an operation $op$ of $s$. Without loss of generality we can assume that $op$ is the first operation of $s$, by Lemma \ref{lem:MinOperationsExchangeable}. Furthermore, by Lemma \ref{lem:MinOperationsMonotone}, the representation obtained by applying $op$ is also equivalent to $\contractppq$. Therefore, without loss of generality $op$ is the only operation of $s$. Note that $op$ is defined on $\contractppq$ except when $op$ is $tr(P_p \cup P_q,e)$. In this case $e$ is a tail of $P_p$ or $P_q$ (or both). In the following discussion, whenever we apply $op$ to $\rep$, we mean that we apply $tr(P_p,e)$ or $tr(P_q,e)$ one of which is well defined on $\rep$.

By the minimality of $\rep$, $op$ cannot be applied to $\rep$. More precisely, if $op$ is applied, either an edge of $G$  becomes a non-edge, or a red edge of $(G,G')$ becomes a blue edge. We term such an edge of $(G,G')$ \emph{an affected edge} and the corresponding paths of $\rep$ \emph{affected pair of paths}. Let $\set{r,s}$ be an affected edge of $(G,G')$. If $\set{P_r,P_s} \cap \set{P_p,P_q} = \emptyset$ then $P_r,P_s$ is a pair of affected paths in $\contractppq$, contradicting the fact that $op$ can be applied to $\contractppq$. We conclude that $\set{P_r,P_s} \cap \set{P_p, P_q} \neq \emptyset$. Assume without loss of generality that $P_s=P_p$, i.e. $P_r,P_p$ is an affected pair of paths. We consider two disjoint cases:

Case 1) $\set{r,p}$ becomes a non-edge after applying $op$. Then $P_r \cap P_p = \set{e}$ for some edge $e$ of $T$, and after the removal of $e$ the intersection becomes empty. On the other hand $P_r \cap (P_p \cup P_q) \supsetneq \set{e}$, because otherwise $P_r$ and $P_p \cup P_q$ constitute an affected pair of paths in $\contractppq$. Then $e$ is a tail of $P_p \cap P_q$ (see Figure \ref{fig:MinimizationAfterUnion} a) and $P_r$ is the claimed path $P$ (note that possibly $r=q$ as opposed to the figure). In this case i) holds.

Case 2) $\set{r,p}$ is a red edge, and it becomes a blue edge after applying $op$. Then $P_r \nsim P_p$ (therefore $r \neq q$) and $P_r,P_p$ do not split after applying $op$. Therefore, $\split(P_r,P_p)=\set{u}$ for an endpoint $u$ of $e$, $e$ is a tail of exactly one of $P_r, P_p$, and $u$ is an internal vertex of $P_p$ thus of $P_p \cup P_q$. Let $\set{P,P'} = \set{P_r, P_p}$ such that $e$ is a tail of $P$, and $e \notin P'$. If $P$ is not in a pie with center $u$ then ii) holds. Otherwise $P$ has two neighbors $\set{P',P''}$ in this pie. $e \in P''$ because $e \notin P$ and $e$ is an edge incident to $u$, the center of the pie. Recalling that $e$ is a tail of $P$ we conclude that after the removal of $e$ from $P$, its intersection with $P''$ becomes empty. Therefore, i) holds.
\end{proof}
\begin{figure}[htbp]
\centering
\includegraphics{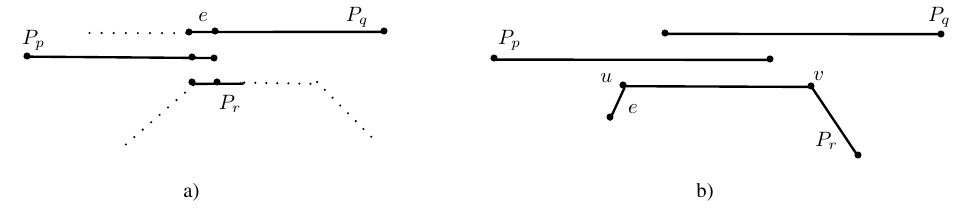}
\caption{Possible minifying operations on $\contractppq$.}
\label{fig:MinimizationAfterUnion}
\end{figure}

\begin{lemma}\label{lem:EverySplitVertexIsACenterofBrokenPlanarTour}
Every split vertex of a path $P$ of a broken planar tour is a center of a pie containing $P$.
\end{lemma}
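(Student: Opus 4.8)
The plan is to exploit the concrete structure of a planar tour at a junction of the host tree, and to show that at any split vertex the paths transiting it reconstitute a full pie.

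First I would pin down what a split vertex forces. Let $P$ be a path of the broken planar tour and let $v \in \split(P,P')$ for some path $P'$ of the representation. Since $P$ and $P'$ lie in a tree, $P \cap P'$ is a single subpath, and $v$ is an endpoint of this subpath at which both $P$ and $P'$ depart in distinct directions; hence $\deg_{P \cup P'}(v) \geq 3$, which gives $\deg_T(v) \geq 3$ (so $v$ is a junction of $T$) and, crucially, forces $v$ to be an \emph{internal} vertex of $P$, so $P$ uses exactly two edges incident to $v$. A short path has no junction of $T$ in its interior (its only possible junction, the neighbour of the leaf, is an endpoint), so, as $v$ is a junction internal to $P$, the path $P$ must be a subpath of some long path $L$; consequently $P$ uses at $v$ the same two edges that $L$ does.

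Next I would describe the pie that a planar tour creates at $v$. Let $e_1,\dots,e_d$ ($d=\deg_T(v)\geq 3$) be the edges at $v$ in the cyclic order of the planar embedding, with hanging subtrees $S_1,\dots,S_d$. Because the tour is planar, $\pi$ is the order in which some DFS meets the leaves of $T$, so the leaf sets of $S_1,\dots,S_d$ form consecutive arcs of $\pi$. A long path joins cyclically consecutive leaves and transits $v$ exactly when those leaves lie in different subtrees, i.e.\ at one of the $d$ arc boundaries; no long path joining two leaves of the same subtree reaches $v$. Thus there are exactly $d$ transit long paths $L_1,\dots,L_d$, and $L_j$ uses the consecutive pair $e_j,e_{j+1 \bmod d}$ at $v$. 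Together with the induced star $K_{1,d}$ on $v$ and its neighbours these form a pie centered at $v$ in the original unbroken tour, and the path $L$ of the previous paragraph is one of them, say $L_m$, so $P \subseteq L_m$ uses $e_m,e_{m+1}$ at $v$.

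The remaining, and main, step is to check that subdividing edges and breaking apart long paths do not destroy this pie. Subdivision only renames the immediate neighbours of $v$ and leaves every $L_j$ still transiting $v$. For breaking apart I would use the fact that in this paper intersection means \emph{edge}-intersection, so by definition consecutive pieces of a broken path share at least one edge. Listing the edges of $L_j$ linearly, its pieces form strictly increasing, pairwise edge-overlapping intervals that cover $L_j$; hence for the two consecutive edges $e_j,e_{j+1}$ of $L_j$ meeting at $v$, the piece covering $e_j$ that extends farthest along $L_j$ must also cover $e_{j+1}$, since otherwise the following piece would still contain $e_j$ (consecutive pieces share an edge) yet reach farther, contradicting the choice. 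Therefore, for each $j$, some piece $Q_j$ of $L_j$ in the broken tour still transits $v$ through $e_j,e_{j+1}$. The paths $Q_1,\dots,Q_d$ then form a pie centered at $v$; since $P$ itself uses $e_m,e_{m+1}$ we may take $Q_m := P$, yielding a pie centered at $v$ that contains $P$. I expect this overlap-of-pieces argument to be the crux: it is exactly what rules out a long path being ``cut'' at $v$ so as to leave no transiting piece, and it relies on the edge-intersection convention in the definition of breaking apart. Everything else is bookkeeping about the DFS/planar structure of tours.
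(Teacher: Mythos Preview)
Your proof is correct and follows essentially the same line as the paper's: locate the pie at the junction $v$ in the underlying tour, then show that breaking apart preserves, for each long path of that pie, a piece that still transits $v$. The paper's version is much terser---it simply asserts the tour case ``by construction'' and that each pie path ``has at least one sub-path in the broken planar tour that crosses $v$''---whereas you supply the missing details (the DFS/planarity argument giving exactly $d$ transit long paths on consecutive edge pairs, the exclusion of short paths, and the interval-overlap argument); your phrase ``pairwise edge-overlapping'' should be ``consecutively overlapping,'' but the argument you actually carry out is the right one.
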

\begin{proof}
By construction, every split vertex of a path $P$ of a tour is a center of a pie containing $P$. We will show that the same holds for a broken planar tour. Let $P_1',P_2'$ be two paths of a broken planar tour such that $v \in \split(P_1',P_2')$. These paths are sub-paths of two paths $P_1,P_2$ of a tour and $v \in \split(P_1,P_2)$. Then $v$ is a center of a pie containing $P_1,P_2$ and also other paths. Each one of the other paths has at least one sub-path in the broken planar tour that crosses $v$. These paths, together with $P_1',P_2'$ constitute a pie with center $v$ of the broken planar tour. We conclude that every split vertex of $P$ is a center of a pie.
\end{proof}

We notice that by Lemma \ref{lem:EverySplitVertexIsACenterofBrokenPlanarTour} it follows that the case (ii) of Lemma \ref{lem:MinimizationAfterUnion} is impossible.

\newcommand{\algbpt}{\textsc{BuildPlanarTour}}
\subsection{Small Cycles: the pairs $(G,C_5)$ and $(G,C_6)$}\label{subsec:c5c6}
We now return to the study of the representations of pairs $(G',C')$ satisfying $(P2),(P3)$. Without loss of generality we let $V(G')=V(C')=\set{0,1,\ldots,n-1}$, $n \geq 5$ and note that all arithmetic operations on vertex numbers are done modulo $n$.

In this Section we analyze the special cases of $n=5$ and  $n=6$. These cases are special because our technique for the general case is based on contraction of cycles to smaller ones and assumes that the representation of a non-contractable pair is a planar tour (Theorem \ref{theo:P123}). However this theorem does not hold when $n=4$. The following lemma analyzes the case $n=5$. We note that in this case $(P3)$ holds vacuously, since it can be easily seen that such a pair does not contain a red triangle. In Lemma \ref{lem:NoContractableC6} we analyze the case $n=6$ for which we show that whenever $(P2)$ and $(P3)$ hold, $(P1)$ holds too, implying that the only representations for this case are those implied by Theorem \ref{thm:BuildPlanarTourCorrect}.

\begin{lemma}\label{lem:unique-c5}
If $(G',C_5)$ satisfies $(P2)$ then (i) $G'$ is the graph depicted in Figure \ref{fig:eptn-c5}, and (ii) $(G',C_5)$ has a unique minimal representation also depicted in Figure \ref{fig:eptn-c5}.
\end{lemma}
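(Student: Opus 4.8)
The statement to prove is that a pair $(G',C_5)$ satisfying $(P2)$ has a unique graph $G'$ and a unique minimal representation, both as pictured. Since $n=5$, Lemma~\ref{lem:NoBlueHole} already tells us that the five edges of the Hamiltonian cycle $C_5$ are all red (blue edges cannot lie on a hole of length $\geq 4$). Thus the only freedom in $G'$ lies in which chords of $C_5$ are present, and whether each present chord is blue or red. The plan is therefore to enumerate the possible chord sets up to the symmetry (dihedral) group of $C_5$, and to rule out all configurations except the one in Figure~\ref{fig:eptn-c5} by invoking $(P2)$ together with the characterization of pairs $(C_k,G')$ (the five-item case analysis in Section~\ref{subsec:eptn-OurPreviousResults}) and Lemma~\ref{lem:NoBlueHole}.

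\textbf{Key steps for part (i).}
First I would note that every chord of $C_5$ has length $2$ (the only nontrivial distance on a $5$-cycle), so each chord, together with two cycle edges, closes a triangle; by the $(C_k,G')$ case analysis each such triangle is red, $BRR$, $BBR$, or blue. Next I would argue that at least one chord must be present: if $G'$ were exactly $C_5$, then $G'=C_5$ is already a red hole, and I would check (this is a small verification) whether the bare $5$-cycle is realizable and whether it coincides with the claimed figure, or whether $(P2)$ forces extra chords. The heart of part (i) is to show that the chords present are precisely those in the figure. For each candidate chord configuration I would look for an induced $(K_4,P_4)$: recall that a $(K_4,P_4)$ is four mutually $\ept$-adjacent vertices whose blue edges form a path $P_4$. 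Whenever too many chords are present the induced subgraph on four consecutive (or suitably chosen) vertices becomes a $K_4$, and I would then determine the forced blue/red pattern on that $K_4$ and check whether it is isomorphic to the forbidden $(K_4,P_4)$; $(P2)$ then eliminates that configuration. Configurations with too few chords I would eliminate by showing they fail to be representable as an $\ept$ pair at all (using Theorem~\ref{thm:golumbicpie}: the $C_5$ forces a pie with five paths, which rigidly constrains how any chord-edge can be realized), or by showing they do not match the unique surviving case.

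\textbf{Key steps for part (ii) and the main obstacle.}
Once $G'$ is pinned down, I would construct the representation of the figure explicitly and verify it realizes $(G',C_5)$: the $C_5$ is realized by a pie with five leaves (forced by Theorem~\ref{thm:golumbicpie}), and I would check that the chord(s) correspond exactly to the non-splitting (blue) or splitting (red) intersections claimed. For \emph{minimality} I would argue that no minifying operation $contract(e)$ or $tr(P,e)$ can be applied without changing the pair: a contraction would merge two paths and is governed by Lemma~\ref{lem:contraction-pairs}/the contractibility criterion, while a tail removal would either destroy an $\ept$ edge of $G'$ or turn a red edge blue, both of which change $(G',C_5)$; hence the representation is minimal in its equivalence class. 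For \emph{uniqueness} of the minimal representation I expect the main obstacle to be the rigidity argument: I must show that \emph{every} representation of this particular pair reduces, via the pie forced by Theorem~\ref{thm:golumbicpie}, to the one in the figure up to isomorphism. The delicate point is that $n=5$ is a genuine boundary case where the general tour-based machinery (Theorem~\ref{theo:P123}, valid for $n>4$) and the contraction reduction to smaller cycles are not available in the usual form, so the uniqueness cannot be inherited from the general theorem and must be established by a direct, hands-on analysis of how the five pie-paths and the chord paths can be laid out on the host tree. I would handle this by fixing the forced pie and showing each chord path is forced (up to the dihedral symmetry already quotiented out in part (i)) to be the one drawn, leaving no residual freedom.
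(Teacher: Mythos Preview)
Your proposal rests on a basic misreading of the pair $(G',C_5)$: here $G'$ is the $\ept$ graph and $C_5$ is the $\enpt$ graph, so the five Hamiltonian-cycle edges are \emph{blue} by definition, and every chord of $C_5$ in $G'$ is automatically \emph{red} (there is no choice of colour on chords). Lemma~\ref{lem:NoBlueHole} does not say the cycle edges are red; it says that no blue edge can sit on a hole of $G'$ of length $\geq 4$, which is exactly what forces $G'$ to have enough red chords to cut $C_5$ into triangles. With the colours reversed, your enumeration of ``which chords are present and whether each is blue or red'' is aiming at the wrong target, and your plan for part~(ii) collapses entirely: $C_5$ is \emph{not} a hole of the $\ept$ graph $G'$ (it has chords), so Theorem~\ref{thm:golumbicpie} does not give a five-leaf pie, and indeed the representation in Figure~\ref{fig:eptn-c5} has two junctions $u,v$, not one.

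Once the colours are corrected, the paper's argument for~(i) is close in spirit to what you intend but sharper: one first shows there must be at least two \emph{non-crossing} red chords (otherwise four of the blue cycle edges form a hole of length $4$, contradicting Lemma~\ref{lem:NoBlueHole}); up to symmetry these are $\{1,3\}$ and $\{1,4\}$. Adding either $\{2,4\}$ or $\{0,3\}$ creates a $\kp$ (violating $(P2)$), and adding $\{0,2\}$ creates a $4$-hole with blue edges; hence $G'$ is determined. For~(ii) the paper does \emph{not} argue directly on a pie. Instead it observes that $\{3,4\}$ is contractible (neither $\{2,4\}$ nor $\{0,3\}$ is an edge), contracts to a pair $(G,C_4)$ with a single red chord, invokes the known unique minimal representation of that pair from~\cite{BESZ13-TR1}, and then reverses the union: any representation of $(G',C_5)$ must arise by breaking $P'_{3.4}$ into two overlapping sub-paths that both split from $P_1$, which forces the picture. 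Your direct ``fix the pie and place the chord paths'' plan cannot work here because there is no pie to fix; the contraction-to-$C_4$ step is the missing idea.
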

\begin{proof}
(i) $G'$ contains at least two non-crossing red edges, because otherwise there is a hole of size $4$ with blue edges. Without loss of generality, let these edges be $\set{1,3}$ and $\set{1,4}$. If one of $\set{2,4}$ or $\set{0,3}$ is a red edge, then we have a $\kp$, contradicting our assumption. If $\set{0,2}$ is a red edge, then we have a hole of size $4$ containing blue edges, contradicting Lemma \ref{lem:NoBlueHole}. Therefore, $\set{1,3}$ and $\set{1,4}$ are  the only red edges in this pair.

(ii) We contract $\set{3,4}$ of $(G',C_5)$ and obtain the pair $(G,C_4)$ with one red edge $\set{1,3.4}$. This pair has a unique minimal representation $\repprime$ characterized in \cite{BESZ13-ENPT1-DAM}. Any representation of $(G', C_5)$ is obtained by splitting the path $P'_{3.4}$ of $\repprime$ into two overlapping paths and making sure that both of them split from $P_1$. This leads to the minimal representation depicted in Figure \ref{fig:eptn-c5}.
\end{proof}

\begin{figure}[htbp]
\centering
\includegraphics{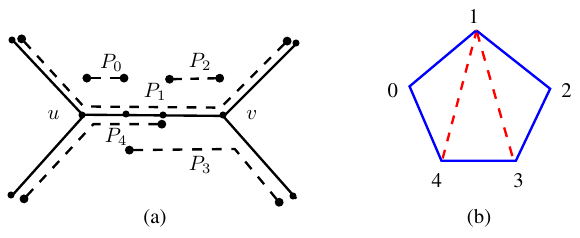}
\caption{(a) The unique $\enpt$ representation of $C_5$ satisfying $(P2)$ and (b) corresponding pair $(G,C_5)$.}
\label{fig:eptn-c5}
\end{figure}

\begin{lemma}\label{lem:NoContractableC6}
If $(G',C_6)$ satisfies $(P2)$ and $(P3)$ then it is not contractable, i.e. it satisfies $(P1)$ too.
\end{lemma}
\begin{proof}
Assume, by way of contradiction, that $(G',C_6)$ satisfies $(P2)$ $(P3)$ and the edge $e=\set{0,1}$ is contractable. Therefore, $\set{0,2}$ and $\set{5,1}$ are non-edges of $G'$. $\set{2,5}$ is also a non-edge, because otherwise $\set{0,1,2,5}$ is a hole of size $4$ with blue edges. Then $\set{0,1}$ must be in a $BRR$ triangle. From the two possible options remaining, assume without loss of generality that this triangle is $\set{0,1,4}$. At least one of $\set{2,4}$ and $\set{1,3}$ is an edge of $G'$ because otherwise $\set{1,2,3,4}$ is a hole of size $4$ with blue edges. On the other hand, if both of them are edges then $\set{1,2,3,4}$ is a $\kp$, a contradiction. Therefore, exactly one of them is an edge of $G'$. We analyze these cases separately.
\begin{itemize}

\item{$\set{2,4}$ is an edge of $G'$, $\set{1,3}$ is not an edge of $G'$:} In this case $\set{0,3}$ is not an edge, because otherwise $\set{0,1,2,3}$ is a hole of size $4$ with blue edges. Then $\set{3,5}$ is not an edge, because otherwise $\set{0,1,2,3,5}$ is a hole of size $5$ with blue edges. Then $\set{5,0,1,2,3}$ induces a path on $4$ vertices in $G'$. Since none of the paths $P_0,P_1,P_2,P_3,P_5$ split from another, their union is a graph with maximum degree two, i.e. every representation of them is an interval representation where no three paths intersect at one edge. Now $P_4 \sim P_5$ and $P_4 \sim P_3$. Therefore, $P_4$ intersects all of $P_0,P_1$ and $P_2$ and does not split from them. Then $\set{4,0},\set{4,1},\set{4,2}$ are blue edges, a contradiction.

\item{$\set{1,3}$ is an edge of $G'$, $\set{2,4}$ is not an edge of $G'$:} Assume by way of contradiction $\set{0,1}$ is contracted, the contracted pair is the same as the pair in Figure \ref{fig:eptn-c5} (b) where contracted edge $\set{0,1}$ corresponds to vertex $1$ of $(G,C_5)$. We will show that $1$ can not be a vertex obtained by a contraction. Let $\set{1',1''}$ be the contracted edge. For the following discussion consult Figure \ref{fig:eptn-c5} (a). One endpoint of each one of $P_{1'}$, $P_{1''}$ is the same as the endpoints of $P_1$ since $P_1 = P_{1'} \cup P_{1''}$. $P_{1'}$ (resp. $P_{1''}$) can not cross $v$ since otherwise $\set{1',2}$ (resp. $\set{0,1''}$) is a blue chord. $P_{1'} \sim P_{1''}$, therefore there exist some edge $e$ such that $P_{1'} \cap P_{1''} \ni e$ and $e \in p_T(u,v)$. But $e \in p_T(u,v) \subseteq P_3 \cup P_4$ then either $\set{1',3}$ or $\set{1'',4}$ (or both) is a blue chord.

\end{itemize}
\end{proof}

\subsection{The General Case}\label{subsec:P23Representations}
\newcommand{\algPTwoPThree}{\textsc{FindMinimalRepresentation-P2-P3}}
\newcommand{\procAdjustEndpoint}{\textsc{AdjustEndpoint}}

Algorithm \ref{alg:findminimalrepresentation} is a recursive description of $\algPTwoPThree$. It follows the paradigm of obtaining a non-contractable pair by successive contractions, and then reversing the corresponding union operations in the representation. The reversal of the union operation, i.e. the breaking apart of a path is done by a) duplicating the path, then b) moving one endpoint of each path to a properly chosen internal vertex of the original path, and possibly c)subdividing an edge. The key to the correctness of the algorithm is the following lemma that among others enables us to consider only one minifying operation.

\begin{lemma}\label{lem:MinimizationOfBrokenPlanarTours}
Let $\rep$ be a minimal representation of $(G,C)$, $\repprime$ a broken planar tour representation such that $\repprime \lesssim \contractppq$ and $\repprime  \approxeq \contractppq$. Every operation in a minimal sequence of operations that obtains $\repprime$ from $\contractppq$ is a $contract(e)$ operation, where $e$ is a tail of $P_p \cap P_q$.
\end{lemma}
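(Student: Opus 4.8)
The plan is to fix a minimal sequence $s$ of minifying operations that obtains $\repprime$ from $\contractppq$ and to show that each of its operations is the contraction of a tail of $P_p \cap P_q$. Since $\contractppq$ and $\rep$ share the same host tree $T$, each operation of $s$ is associated with an edge of $T$. I would take an arbitrary operation $op$ of $s$; by Lemma \ref{lem:MinOperationsExchangeable} I may move it to the front of $s$, and by Lemma \ref{lem:MinOperationsMonotone} (applied to $\contractppq \gtrsim \cdots \gtrsim \repprime$ together with $\contractppq \approxeq \repprime$) the representation obtained after $op$ is still equivalent to $\contractppq$. These are exactly the hypotheses of Lemma \ref{lem:MinimizationAfterUnion}, which I would invoke to obtain a path $P$ and an edge $e$ that $op$ removes from $P$, with either case (i) or case (ii) of that lemma holding. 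Because $\repprime$ is a broken planar tour, case (ii) is impossible, as observed immediately after Lemma \ref{lem:EverySplitVertexIsACenterofBrokenPlanarTour}. Hence case (i) holds: $e$ is a tail of $P_p \cap P_q$, $P \cap P_p \cap P_q = \set{e}$, and $P \cap (P_p \cup P_q) \supsetneq \set{e}$. This already yields that $e$ is a tail of $P_p \cap P_q$, so it only remains to prove that $op$ is a contraction and not a mere tail removal.

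To show $op = contract(e)$, I would argue by contradiction and assume $op = tr(P,e)$. Since $e \in P_p \cap P_q$, after the union the edge $e$ lies in the interior of the path $P_p \cup P_q$ of $\contractppq$; removing $e$ from $P$ alone therefore leaves $e$ in the tree and still traversed by $P_p \cup P_q$. The crucial point is that $e$ cannot persist in $\repprime$. Using $P \cap P_p \cap P_q = \set{e}$ (so that $P$ meets the overlap of $P_p$ and $P_q$ only along $e$) together with the broken‑planar‑tour structure, I would show that the extreme endpoint of the overlap on $e$ becomes, in $\repprime$, a degree‑two tree vertex crossed by $P_p \cup P_q$ and by no path having an endpoint or a junction there; in a minimal broken planar tour such a vertex must be smoothed out, i.e. $e$ must be contracted. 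Thus $s$ would contain the operation $tr(P,e)$ followed by a later $contract(e)$; but such a sequence is not minimal, since the $tr$ operation can be deleted. This contradiction shows that $op$ itself is $contract(e)$, and as $op$ was arbitrary the proof is complete.

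The main obstacle I expect is the structural claim of the second paragraph, namely that the tail edge $e$ of $P_p \cap P_q$ is forced to be contracted on the way to $\repprime$. Making this precise amounts to verifying, from the conditions of case (i) and the rigidity of (broken planar) tours, that no path other than $P_p \cup P_q$ and the single path $P$ traverses $e$, so that once $P$ relinquishes $e$ the corresponding vertex is redundant and cannot survive in a minimal broken planar tour. The other delicate point, the elimination of case (ii), is already supplied by Lemma \ref{lem:EverySplitVertexIsACenterofBrokenPlanarTour}: its role here is to guarantee that every split created by the union $P_p \cup P_q$ sits at a pie center, which is precisely what prevents a tail removal from ever being needed to dissolve a split and thereby confines every operation to a contraction of type (i).
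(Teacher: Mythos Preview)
Your reduction to case (i) of Lemma \ref{lem:MinimizationAfterUnion} via Lemmas \ref{lem:MinOperationsExchangeable}, \ref{lem:MinOperationsMonotone} and \ref{lem:EverySplitVertexIsACenterofBrokenPlanarTour} is correct and coincides with the paper's argument. The divergence, and the gap, is in how you conclude that $op=contract(e)$ rather than $op=tr(P,e)$.

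Your plan is to argue that $e$ cannot survive in $\repprime$: you claim that the endpoint of $P_p\cap P_q$ on the $e$ side becomes, in $\repprime$, a degree-two tree vertex with no path endpoint or junction there, and that ``in a minimal broken planar tour such a vertex must be smoothed out''. But the hypothesis of the lemma does not say that $\repprime$ is minimal; it only says that $\repprime$ is a broken planar tour with $\repprime\lesssim\contractppq$ and $\repprime\approxeq\contractppq$. Broken planar tours are built from tours by \emph{subdividing edges} and breaking apart long paths, so degree-two vertices of the host tree are perfectly permissible; there is nothing in the definition that forces such a vertex to be smoothed out. Hence the claim that $contract(e)$ must occur later in $s$ is not justified, and the contradiction with minimality does not go through. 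Your own structural sub-claim (``no path other than $P_p\cup P_q$ and the single path $P$ traverses $e$'') is also stronger than what actually holds and is left unproven.

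The paper argues differently and avoids looking at $\repprime$ altogether. It shows directly that if $tr(P_r,e)$ has no affected pair then $contract(e)$ has no affected pair either. The crux is a use of the cycle structure of $(G,C)$ that your outline never invokes: $e$ is a tail of (say) $P_p$, and one checks that any path $P_s$ having $e$ as a tail on the $P_q$-side must satisfy $P_p\sim P_s$, hence $s$ is a neighbour of $p$ in $C$; since $d_C(p)=2$ and the two neighbours of $p$ are $q$ and $r$, this forces $P_r$ to be the \emph{only} path with tail $e$. From this it follows that contracting $e$ creates no affected pair beyond those already excluded by the applicability of $tr(P_r,e)$. Since a minimal sequence by definition contains the largest possible number of $contract$ operations, the operation in question must be $contract(e)$. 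This degree-two-in-$C$ argument is the missing idea in your proposal.
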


\begin{proof}
Consider an operation in a minimal sequence of minifying operations as in the statement of the lemma. Let $e$ be the edge involved in the operation, and let $P_r$ be a path whose existence is guaranteed by Lemma \ref{lem:MinimizationAfterUnion}. By Lemma \ref{lem:EverySplitVertexIsACenterofBrokenPlanarTour}, case (ii) of Lemma \ref{lem:MinimizationAfterUnion} is impossible. Then case (i) of the lemma holds, i.e. there is a path $P_r$ such that a) the minifying operation removes $e$ from $P_r$, b) $e$ is a tail of $P_p \cap P_q$, c) $P_r \cap P_p \cap P_q = \set{e}$, and d) $P_r \cap (P_p \cup P_q) \supsetneq \set{e}$.

The minifying operation is either $contract(e)$ or $tr(P_r,e)$. We will show that if $tr(P_r,e)$ can be applied, i.e. there is no affected pair after applying $tr(P_r,e)$, then $contract(e)$ can also be applied. For the following discussion consult Figure \ref{fig:MinimizationAfterUnion} a) where $\split(P_r,P_p) = \emptyset$, i.e. the dotted part of $P_r$ adjacent to $e$ in the figure, is empty.

Without loss of generality we assume that $e$ is a tail of $P_p$. Since $e$ is not a tail of $P_p \cup P_q$, we have $r \neq p.q$. $e$ divides $T$ into two subtrees $T_1, T_2$. As $e$ is a tail of $P_p$, $P_p$ can not intersect both subtrees. We assume without loss of generality that $T_2 \cap P_p = \emptyset$. Let $\bar{\pp}$ denote the set of paths of $\contractppq$, i.e. $\bar{\pp}=\pp \setminus \set{P_p, P_q} \cup \set{P_p \cup P_q}$ and $e'$ be the edge adjacent to $e$ in $P_r \cap (P_p \cup P_q)$. Every path of $P \in \bar{\pp}$ that contains $e$ contains also $e'$, because otherwise $P \cap P_r = \set{e}$ and $(P,P_r)$ would constitute an affected pair of $tr(P_r,e)$.  For $k \in \set{1,2}$, let $\pp_k = \set{P \in \bar{\pp}|~ P \cap T_k \neq \emptyset \wedge e \textrm{~is a tail of~} P}$. Note that by definition, $\pp_1 \cap \pp_2 = \emptyset$. As $e' \in T_2 \cap P_r$, we have $P_r \in \pp_2$. We note that for every path $P_s \in \pp_2$, $P_p \sim P_s$, i.e. $\set{p,s}$ is an edge of $C$. As the degree of $p$ is $2$ in $C$ and both of $q$ and $r$ neighbors of $p$ in $C$, we conclude that $\pp_2 = \set{P_r}$. On the other hand, $\pp_1=\emptyset$ because for every path $P_s \in \pp_1$, $\set{s,r}$ is an affected pair of $tr(P_r,e)$ (as $P_s \cap P_r = \set{e}$). Therefore, $\pp_1 \cup \pp_2 = \set{P_r}$, i.e. the only path with tail $e$ is $P_r$.

Assume by way of contradiction that there exists an affected pair $\set{s,t}$ of $contract(e)$. As $e' \in P_s \cap P_t$, they intersect after the contraction. Therefore, $\set{s,t}$ is a red-edge that becomes blue after the contraction. This can happen only if $e$ is a tail of exactly one of $P_s, P_t$. Therefore, $r \in \set{s,t}$ from the above discussion. But then $\set{s,t}$ constitute an affected pair of $tr(P_r,e)$, contradicting to our initial assumption. We conclude that $contract(e)$ has no affected pairs.
\end{proof}

\alglanguage{pseudocode}
\begin{algorithm}
\caption{$\algPTwoPThree(G',C')$}
\label{alg:findminimalrepresentation}
\begin{algorithmic}[1]
\Require {$C'=\set{0,1,\ldots,\abs{V(G')}-1}$ is an Hamiltonian cycle of $G'$ and $\abs{V(G')} > 5$}
\Ensure {A minimal representation $\repbarprime$ of $(G',C')$ satisfying $(P3)$ if any}
\If {$(G',C')$ is contraction-minimal}
\If{$G'$ is outerplanar}
\State \Return $\algbpt(G',C')$
\Else
\State \Return ``NO''
\EndIf
\EndIf
\State
\Statex \textbf{Contract:}
\State Pick an arbitrary contractable edge $e = \set{i,i+1}$ of $C'$
\State $(G,C) \gets \contract{(G',C')}{e}$
\State Let $j$ be the vertex of $(G,C)$ created by the contraction of the edge $e$
\State
\Statex \textbf{Recurse:}
\State $\repbar \gets \algPTwoPThree(G,C)$.\label{lin:recurse}
\State
\Statex \textbf{Uncontract:}
\State $\repbarprime \gets \repbar$
\State Let $u$ and $v$ be the endpoints of $P_j$ such that
\State ~~~~$u$ (resp. $v$) is contained in $P_{i-1}$ (resp. $P_{i+2}$)
\State Replace $P_j \in \bar{\pp}'$ by two copies $P_i$ and $P_{i+1}$ of itself
\State $\procAdjustEndpoint (\repbarprime, G, i, u)$
\State $\procAdjustEndpoint (\repbarprime, G, i+1, v)$
\State
\Statex \textbf{Validate:}
\If {$\eptg{\ppbar'}=G'$ and $\repbarprime$ satisfies $(P3)$}
\State \Return $\repbarprime$
\Else
\State \Return ``NO''
\EndIf
\State
\Function{\procAdjustEndpoint}{$\repbar, G, p, w$}
\Comment $w$ is the endpoint of $P_p$ to be adjusted
\State $e_w$ denotes the tail of $P_p$ incident to $w$
\State $\xx_w$ denotes $\set{P_x:  e_w \in P_x \textrm{~and~} \set{p,x} \notin E(G)}$
\State $\yy_w$ denotes $\set{P_y: P_p \cap P_y = \set{e_w} \textrm{~and~} \set{p,y} \in E(G)}$
\While {$\yy_w = \emptyset$ and $\xx_w \neq \emptyset$}
\State $tr(P_p,e_w)$
\EndWhile
\If {$\xx_w \neq \emptyset$}   \Comment Also $\yy_w \neq \emptyset$ as the while loop terminated
\State Subdivide $e_w$ into two edges $e_w, e_w'$ \Comment Revert the minifying operation
\For {$P_x \in \xx_w$}
\State $tr(P_w,e_{w'})$
\EndFor
\State $tr(P_p,e_w)$
\EndIf
\EndFunction
\end{algorithmic}
\end{algorithm}

\begin{figure}[htbp]
\centering
\includegraphics[width=\textwidth]{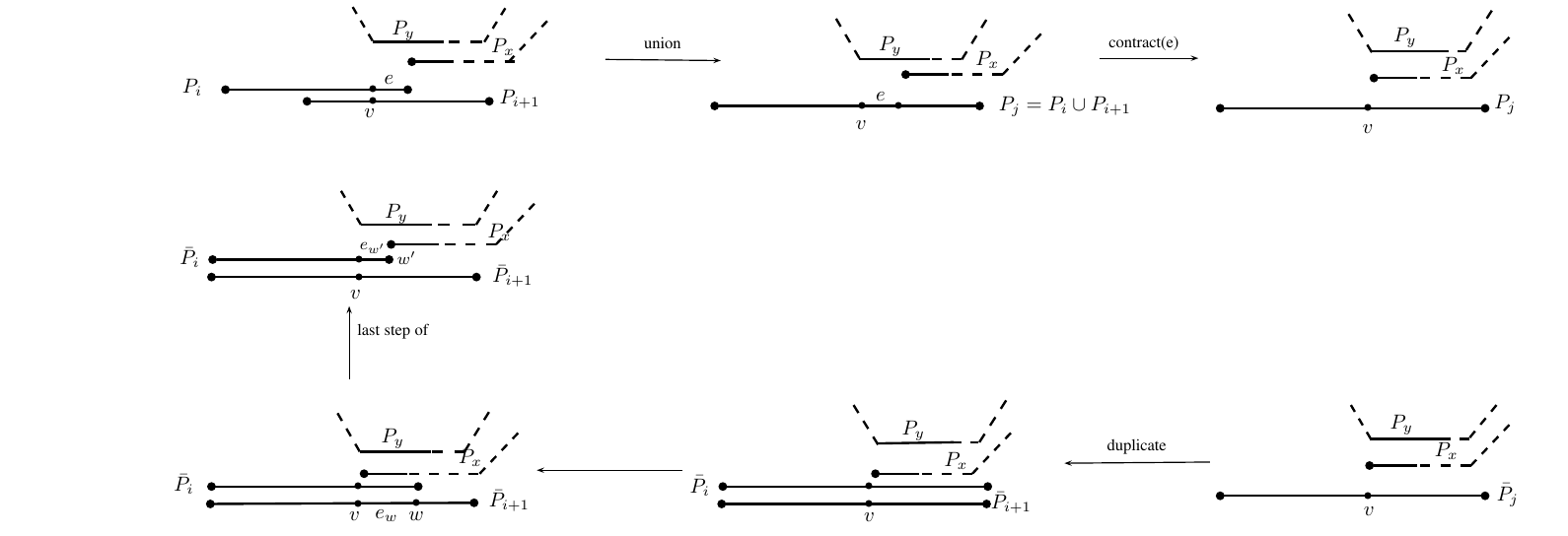}
\caption{The effect of union and minifying operations, and the reversal of this effect by Procedure $\procAdjustEndpoint$ (invoked with $p=i$).}\label{fig:AdjustEndpoint}
\end{figure}

\begin{theorem}\label{thm:FindMinimalRepresentation}
Instances of $\prbpthree$ satisfying property $(P2)$ can be solved in polynomial time. YES instances have a unique solution, and whenever $n \geq 6$ this solution is a broken planar tour.
\end{theorem}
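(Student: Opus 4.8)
The plan is to prove correctness of Algorithm~\ref{alg:findminimalrepresentation} by induction on $n=\abs{V(G')}$, mirroring the recursion of $\algPTwoPThree$, and to establish three claims at once: on a YES instance the algorithm outputs a minimal representation satisfying $(P3)$ and on a NO instance it outputs ``NO''; this representation is unique; and for $n\geq 6$ it is a broken planar tour. The base of the induction is the contraction-minimal case together with the small cycles. When $(G',C')$ is contraction-minimal it satisfies $(P1)$, so by Theorem~\ref{thm:BuildPlanarTourCorrect} the call $\algbpt(G',C')$ solves $\prbpthree$ correctly; the outerplanarity pre-check is sound because, by Theorem~\ref{theo:P123}(iii), a representation satisfying $(P3)$ exists only if $G'$ is Hamiltonian outerplanar with every face adjacent to the outer face a $BBR$ triangle, so a non-outerplanar $G'$ can be rejected immediately and otherwise $\algbpt$ returns the unique planar tour (or ``NO'' if the face condition fails). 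The case $n=6$ is folded into the base: by Lemma~\ref{lem:NoContractibleC6} any $(G',C_6)$ satisfying $(P2),(P3)$ is non-contractible, so a YES instance never enters the contraction branch, and a contractible $(G',C_6)$ has no $(P3)$-representation and is correctly rejected.

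For the inductive step I would take a contractible pair $(G',C')$ with $n\geq 7$, so that the contracted pair $(G,C)=\contract{(G',C')}{e}$ has $n-1\geq 6$ vertices. By Lemma~\ref{lem:ContractionPreservesP2P3}, $(G,C)$ inherits $(P2)$, and if $(G',C')$ has a $(P3)$-representation then so does $(G,C)$; hence by induction the recursive call returns the unique minimal $(P3)$-representation $\repbar$ of $(G,C)$, which is a broken planar tour. The crux is that the Uncontract phase recovers from $\repbar$ exactly the unique minimal representation of $(G',C')$. To see that this representation is uniquely determined, let $\repbarprime$ be any minimal $(P3)$-representation of $(G',C')$; contracting its two paths $P_i,P_{i+1}$ gives, by Lemma~\ref{lem:contraction-pairs} and Lemma~\ref{lem:ContractionPreservesP2P3}, a $(P3)$-representation of $(G,C)$ equivalent to $\repbar$, and since $\repbar$ is minimal we have $\repbar \lesssim \contract{\repbarprime}{P_i,P_{i+1}}$. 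Now Lemma~\ref{lem:MinimizationOfBrokenPlanarTours} applies (with $\repbar$ in the role of the broken planar tour), forcing every operation of a minimal minifying sequence from $\contract{\repbarprime}{P_i,P_{i+1}}$ down to $\repbar$ to be a $contract(e')$ at a tail of $P_i\cap P_{i+1}$. Reversing this sequence reconstructs $\contract{\repbarprime}{P_i,P_{i+1}}$ from $\repbar$ by breaking apart the path $P_j$ into two copies and, at each endpoint, subdividing and re-growing precisely the tails that were contracted, which is exactly the loop and conditional branch of $\procAdjustEndpoint$. Because the set $\xx_w$ of paths forcing a subdivision and the set $\yy_w$ of paths pinning the endpoint are determined by $G$ and the endpoint $w$, the reconstruction is forced, so any two minimal $(P3)$-representations of $(G',C')$ coincide, proving uniqueness; and since $\repbar$ is a broken planar tour while breaking apart a path and subdividing edges keeps a representation a broken planar tour, the output is a broken planar tour, proving the last claim.

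To finish I would verify the two validation directions and the running time. The Validate step guards against the situation in which no $(P3)$-representation of $(G',C')$ exists even though the recursion succeeded; the argument above shows that when one does exist the reconstructed $\repbarprime$ must equal it and hence passes both $\eptg{\ppbar'}=G'$ and the $(P3)$ test, whereas when none exists at least one check must fail, so returning ``NO'' is correct. Termination and the polynomial bound follow because each recursive call decreases $n$ by one, giving recursion depth $O(n)$, and every non-recursive step---the contraction, the two invocations of $\procAdjustEndpoint$ (each performing polynomially many tail removals and subdivisions), and the validation---runs in polynomial time, with the base call $\algbpt$ polynomial by Theorem~\ref{thm:BuildPlanarTourCorrect}.

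The hard part will be the Uncontract analysis: pinning down that $\procAdjustEndpoint$ reproduces the correct endpoints $u\in P_{i-1}$ and $v\in P_{i+2}$ and exactly the tails dictated by Lemma~\ref{lem:MinimizationOfBrokenPlanarTours}, which in particular requires ruling out case~(ii) of Lemma~\ref{lem:MinimizationAfterUnion} (already excluded by Lemma~\ref{lem:EverySplitVertexIsACenterofBrokenPlanarTour}) so that only tail-contractions of $P_i\cap P_{i+1}$ need to be reversed. Equally delicate is confirming that the two endpoint adjustments are independent---that the modifications near $u$ and near $v$ do not interfere---since this is what lets the single-operation analysis underlying Lemma~\ref{lem:MinimizationOfBrokenPlanarTours} suffice for the entire reconstruction.
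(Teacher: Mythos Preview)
Your proposal is correct and follows essentially the same approach as the paper's proof, invoking the same key lemmas (Lemma~\ref{lem:ContractionPreservesP2P3}, Lemma~\ref{lem:MinimizationOfBrokenPlanarTours}, Lemma~\ref{lem:NoContractibleC6}) to justify the recursion and the Uncontract phase. The one organizational difference is that the paper inducts on the number $k$ of contractible edges (so the base case $k=0$ is precisely the contraction-minimal case handled by $\algbpt$), whereas you induct on $n$ and treat contraction-minimality as a side case at every $n$; both schemes are valid since a single contraction decreases each of $k$ and $n$ by exactly one.
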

\begin{proof}
If $n=5$ the result follows from Lemma \ref{lem:unique-c5}. If $(G',C')$ is a "NO" instance, then $\algPTwoPThree$ returns ``NO'' in the validation phase. Therefore, we assume that $n \geq 6$, and that $(G',C')$ is a "YES" instance, i.e. it has at least one representation satisfying $(P3)$. We will show that for any pair $(G',C')$ satisfying $(P2)$, and a minimal representation $\repprime$ of $(G',C')$ that satisfies $(P3)$, the representation $\repbarprime$ returned by $\algPTwoPThree$ is a broken planar tour and
\[
\repbarprime \cong \repprime \textrm{~and~} \repbarprime \lesssim \repprime.
\]

We will prove by induction on the number $k$ of contractable edges of $(G',C')$. If $k=0$ then $(G',C')$ is not contractable, therefore satisfies $(P1)$. In this case the algorithm invokes $\algbpt$ and the claim follows from Theorem \ref{thm:BuildPlanarTourCorrect}.

Otherwise $k>0$. We assume that the claim holds for $k-1$ and prove that it holds for $k$. As $(G',C')$ contains at least one contractable edge, one such edge $\set{i,i+1}$ is chosen arbitrarily by the algorithm and contracted. The resulting pair $(G,C)=\contract{(G',C')}{\set{i,i+1}}$ has the following properties:
\begin{itemize}
\item{Satisfies $(P2),(P3)$.} (By Lemma \ref{lem:ContractionPreservesP2P3})
\item{The number of contractable edges is $k-1$.}
\item{$\abs{V(G)} \geq 6$.} This is because $\abs{V(G)}=\abs{V(G')}-1$ and $\abs{V(G')}>6$. Indeed, if $\abs{V(G')}=6$, we have $k=0$ by Lemma \ref{lem:NoContractableC6}.
\end{itemize}

Therefore, $(G,C)$ satisfies the assumptions of the inductive hypothesis. Let $\repprime$ be a minimal representation of $(G',C')$ satisfying $(P3)$. Then $\contract{\repprime}{P_i,P_{i+1}}$ is a representation of $(G,C)=\contract{(G',C')}{\set{i,i+1}}$. By the inductive hypothesis, $\repbar$ is a broken planar tour that satisfies
\[
\repbar \cong \contract{\repprime}{P_i,P_{i+1}} \textrm{~and~} \repbar \lesssim \contract{\repprime}{P_i,P_{i+1}}.
\]
In other words $\repbar$ is obtained from $\repprime$ by replacing the two paths $P_i, P_{i+1}$ with the path $P_i \cup P_{i+1}$, then applying a (possibly empty) sequence of minifying operations. By Lemma \ref{lem:MinimizationOfBrokenPlanarTours}, these minifying operations are $contract(e)$ for a tail $e$ of $P_i \cap P_{i+1}$. In the Uncontract phase, $\algPTwoPThree$ performs a reversal of these transformations. See Figure \ref{fig:AdjustEndpoint} for the following discussion. One endpoint of each one of $P_i$ and $P_{i+1}$ is an endpoint of $P_i \cup P_{i+1}$. Therefore, one needs to determine only one endpoint of each one of $P_i$ and $P_{i+1}$. First $P_i \cap P_{i+1}$ is duplicated and the so obtained paths are called $P_i$, $P_{i+1}$.

For $p \in \set{i,i+1}$, let $w$ be the endpoint of $P_p$ to be adjusted. $e_w$ denotes the tail of $P_p$ incident to $w$. We denote by $\xx_w$ the set of paths containing $e_w$ such that vertices of $G'$ corresponding to these paths are not adjacent to $p$. We denote by $\yy_w$ the set of paths intersecting $P_p$ only on $e_w$ and whose corresponding vertices in $G'$ are adjacent to $p$. If $\yy_w$ is empty (that is, every path that intersects $P_p$ also intersects $P_p \setminus \set{e_w}$), $e_w$ can be safely removed from $P_p$ without losing intersections. If $\xx_w$ is non-empty this removal is a necessary operation. The algorithm performs these tail removals as long as they are necessary and safe. If at the end of this loop, $\xx_w$ is empty then we are done. Otherwise $\xx_w$ and $\yy_w$ are non-empty, then $e_w$ can not be safely removed from $P_p$. In this case $\procAdjustEndpoint$ subdivides $e_w$ (thus reversing the minifying operation $contract(e)$) and removes one tail from $P_p$ and one tail from every path $X \in \xx_w$, so that $P_p$ does not intersect $X$ but still intersects every path $Y \in \yy_w$.
\end{proof}

\section{Pairs $(G,C)$ Satisfying $(P3)$}\label{sec:K4P4}
In the previous section we relaxed assumption $(P1)$. In this section we relax assumption $(P2)$, i.e. we allow sub-pairs isomorphic to $\kp$. In Section \ref{subsec:k4p4} we investigate the basic properties of the representations of such sub-pairs, and characterize the representations of pairs $(G,C)$ with at most $6$ vertices. In Section \ref{subsec:K4P4IntersectionsAgressiveContraction} we show that in bigger cycles such pairs can intersect only in a particular way, and we define the aggressive contraction operation that transforms a pair $(G'',C'')$ with a $\kp$ to a pair $(G',C')$ with one less vertex and at least one $\kp$ less. Using these results, in Section \ref{subsec:K4P4Algorithm} we present an algorithm that finds the unique minimal representation of a given pair $(G,C)$ satisfying $(P3)$ with more than $6$ vertices.

We denote a set of $4$ vertices inducing a sub-pair isomorphic to $\kp$ as an ordered quadruple where the first vertex is one of the endpoints of the the induced $P_4$, the second vertex is its neighbor and so on. The quadruple $(p,q,r,s)$ is a $\kp$ of $(G,G')$ whenever $\set{p,q,r,s}$ induces a sub-pair $\kp$ of $(G,G')$ and $\set{p,q}, \set{q,r}, \set{r,s} $ are the edges of $G'$. Clearly, $(p,q,r,s)=(s,r,q,p)$.

\subsection{Representations of $\kp$ and Small Cycles}\label{subsec:k4p4}
In this section we investigate representations of induced $\kp$ pairs and characterize the unique minimal representations of $(G,C_n)$ pairs containing an induced $\kp$.
We start with Lemma \ref{lem:K4P4Representation} that characterizes representations of $\kp$ pairs. Using this lemma we prove Theorem \ref{thm:C5WithK4P4} that presents the unique minimal representation of $(G,C_5)$ pairs containing a $\kp$. Together with Lemma \ref{lem:unique-c5} this completes the characterization of all the $(G,C_5)$ pairs because a $(G,C_5)$ satisfies $(P3)$ vacuously. We continue by proving Lemma \ref{lem:K4P4RepresentationsInBigCycles} more properties of minimal representations of induced $\kp$ sub-pairs of pairs $(G,C)$ with at least $6$ vertices. Using these results we show that a $(G,C_6)$ satisfying $(P3)$ does not contain sub-pairs isomorphic to $\kp$.

\newcommand{\allintr}{\bigcap \pp_K}

\begin{lemma}\label{lem:K4P4Representation}
Let $K=(i,i+1,i+2,i+3)$ be a $\kp$, $\rep$ be a representation of $K$, and $\allintr \defined P_i \cap P_{i+1} \cap P_{i+2} \cap P_{i+3}$. There is a path $core(K)$ of $T$ with endpoints $u,v$ such that:\\
(i)~$\split(P_i,P_{i+2})=\set{u}$, $\split(P_{i+1},P_{i+3})=\set{v}$, $P_{i+1}$ (resp. $P_{i+2}$) does not cross $u$ (resp. $v$).\\
(ii)~$\emptyset \neq \allintr \subseteq \left( P_{i+1} \cap P_{i+2} \right) \subseteq core(K)$. In particular $u \neq v$.\\
(iii)~At least one of $P_i, P_{i+3}$ crosses both endpoints of $core(K)$ and $\emptyset \neq \split(P_i, P_{i+3}) \subseteq \set{u,v}$.\\
(iv)~$P_{i+1} \cup P_{i+2}$ crosses both endpoints of $core(K)$.\\
(v)~The removal of the edges of $P_{i+1} \cup P_{i+2}$ from $T$ disconnects $P_i$ from $P_{i+3}$.
\end{lemma}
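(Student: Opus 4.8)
The plan is to organise everything around the \emph{spine} $Q := P_{i+1} \cup P_{i+2}$, which is a path because $P_{i+1} \sim P_{i+2}$. I would orient $Q$ from one endpoint $a$ to the other $b$, write each of $P_{i+1}, P_{i+2}$ as a subinterval $[a,p_1^+]$, $[p_2^-,b]$ of $Q$, and record that their overlap $O := P_{i+1} \cap P_{i+2}$ carries an edge with $p_2^- < p_1^+$ (they edge-intersect). By the symmetry $(i,i+1,i+2,i+3) = (i+3,i+2,i+1,i)$ it suffices to treat this generic configuration; the containment case is handled the same way. Since $P_i$ meets $P_{i+1}$, it meets $Q$, so $R := P_i \cap Q$ is a subpath $[c,u]$ of $Q$ carrying an edge, and $P_i$ can leave $Q$ only at the two ends $c,u$ of $R$.

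The heart of the argument is the local bookkeeping at these divergence points, from which (i) and (iii) both follow. First I would locate $u := \split(P_i,P_{i+2})$. A branch of $P_i \cup P_{i+2}$ can occur only at $c$ or at $u$; the non-splitting condition $P_i \sim P_{i+1}$ forces that any off-$Q$ divergence of $P_i$ at $c$ can happen only at the extreme endpoint of $Q$, which rules out the potential junction at $c$ and yields $p_1^+ \le u$. Hence $\split(P_i,P_{i+2})$ is the single vertex $u$ (nonempty because $P_i \nsim P_{i+2}$), and $p_1^+ \le u$ says $P_{i+1}$ does not cross $u$; the symmetric statement gives $v := \split(P_{i+1},P_{i+3})$ as a single vertex not crossed by $P_{i+2}$. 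This proves (i). The same bounds place $u$ strictly inside $P_{i+2}$ and $v$ strictly inside $P_{i+1}$, with $v < u$ on $Q$. I then define $core(K)$ to be the subpath $[v,u]$ of $Q$ (i.e.\ $p_T(u,v)$), which immediately gives $u \ne v$, the inclusion $O \subseteq core(K)$, and assertion (iv), since $u,v$ are both interior to $Q$. For the nonemptiness in (ii) I would invoke the Helly property of subtrees of a tree: the four paths pairwise vertex-intersect, hence share a common vertex, which necessarily lies in $O$; combined with the trivial inclusion $\bigcap \pp_K \subseteq P_{i+1}\cap P_{i+2}$ this settles (ii).

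For (iii) I would compute $\split(P_i,P_{i+3})$ explicitly. The key preliminary facts are that $P_{i+3}$ cannot enter the off-spine subtree hanging at $u$ (otherwise it would split from $P_{i+2}$, which crosses $u$), and symmetrically $P_i$ cannot enter the subtree hanging at $v$; however, either path may still run \emph{along} $Q$ through the other's branch vertex. Consequently $P_i \cap P_{i+3}$ is again a subinterval of $Q$, and a branch of $P_i \cup P_{i+3}$ occurs exactly at $v$ (precisely when $P_i$ reaches strictly past $v$, i.e.\ crosses $v$) or at $u$ (when $P_{i+3}$ crosses $u$). This simultaneously shows $\split(P_i,P_{i+3}) \subseteq \set{u,v}$ and that the split is nonempty if and only if at least one of $P_i,P_{i+3}$ crosses both $u$ and $v$; since $P_i \nsim P_{i+3}$ the split is nonempty, giving both parts of (iii). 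I expect the main obstacle to be getting these diverge/cross conditions exactly right: the crux is distinguishing ``runs along $Q$ through the vertex'' (harmless, no split) from ``branches off $Q$ at the vertex'' (impossible next to a non-splitting neighbour), and it is this distinction that makes the count land on $\set{u,v}$ instead of spuriously producing extra junctions.

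Finally, (v) follows from the location of the private arms. After deleting $E(Q)$, the part of $P_i$ off the spine lives in the component of $T - E(Q)$ attached at $u$, the part of $P_{i+3}$ off the spine lives in the component attached at $v$, and these components are distinct because $u \ne v$ are distinct vertices of the path $Q$. Since in the tree $T$ the unique route between these two off-spine parts must traverse $core(K) \subseteq Q$, removing $E(Q)$ separates $P_i$ from $P_{i+3}$, which establishes (v).
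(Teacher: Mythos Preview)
Your proof is correct and takes a genuinely different route from the paper. The paper's argument is anchored on the Golumbic--Jamison clique theorem: since a claw-clique on four paths cannot realise $P_4$ as its $\enpt$ graph (the non-splitting relation within a claw-clique groups the paths into disjoint cliques), the $K_4$ must be an edge-clique, giving a common edge $e \in \allintr$; the paper then shows $u$ and $v$ lie on opposite sides of $e$, which simultaneously yields $u \neq v$ and the inclusion $\allintr \subseteq core(K)$. Parts (iii)--(v) are handled by naming the four ``exit edges'' $e_i, e_{i+1}, e_{i+2}, e_{i+3}$ at $u$ and $v$ and tracking which path may contain which. Your approach instead fixes the spine $Q = P_{i+1} \cup P_{i+2}$, coordinatises everything as subintervals of $Q$, and reads off all five claims from the chain $v \le p_2^- < p_1^+ \le u$ together with a local analysis of where $P_i$ and $P_{i+3}$ may branch off $Q$. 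This is more self-contained (no appeal to the clique classification) and makes (iv) and the strict ordering $v < u$ essentially free, at the cost of more bookkeeping in (iii).

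Two small remarks. First, your Helly step only produces a common \emph{vertex} in $\allintr$; the paper's edge-clique argument yields a common \emph{edge}, which is what the downstream lemma on minimal representations actually uses. Your argument proves the inclusion chain as literally stated, but it is worth noting that the $P_4$ structure forces the edge-clique (hence a common edge) anyway. Second, your claim in (v) that ``the part of $P_i$ off the spine lives in the component attached at $u$'' is slightly too strong: $P_i$ may also carry an arm at the endpoint $a$ of $Q$. This is harmless, since $a,u,v,b$ are four distinct vertices of $Q$ and hence lie in four distinct components of $T \setminus E(Q)$, but it should be said. Incidentally, the ``containment case'' you flag is vacuous: if $P_{i+1} \subseteq P_{i+2}$ then $P_{i+3} \cup P_{i+1} \subseteq P_{i+3} \cup P_{i+2}$, and the latter is a path because $P_{i+3} \sim P_{i+2}$, contradicting $P_{i+3} \nsim P_{i+1}$.
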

\begin{proof}
(i) Assume, by way of contradiction, that $\abs{\split(P_i, P_{i+2})}=\set{w,w'}$  where $w$ and $w'$ are distinct vertices of $T$. As $P_{i+1} \sim P_i$ and $P_{i+1} \sim P_{i+2}$ we conclude that $P_{i+1} \subseteq p_T(w,w')$. Since $P_{i+3} \nsim P_{i+1}$, $P_{i+3}$ splits from $P_{i+1}$ in at least one vertex $w''$ that is an intermediate vertex of $p_T(w,w')$. Then $P_{i+3}$ splits from $P_{i+2}$ at $w''$ contradicting the fact that $\set{i+2,i+3}$ is an $\enpt$ edge. Therefore, $\abs{\split(P_i, P_{i+2})}=1$ and by symmetry, $\abs{\split(P_{i+1},P_{i+3})}=1$. Let $\split(P_i, P_{i+2}) = \set{u}$ and $\split(P_{i+1}, P_{i+3}) = \set{v}$. We define $core(K) = p_T(u,v)$. For the rest of the claim, assume by contradiction that $P_{i+1}$ crosses $u$. Then either $P_{i+1} \nsim P_i$ or $P_{i+1} \nsim P_{i+2}$, contradicting the the fact that $\set{i,i+1}$ and $\set{i+1,i+2}$ are $\enpt$ edges.

At this point we can uniquely define the following edges that will be used in the rest of the proof: $e_i$ (resp. $e_{i+2}$) is the edge of $P_i \setminus P_{i+2}$ (resp. $P_{i+2} \setminus P_i$) incident to $\split(P_i,P_{i+2})$ ($=\set{u}$). We define $e_{i+1}$ and $e_{i+3}$ similarly. Note that $e_i \neq e_{i+2}$ and $e_{i+1} \neq e_{i+3}$. 

(ii) A claw-clique of size $4$ contains exactly one $\enpt$ edge, however a path isomorphic to $P_4$ contains three edges. Therefore, the representation of $K_4$ is an edge-clique. Let $e$ be an edge defining this edge-clique, i.e. $e \in \allintr$. The removal of $e$ from $T$ disconnects it into two subtrees. In order to prove that $\allintr \subseteq core(K)$ it suffices to show that $u$ and $v$ are in different subtrees. Assume, by way of contradiction that $u,v$ are in the same subtree $T_r$ with root $r$ where $r$ is an endpoint of $e$. Let $r'$ be the least common ancestor of $u,v$ in $T_r$ (possibly $u=v$ in which case $r'=u=v$). All the $4$ paths contain $e$ and cross $r'$ (so that each one crosses at least one of $u,v$), i.e. they ``enter'' $r'$ from the same edge $e'$ (where possibly $r'=r$ and $e'=e$). If $r' \notin \set{u,v}$ then as $P_{i+1}$ crosses $v$ and $P_{i+2}$ crosses $u$, $r' \in \split(P_{i+1}, P_{i+2})$, contradicting $P_{i+1} \sim P_{i+2}$. Therefore, we can assume without loss of generality that $r'=u$. Then the edges $e_i$ and $e_{i+2}$ are incident to $r'$. Then $P_{i+1}$ (resp. $P_{i+3}$) contains $e_i$ (resp. $e_{i+2}$)  because $P_{i+1} \sim P_i$ (resp. $P_{i+3} \sim P_{i+2}$). Therefore, $r' \in \split(P_{i+1}, P_{i+2})$, contradicting $P_{i+1} \sim P_{i+2}$. Therefore, $u$ and $v$ are in different subtrees, i.e. $e \in p_T(u,v)=core(K)$. Since $e$ is chosen as an arbitrary edge defining the edge-clique this implies that $\allintr \subseteq core(K)$. It remains to prove that $P_{i+1} \cap P_{i+2} \subseteq core(K)$. For this purpose, it is sufficient to show that both of $P_{i+1}$ and $P_{i+2}$ have one endpoint in $core(K)$. Indeed, assume without loss of generality that $P_{i+1}$ does not have an endpoint in $core(K)$. Then $P_{i+1}$ crosses $u$, a contradiction.

Consult Figure \ref{fig:K4P4Representation} for the rest of the proof.

(iii) By the above discussion, $u$ (resp. $v$) is an intermediate vertex of $P_i$ and $P_{i+2}$ (resp. $P_{i+1}$ and $P_{i+3}$), and they all intersect in at least one edge $e \in core(K)$. In order to see the first part of the claim assume, by way of contradiction, that both of $P_i$ and $P_{i+3}$ have an endpoint in $core(K)$, in this case $\allintr$ is between these two endpoints. Therefore, $P_i \sim P_{i+3}$, a contradiction.

We now proceed to show the rest of the claim: Let $w \in \split(P_i,P_{i+3})$. $e_i \notin P_{i+3}$ because otherwise $P_{i+3} \nsim P_{i+2}$, and by symmetry $e_{i+3} \notin P_i$. Therefore, $w$ is on $core(K)$. On the other hand $w$ is not an intermediate vertex of $core(K)$. Indeed, consider the two sub-paths obtained by removing $e$ from $core(K)$. If $w$ is an intermediate vertex of $core(K)$, then at least one of $P_{i+3} \nsim P_{i+2}$, $P_i \nsim P_{i+1}$ holds, depending on the sub-path $w$ belongs. We conclude $w \in \set{u,v}$. Together with $P_i \nsim P_{i+3}$, this implies the claim. Note that $\split(P_i, P_{i+3})=\set{u,v}$ if and only if both of $P_i$ and $P_{i+3}$ cross both endpoints $u,v$ of $core(K)$.

(iv) As $\set{i+1,i+2}$ is an $\enpt$ edge, $Q \defined P_{i+1} \cup P_{i+2}$ is a path. Moreover, $e_{i+1} \in P_{i+1}$ and $e_{i+2} \in P_{i+2}$, therefore $\set{e_{i+1},e_{i+2}} \subseteq Q$, implying the claim.

(v) It suffices to show that $core(K)$ separates $P_i$ and $P_{i+3}$. Suppose that after the removal of $core(K)$ the two paths still intersect. This is possible only if $e_{i+3} \in P_i$ or $e_i \in P_{i+3}$. Assume without loss of generality that $e_{i+3} \in P_i$. Then $P_i \nsim P_{i+1}$, a contradiction.
\end{proof}

\begin{figure}[htbp]
\centering
\includegraphics[width=\textwidth]{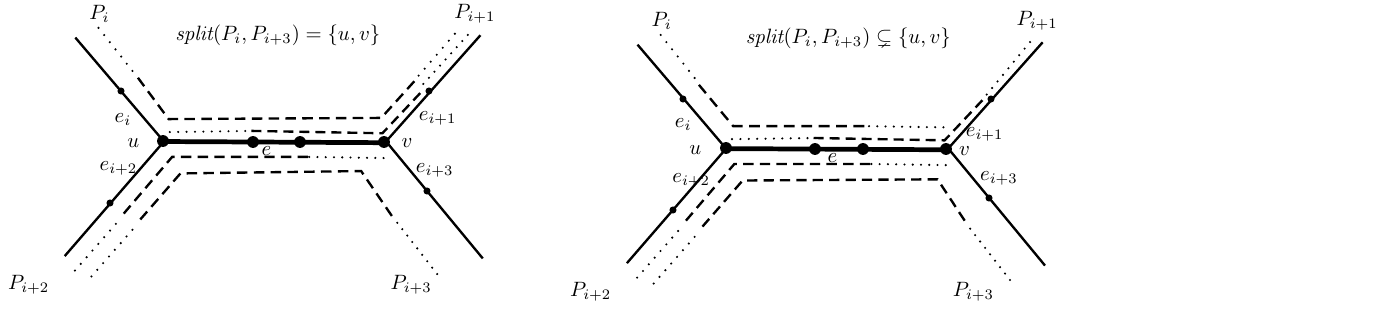}
\caption{Representations of $\kp$ pairs where $\split(P_i,P_i+3) = \set{u,v}$ and $\split(P_i,P_i+3) \subsetneq  \set{u,v}$, respectively.}\label{fig:K4P4Representation}
\end{figure}

Pairs $(G,C_5)$ with induced $\kp$ pairs turn out to be different than bigger cycles. One difference is that a $C_5$ contains a vertex incident to two distinct vertices of a $P_4$ whereas this clearly does not hold for a bigger cycle. Therefore, we analyse this case separately. We recall that a pair $(G,C_5)$ satisfies $(P3)$ vacuously, and that in Section \ref{subsec:c5c6} we found the unique minimal representation of a pair $(G,C_5)$ that satisfies $(P2)$. We now investigate the representation of a pair $(G,C_5)$ that does not satisfy $(P2)$.

\begin{theorem}\label{thm:C5WithK4P4}
If $(G,C_5)$ does not satisfy $(P2)$ then (i) $G$ is isomorphic to the graph depicted in Figure \ref{fig:C5WithK4P4}, and (ii) $(G,C_5)$ has a unique minimal representation also depicted in Figure \ref{fig:C5WithK4P4}.
\end{theorem}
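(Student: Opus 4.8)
The plan is to build every representation of $(G,C_5)$ from the representation of the $\kp$ it is forced to contain, and then to show that there is essentially only one way to add the fifth path, which pins down both $G$ and the representation. I would begin by locating the $\kp$. Since we must have $\enptgp=C_5$, the blue edges are exactly the five cycle edges, so the induced $P_4$ of any $\kp$ consists of three blue cycle edges; as deleting one vertex of $C_5$ leaves a path on four vertices, the $\kp$ occupies four consecutive cycle vertices. Relabelling, take $K=(0,1,2,3)$ with blue path $0-1-2-3$ and, since $G[\set{0,1,2,3}]=K_4$ while only the path edges are blue, red chords $\set{0,2},\set{1,3},\set{0,3}$; the fifth vertex $4$ contributes the blue edges $\set{3,4},\set{4,0}$. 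The only undetermined edges are the chords $\set{1,4}$ and $\set{2,4}$, which, being non-cycle edges, cannot be blue, so each is red or absent. The whole problem reduces to deciding their status and placing $P_4$.

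Next I would apply Lemma \ref{lem:K4P4Representation} to $K$: in any representation there is a $core(K)=p_T(u,v)$ with $\split(P_0,P_2)=\set{u}$, $\split(P_1,P_3)=\set{v}$, and by part (iii) at least one of $P_0,P_3$ crosses both endpoints of $core(K)$; assume $P_0$ does. I would first rule out the \emph{symmetric} situation where $P_3$ also crosses both $u$ and $v$. There $P_0\cap P_3=core(K)$; since $\set{0,4}$ and $\set{3,4}$ are blue we have $P_4\sim P_0$ and $P_4\sim P_3$, and $P_4$ can cross neither $u$ nor $v$ (crossing $v$, say, would make $P_4$ split from exactly one of $P_0,P_3$), so $P_4\subseteq core(K)$. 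But $P_1\cup P_2$ covers all of $core(K)$ (they overlap because $\set{1,2}$ is blue), so $P_4$ shares an edge with $P_1$ or $P_2$ and, lying on the linear core, is non-splitting with it, producing a blue chord $\set{1,4}$ or $\set{2,4}$, which is impossible. Hence $P_3$ crosses only $v$.

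In this \emph{asymmetric} case I would determine the chords and place $P_4$. First, $\set{2,4}$ must be a red edge: it cannot be blue, and if it were absent then (by the same linear-core reasoning) $\set{1,4}$ would be forced blue, both impossible. A red $\set{2,4}$ means $P_4$ splits from $P_2$, which on the linear core can happen only at $u$; combined with $P_4\sim P_0$ this forces $P_4$ to cross $u$ along the tail of $P_0$ and then run into the core, stopping before $v$. Consequently $P_4$ is edge-disjoint from $P_1$, so $\set{1,4}$ is absent: it cannot be non-splitting with $P_1$ (that colours it blue) and cannot split from $P_1$ either, since the only split site is $v$ and crossing $v$ destroys the blue edge $\set{0,4}$. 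Thus $G\cong K_5-e$ with red edges $\set{0,2},\set{0,3},\set{1,3},\set{2,4}$ and $\set{1,4}$ absent, proving (i) and matching Figure \ref{fig:C5WithK4P4}; consistency is confirmed via Lemma \ref{lem:NoBlueHole}.

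For part (ii), the argument above fixes the combinatorial type of \emph{every} representation, so the unique minimal one is obtained by trimming all removable tails (each leaf becomes the endpoint of a single path) and shortening $core(K)$ to the fewest edges compatible with the overlap pattern, namely the requirements that $P_4\cap P_3$ and $P_1\cap P_2$ be nonempty while $P_4\cap P_1=\emptyset$; this yields exactly the representation of Figure \ref{fig:C5WithK4P4}. Since every cycle edge lies in a $BBR$ triangle, the pair is contraction-minimal, so no contraction can reduce it further and the minimal representation is unique. The main obstacle is the content of the two middle paragraphs: proving rigorously that $P_4$ admits a unique position relative to the $\kp$ core, that is, eliminating the symmetric case and the red-$\set{1,4}$ option without appeal to a picture, and then showing that the trimming and core-shortening leave no residual freedom so that the minimal representation is genuinely unique.
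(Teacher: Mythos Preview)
Your proposal is essentially correct and follows the same route as the paper: locate the forced $\kp$, invoke Lemma~\ref{lem:K4P4Representation} to set up $core(K)$, eliminate the symmetric case where both $P_0$ and $P_3$ cross both ends of the core, and in the asymmetric case determine the two remaining chords by analyzing where $P_4$ can sit relative to the core. The only visible difference is a labeling convention: the paper takes $P_3$ to be the path crossing both endpoints and obtains $\set{1,4}$ red, $\set{2,4}$ absent; you take $P_0$ to be that path and obtain the mirror conclusion $\set{2,4}$ red, $\set{1,4}$ absent. The resulting graphs are isomorphic under $0\leftrightarrow 3$, $1\leftrightarrow 2$, so part~(i) is the same statement.

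Two small points to tighten. First, in your step ``if $\set{2,4}$ were absent then $\set{1,4}$ would be forced blue'': this is only immediate once you have argued that $P_4$ cannot cross $u$ in that sub-case (if it did, $P_4$ would split from $P_2$ at $u$, making $\set{2,4}$ red rather than absent). The paper handles the analogous point by first showing $P_4$ does not cross the relevant endpoint and then invoking $core(K)\subseteq P_1\cup P_2$. Second, your closing sentence conflates two notions: contraction-minimality of the \emph{pair} $(G,C)$ (property~$(P1)$) is about contracting $\enpt$ edges of the pair and is unrelated to the minifying operations $contract(e)$, $tr(P,e)$ on a \emph{representation}; it does not by itself yield uniqueness of the minimal representation. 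Your actual uniqueness argument --- trimming tails and shortening the core to the minimum length compatible with the required overlaps --- is exactly what the paper does, and that is what establishes~(ii).
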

\begin{proof}
Assume without loss of generality $K=(0,1,2,3)$ is a $\kp$ of $(G,C_5)$, and let $core(K) = p_T(u,v)$. If $\split(P_0,P_3)=\set{u,v}$ then $P_4 \subseteq core(K)$, implying that $P_4 \sim P_1$ or $P_4 \sim P_2$, i.e. at least one of $\set{1,4}$ or $\set{2,4}$ is an $\enpt$ edge, a contradiction.

Assume without loss of generality $\split(P_0,P_3)=\set{u}$, and that $P_3$ crosses both $u$ and $v$. Then $P_0$ has one endpoint $u'$ in $core(K)$, and $P_0 \cap P_3 = p_T(u,u')$.

As $P_4 \sim P_0$ and $P_4 \sim P_3$, we have $P_4 \sim p_T(u,u')$ and $P_4$ does cross $u$. Therefore, $P_4$ intersects $core(K)$. By Lemma \ref{lem:K4P4Representation} (iv) $core(K) \subseteq P_1 \cup P_2$. We conclude that $P_4 \cap (P_1 \cup P_2) \neq \emptyset$, i.e. $P_4 \cap P_1 \neq \emptyset$ or $P_4 \cap P_2 \neq \emptyset$. As $\set{4,1}$ and $\set{4,2}$ are not $\enpt$ edges, we have that $P_4 \nsim P_1$ or $P_4 \nsim P_2$. On the other hand $P_4$ does not cross $u$ and by Lemma \ref{lem:K4P4Representation} (i), $P_2$ does not cross $v$, thus $\split(P_2,P_4)=\emptyset$. Therefore, $P_4 \nsim P_1$ and $P_4 \parallel P_2$. Moreover, $\split(P_4,P_1)=\set{v}$, i.e. $P_4$ crosses $v$. Therefore, one endpoint $u''$ of $P_4$ is in $p_T(u,u')$, and must be between $u'$ and the endpoint of $P_1$ in $core(K)$.

It is easy to see that the path $\allintr$ can be contracted to one edge $e$ without affecting the relationships between the paths. Similarly, any edge between $u$ and $e$, and any edge between $e$ and $u''$ can be contracted. The path $p_T(u',u'')$ can be contracted to one edge, and the path $p_T(u',v)$ can be contracted to a single vertex $v$. This leads to the representation depicted in Figure \ref{fig:C5WithK4P4}.
\end{proof}

\begin{figure}[htbp]
\centering
\includegraphics{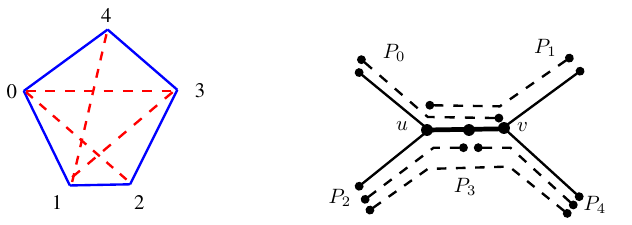}
\caption{The unique $(G,C_5)$ pair that does not satisfy $(P2)$ and its unique minimal representation.}\label{fig:C5WithK4P4}
\end{figure}

We now observe a property of the representations of $(G,C_5)$ in order to demonstrate the first family of non-$\enpt$ graphs.

\begin{lemma}
$G+C_5$ is not an $\enpt$ graph whenever $G$ is not a complete graph.
\end{lemma}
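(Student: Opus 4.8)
The plan is to reduce the statement to a single forbidden pair and then exploit the $C_5$-characterizations obtained above. I would first use the hereditary property of pairs: if $G$ is not complete it has two non-adjacent vertices $a,b$, and the sub-pair induced by $\set{a,b}\cup V(C_5)$ is a representation of $\overline{K_2}+C_5$ (the join of the blue rim with two non-adjacent apices). Hence it suffices to show $\overline{K_2}+C_5$ is not an $\enpt$ graph. So I assume $\rep$ is such a representation, write $0,\dots,4$ for the rim (a blue $C_5$) and $a,b$ for the apices, each joined to every rim vertex by a blue edge while $\set{a,b}$ is a non-edge of the $\enpt$ graph.

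Next I would pin down the $\ept$ graph $H$ on the rim. The sub-pair induced by $\set{0,1,2,3,4}$ is a blue $C_5$, so by Lemma \ref{lem:unique-c5} and Theorem \ref{thm:C5WithK4P4} the graph $H$ is either $C_5$ together with the two chords at a common vertex (the $(P2)$ case) or $K_5$ minus one diagonal (the non-$(P2)$ case). In particular $H\neq K_5$, so some rim diagonal $\set{i,i+2}$ is a non-edge of $\ept$. Now $\set{a,i,b,i+2}$ induces a blue $C_4$ (the edges $ai,ib,b(i{+}2),(i{+}2)a$ are blue, while $\set{a,b}$ and $\set{i,i+2}$ are $\enpt$ non-edges). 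If this $C_4$ were chordless in $\ept$ it would be a hole with blue edges, contradicting Lemma \ref{lem:NoBlueHole}; its only possible chords are $\set{a,b}$ and $\set{i,i+2}$, and the latter is a non-edge of $\ept$. Therefore $\set{a,b}\in E(\ept)$ and, being non-blue, $P_a \nsim P_b$: the apex paths split.

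The core is the geometry of this split. Let $Q=P_a\cap P_b$; its endpoints are the split vertices of the pair. At a split vertex $c$ the two paths leave $Q$ through distinct ``arm'' edges $e_a\in P_a$, $e_b\in P_b$. I would first show that no rim path may cross $c$ nor enter an arm: a rim path using $e_a$ (resp. $e_b$) together with the shared edge of $Q$ at $c$ would create a degree-$3$ junction in its union with $P_b$ (resp. $P_a$), contradicting $P_i\sim P_b$ (resp. $P_i\sim P_a$); and a rim path missing $Q$ entirely would miss one of $P_a,P_b$. If $P_a,P_b$ split at both ends of $Q$, the rim paths are confined to $Q$, their intersection graph is an interval graph with no possible splits, and $C_5$ is already excluded. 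Otherwise $Q$ has one free end $c$: deleting the arms, every rim path lives in the tree obtained from the spine $Q$ by hanging a subtree $B$ at its far end $d$, is non-splitting with $Q$, and is anchored at $c$. Being non-splitting with the spine forces each $P_i$ to be an interval $[\ell_i,r_i]$ along $Q$ with a tail into $B$ only when $r_i$ reaches $d$; so the rim intersection graph is an interval graph equal to $H$, the tailed rim paths form a clique, and the $\enpt$ graph differs from this interval graph only inside that clique, where non-splitting means tail-nesting.

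Finally I would extract the contradiction. Tail-nesting is the comparability relation of a tree order, hence a trivially perfect (so $P_4$-free) graph, whereas any four vertices of $C_5$ induce a $P_4$; thus at most three rim paths are tailed. But the chords of $C_5$ present in $H$ must all lie inside the tailed clique: in the non-$(P2)$ case the four such chords already span all five rim vertices, forcing five tailed paths, impossible; in the $(P2)$ case (chords at vertex $1$) they force the tailed set to be $\set{1,3,4}$, after which the interval constraints from the edges and non-edges of $H$ chain into the impossible inequality $r_0<\ell_3\le r_2<\ell_4\le r_0$. Either way we reach a contradiction, so $\overline{K_2}+C_5$, and therefore $G+C_5$ for non-complete $G$, is not an $\enpt$ graph. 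I expect this last geometric step — establishing the anchored-interval structure and deriving the contradiction uniformly for both possibilities for $H$ — to be the main obstacle; the reduction and the invocation of the $C_5$-characterization are routine.
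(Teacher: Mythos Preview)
Your argument is essentially correct, but it takes a substantially different — and considerably longer — route than the paper's.

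The paper argues \emph{directly about the apex paths}, not about the rim. Having established (via Lemma~\ref{lem:unique-c5} and Theorem~\ref{thm:C5WithK4P4}) that the sub-representation on the five rim vertices has, in either case, two distinguished split vertices $u,v$, it observes that any path $P_i$ with $i\in V(G)$ which is $\enpt$-adjacent to all five rim vertices must be a sub-path of $p_T(u,v)$ and must contain one specific edge $e$ of $p_T(u,v)$. (The reason is immediate from the picture: at $u$ the two rim paths crossing $u$ use two \emph{distinct} non-$p_T(u,v)$ edges, so $P_i$ cannot cross $u$; symmetrically at $v$; and since one rim path covers $p_T(u,v)$, $P_i$ cannot branch off it; finally two disjoint rim paths sitting inside $p_T(u,v)$ pin down $e$.) It follows at once that any two apex paths intersect and are non-splitting, so $G$ is complete.

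Your route instead passes to $\overline{K_2}+C_5$, forces the two apex paths to split via Lemma~\ref{lem:NoBlueHole}, and then reconstructs an interval/tail-nesting structure for the rim paths to obtain a contradiction by case analysis on $H$. This works, but the geometric step you flagged as the main obstacle (confining the rim paths to $Q$ plus tails, identifying the tailed clique, and chasing the interval inequalities) is genuinely more intricate than what the paper does. In effect, you are proving indirectly that no two apex paths can split, where the paper simply shows that every apex path lands in a common interval through a common edge. Your argument has the minor advantage of being phrased for an arbitrary representation throughout, whereas the paper's phrasing leans on the \emph{minimal} representations of $(G',C_5)$; but the paper's claims (a) and (b) hold verbatim in any representation once $u,v$ are taken to be the relevant split vertices, so this is only a difference of exposition. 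A couple of your labels (``free end $c$'', ``anchored at $c$'') are a bit tangled, and you do not explicitly note that the tails must initially follow $P_a$ — but neither affects the validity of your final contradictions.
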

\begin{proof}
A pair $(G',C_5)$ satisfies $(P3)$ vacuously. If $(G',C_5)$ satisfies $(P2)$ then by Lemma \ref{lem:unique-c5}, its unique minimal representation is the one depicted in Figure \ref{fig:eptn-c5}. Otherwise, by Theorem \ref{thm:C5WithK4P4}, its unique minimal representation is the one depicted in Figure \ref{fig:C5WithK4P4}.
Let $i \in V(G)$. $i$ is adjacent to every vertex of $C_5$. We observe that in both cases above a) $P_i$ is a sub-path of $p_T(u,v)$, and b) there is a specific edge $e$ of $p_T(u,v)$ that is also in $P_i$. Therefore, for any two vertices $i,j \in V(G)$ $P_i$ and $P_j$ are intersecting sub-paths of $p_T(u,v)$, thus $P_i \sim P_j$. We conclude that $G$ is a complete graph.
\end{proof}

We now extend Lemma \ref{lem:K4P4Representation}. As opposed to Lemma \ref{lem:K4P4Representation} that investigates the structure of a $\kp$ regardless of any specific context, the following lemma provides us with more properties of minimal representations satisfying $(P3)$ of pairs $(G,C)$.
\begin{lemma}\label{lem:K4P4RepresentationsInBigCycles}
Let $K=(i,i+1,i+2,i+3)$ be a $\kp$ of a pair $(G,C)$ satisfying $(P3)$ on at least $6$ vertices. Let $\rep$ be a minimal representation of $(G,C)$ and let $\pp_K=\set{P_i: i\in K}$.\\
(i) $\allintr=\set{e}$ for some edge $e$ which is used exclusively by the paths of $\pp_K$, i.e. $e \in P_j \Rightarrow j \in K$.\\
(ii) $e$ divides $T$ into two subtrees $T_1, T_2$ such that $T_1$ is a cherry of $\left< T,\pp_K \right>$ with center $w_1$. We denote this subtree as $cherry(K)$.\\
(iii) $\split(P_i, P_{i+3})= \set{w_2} \subseteq V(T_2)$.\\
(iv) $N_G(j)=K$ if and only if $\split(P_j,P_i) \cup \split(P_j,P_{i+3})=\set{w_1}$. The unique vertex $j$ satisfying this condition is one of $i+1,i+2$.
\end{lemma}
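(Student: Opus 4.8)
The plan is to build everything on the skeleton already supplied by Lemma~\ref{lem:K4P4Representation}: it gives $core(K)=p_T(u,v)$ with $\split(P_i,P_{i+2})=\set{u}$, $\split(P_{i+1},P_{i+3})=\set{v}$, the divergent edges $e_i,e_{i+1},e_{i+2},e_{i+3}$, and the facts $\emptyset\neq\allintr\subseteq P_{i+1}\cap P_{i+2}\subseteq core(K)$ and $\emptyset\neq\split(P_i,P_{i+3})\subseteq\set{u,v}$. The new content is to pin down, using minimality, the assumption $(P3)$, and the cycle $C$, exactly how $\pp_K$ sits inside $T$ and how its four paths meet the outside paths.

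Part (i), exclusivity, is the heart of the argument and the place where $(P3)$ enters. I would suppose some $P_j$ with $j\notin K$ contains an edge $e_0\in\allintr$. Since the $\enpt$ graph is the Hamiltonian cycle $C$, the only non-splitting neighbours of $j$ are $j-1,j+1$; and because $K$ is four consecutive cycle vertices and $n\geq 6$, an outside $j$ is cycle-adjacent to at most one vertex of $K$ and never to $i+1$ or $i+2$. Hence $\set{j,i+1}$ and $\set{j,i+2}$ are red, and at least one of $\set{j,i},\set{j,i+3}$ is red. The key tool is the elementary pie fact: in a claw-clique on three paths each claw-edge lies on exactly two of them, so if two of the paths share an edge incident to the centre, the third avoids it. Now if $\set{j,i}$ is red then $\set{j,i,i+2}$ is a red triangle, hence a pie by $(P3)$; its centre must be $u=\split(P_i,P_{i+2})$, and since $P_i,P_{i+2}$ share the core-edge at $u$, the third path $P_j$ avoids it. But $P_j$ crosses the centre $u$ and contains $e_0\in core(K)$, so $P_j$ must contain the whole sub-path of $core(K)$ from $u$ to $e_0$, including that core-edge --- a contradiction. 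The symmetric argument at $v$ settles the case $\set{j,i+3}$ red, and since $n\geq 6$ forbids $j$ from being both $i-1$ and $i+4$, one case always applies. Granting exclusivity, minimality forces $\abs{\allintr}=1$: a second common edge $e'\neq e$ would let $contract(e)$ preserve all intersections (the four paths still share $e'$) and all splits (its endpoints are not split vertices), yielding a strictly smaller equivalent representation, contradicting minimality.

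For parts (ii)--(iii) the plan is to read off the local geometry from minimality. Writing $e=\allintr$, the paths $P_{i+1},P_{i+3}$ cannot cross $u$ without splitting from $P_i$ or $P_{i+2}$ (using $\split(P_i,P_{i+3})\subseteq\set{u,v}$ and the $\enpt$ edges $\set{i,i+1},\set{i+2,i+3}$), so they terminate at $u$, while $P_i,P_{i+2}$ cross $u$ and, by minimality, run to leaves along single edges $e_i,e_{i+2}$. This identifies the component $T_1$ of $T\setminus e$ on the $u$-side as a cherry of $\rep$ with centre $w_1:=u$, giving (ii). For (iii), neither $P_i$ nor $P_{i+3}$ splits at $u$, so Lemma~\ref{lem:K4P4Representation}(iii) leaves only $\split(P_i,P_{i+3})=\set{v}$; putting $w_2:=v\in V(T_2)$ gives the claim. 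The work here is the trimming bookkeeping showing the leaf edges are single and that $P_{i+1},P_{i+3}$ end exactly at $w_1$; this uses exclusivity to guarantee the cherry carries no outside path.

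Part (iv) is a computation together with a confinement argument. A direct evaluation gives $\split(P_{i+2},P_i)\cup\split(P_{i+2},P_{i+3})=\set{w_1}\cup\emptyset=\set{w_1}$, while for $i+1$ the analogous union is $\emptyset\cup\split(P_{i+1},P_{i+3})=\set{w_2}\neq\set{w_1}$ (as $w_1\neq w_2$); so among the two middle vertices only $i+2$ meets the condition. To align this with $N_G(j)=K$, I would show minimality trims $P_{i+2}$ to $e_{i+2}\cup e$, which lies entirely inside $cherry(K)\cup\set{e}$ --- a region carrying only paths of $\pp_K$ by (i)--(ii) --- so $P_{i+2}$ meets no outside path and $N_G(i+2)=K$; whereas $i$ and $i+3$ carry the external cycle-neighbours $i-1,i+4$, so $N_G(i),N_G(i+3)\neq K$. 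The genuinely delicate point, which I expect to be the main obstacle after exclusivity, is $N_G(i+1)\neq K$: here $P_{i+1}$ crosses $w_2$ and runs together with $P_i$ beyond $v$ (since $\set{i,i+1}$ is blue while $\split(P_i,P_{i+3})=\set{v}$ forces $P_i$ to leave $v$ along $e_{i+1}$), and I would argue from the broken-planar-tour structure of minimal hole representations that this shared excursion must reach the region where $P_i$ meets the outside path $P_{i-1}$, forcing an external neighbour of $i+1$. Combining, $N_G(j)=K$ holds exactly for $j=i+2$, the unique vertex with split-union $\set{w_1}$, which yields the equivalence and uniqueness.
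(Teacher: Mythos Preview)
Your argument for (i) is correct though more elaborate than needed: once you have a red triangle $\{j,i,i+2\}$ (or $\{j,i+1,i+3\}$) all of whose paths contain the common edge $e_0$, this is already a red \emph{edge}-clique, and $(P3)$ is violated directly; there is no need to analyse the pie geometry at $u$.

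The substantive gap is in (ii)--(iv), where you silently fix $w_1=u$. Your justification that ``$P_{i+3}$ cannot cross $u$ without splitting from $P_i$ or $P_{i+2}$'' does not work: $P_{i+3}$ may cross $u$ along $e_{i+2}$, which creates no split with $P_{i+2}$ (so the $\enpt$ edge $\{i+2,i+3\}$ is preserved), and the split it creates with $P_i$ at $u$ is perfectly admissible since $\{i,i+3\}$ is red. The paper therefore does \emph{not} determine a priori which endpoint of $core(K)$ lies in $T_1$; instead it observes that $G\setminus K$ is connected, so the union of the outside paths is a subtree $T'$ lying entirely in one component of $T\setminus e$ (by exclusivity of $e$), and \emph{defines} $T_2$ to be that side and $w_i$ to be the endpoint of $core(K)$ in $T_i$. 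Only then does minimality shrink $T_1$ to a cherry with centre $w_1$.

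This orientation error propagates. In (iii) the paper must still rule out two cases: $\split(P_i,P_{i+3})=\{w_1,w_2\}$ is excluded by minimality (one can trim $P_i$ inside $T_1$), and $\split(P_i,P_{i+3})=\{w_1\}$ is excluded by $(P3)$ (it forces a red edge-clique $\{i,i+2,i+4\}$ on the last edge of $P_{i+3}$). You skip both. In (iv) your computation, premised on $w_1=u$, yields that $i+2$ is always the isolated vertex; but the lemma asserts only that the isolated vertex is \emph{one of} $i+1,i+2$, and indeed the paper's Figure~\ref{fig:MinimalK4P4InACycle} shows the case where $i+1$ is isolated and $w_1=v$. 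The paper's argument for (iv) uses Lemma~\ref{lem:K4P4Representation}(v) (that $P_{i+1}\cup P_{i+2}$ separates $P_i$ from $P_{i+3}$) together with Lemma~\ref{lem:K4P4Representation}(i) to show exactly one of $P_{i+1},P_{i+2}$ reaches $T'$. Your appeal to ``broken-planar-tour structure'' is not available here: those are the minimal representations of pairs satisfying $(P2)$, whereas $(G,C)$ contains a $\kp$ and so violates $(P2)$.
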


\begin{proof}
Consult Figure \ref{fig:MinimalK4P4InACycle} for this proof.

(i) Let without loss of generality $i=0$. By Lemma \ref{lem:K4P4Representation}, $\allintr$ is not empty. By contradiction assume that a path $P_j \notin \pp_K$ intersects $\allintr$. Then $K \cup \set{j}$ is an edge-clique of $G$. We claim that this $K_5$ contains at least one red triangle, contradicting $(P3)$. Indeed, as $C$ has at least $6$ vertices, $j$ is adjacent in $C$ to at most one vertex $k \in \set{0,3}$. $K \setminus \set{k}$ contains one red edge. The endpoints of this edge together with $j$ constitute a red edge-clique. Therefore, no path of $\pp \setminus \pp_K$ intersects $\allintr$. Then no intermediate vertex of $\allintr$ is a split vertex. By the minimality of $\rep$, $\allintr$ consists of one edge, say $e$.

(ii) Let $T_1, T_2$ be the subtrees obtained by the removal of $e$ from $T$. As $V(G) \setminus K$ is a connected component of $G$, the union of the paths $\pp \setminus \pp_K$ is a subtree $T'$ of $T$. $T'$ is a subtree of $T_1$ or a subtree of $T_2$, because otherwise there is at least one path of $\pp \setminus \pp_K$ using $e$, contradicting (i). Without loss of generality let $T_2$ be the subtree containing $T'$, and $T_1$ be the subtree that intersects only paths of $\pp_K$. By Lemma \ref{lem:K4P4Representation} (ii), $T_1$ contains exactly one endpoint of $core(K)$. For $i \in \set{1,2}$, let $w_i$ be the endpoint of $core(K)$ that is in $T_i$. $w_1$ is the only split vertex in $T_1$ because it contains only paths of $\pp_K$. As the representation is minimal, there are no edges between $e$ and $w_1$, as otherwise they could be contracted. Any subtree of $T_1$ starting with an edge incident to $w_1$ can be contracted to one path because the subtree does not contain split vertices. Moreover, this path can be contracted to one edge, because all the paths entering the subtree intersect in its first edge. Since there are only two such subtrees, $T_1$ is isomorphic to a $P_3$ and $w_1$ is its center.

(iii) Assume that $\abs{\split(P_0, P_3)}=2$. Then by Lemma \ref{lem:K4P4Representation}, $\split(P_0, P_3)=\set{w_1,w_2}$, i.e. $w_1$ is an internal vertex of both $P_0$ and $P_3$. In this case, one can remove from $P_0$ its unique edge in $T_1$ without affecting the relationships between the paths. This contradicts the minimality of $\rep$. Indeed, a) any change in $T_1$ affects relationships between paths of $\pp_K$ only, b) $\allintr$ is not affected, therefore all the paths of $\pp_K$ still intersect, c) $\set{w_1} = \split(P_0, P_3) = \split(P_0,P_2)$ and $\set{w_2} = \split(P_1, P_3)$ hold after the tail removal.

Now assume that $\split(P_0, P_3)=\set{w_1}$. $P_0$ crosses $w_2$ because $\split(P_0,P_2)=\set{w_2}$. Then $P_3$ does not cross $w_2$. As $P_4 \sim P_3$, $P_4,P_2,P_0$ intersect in the last edge of $P_3$, and thus constitute a red edge-clique, contradicting $(P3)$. We conclude that $\split(P_0, P_3)=\set{w_2}$.

(iv) First assume $j \notin \set{i+1,i+2}$. Clearly, $N_G(j) \neq K$. Moreover, we have $\split(P_j,P_i) \cup \split(P_j,P_{i+3}) \neq \set{w_1}$. Indeed, if $j \notin K$ then $w_1$ is not a vertex of $P_j$ and if $j \in \set{i,i+3}$ the condition holds because (iii).

We now assume $j \in \set{i+1,i+2}$. By Lemma \ref{lem:K4P4Representation} (v), the removal of $P_1 \cup P_2$ disconnects $P_0$ from $P_3$. Then the tree $T'$ intersects $P_1 \cup P_2$. Therefore, at least one of $P_1, P_2$ intersects $T'$. By Lemma \ref{lem:K4P4Representation} i) one of $P_1, P_2$ does not cross $w_2$, i.e. does not intersect $T_2$ which in turn includes $T'$, a contradiction. We conclude that exactly one of $P_1, P_2$ intersects $T'$. In other words exactly one of $1,2$ is adjacent to $V(G) \setminus K$.  Assume $N_G(i+1)=K$. Then $P_{i+1} \cap T_1 \neq \emptyset$, therefore $\split(P_{i+1},P_i) = \emptyset$, i.e. $\split(P_{i+1},P_{i+3})=\set{w_1}$, concluding the claim. The case $N_G(i+2)=K$ is symmetric.
\end{proof}

\begin{figure}[htbp]
\centering
\includegraphics[width=\textwidth]{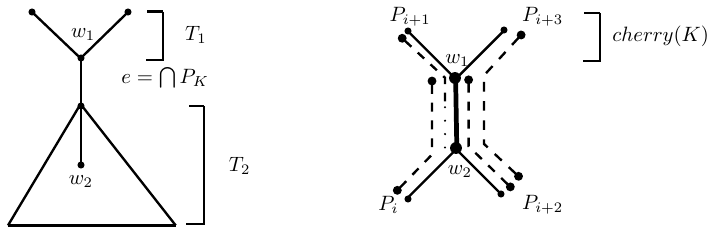}
\caption{A minimal representation of a pair $(G,C)$ with an induced $\kp$ with $N_G(i+1)=K$.}\label{fig:MinimalK4P4InACycle}
\end{figure}

We term, as \emph{isolated}, the vertex $j \in \set{i+1,i+2}$ of $K=(i,i+1,i+2,i+3)$ satisfying $N_G(j)=K$ whose existence and uniqueness are guaranteed by Lemma \ref{lem:K4P4RepresentationsInBigCycles} (iv). We recall that $(i,i+1,i+2,i+3)=(i+3,i+2,i+1,i)$, and in view of this result, we introduce an alternative notation: We denote $K$ as $[i,i+1,i+2,i+3]$ if $i+1$ is its isolated vertex, and as $[i+3,i+2,i+1,i]$ otherwise.

\begin{lemma}\label{lem:OtherPathsEnteringTheCore}
Let $K=[i,i+1,i+2,i+3]$ a $\kp$ of a pair $(G,C)$ with at least $6$ vertices, $\rep$ a minimal representation of $(G,C)$ satisfying $(P3)$. If there is a path $P_j \notin \pp_K$ intersecting $core(K)$, then $j=i-1$ and $\abs{core(K)}=2$, otherwise $\abs{core(K)}=1$.
\end{lemma}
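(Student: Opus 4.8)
The plan is to first determine the exact shape of the minimal representation $\rep$ in a neighbourhood of $core(K)$, then to decide which paths outside $\pp_K$ can reach the core, and finally to use the minimality of $\rep$ to bound the number of edges of $core(K)$.

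\emph{Shape of the representation near the core.} Since $K=[i,i+1,i+2,i+3]$ has $i+1$ as its isolated vertex, Lemma~\ref{lem:K4P4RepresentationsInBigCycles}(iv) gives $\split(P_{i+1},P_i)=\emptyset$ and $\split(P_{i+1},P_{i+3})=\set{w_1}$; combining with Lemma~\ref{lem:K4P4Representation}(i), the endpoints of $core(K)$ are $w_1=\split(P_{i+1},P_{i+3})$ and $w_2=\split(P_i,P_{i+2})$, where also $w_2=\split(P_i,P_{i+3})$ by Lemma~\ref{lem:K4P4RepresentationsInBigCycles}(iii). As in the proof of Lemma~\ref{lem:K4P4RepresentationsInBigCycles}(ii), $\allintr=\set{e}$ with $e$ incident to $w_1$; write $e=\set{w_1,a_1}$. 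I would then argue, from the cherry condition (each leaf of $cherry(K)$ is the endpoint of exactly one path) together with the split data, that $P_{i+1}$ and $P_{i+3}$ are the two paths entering the cherry while $P_i$ and $P_{i+2}$ have $w_1$ as an endpoint; hence each of $P_i,P_{i+2},P_{i+3}$ contains all of $core(K)$, whereas $P_{i+1}$ meets $core(K)$ only in $e$. Because $\allintr=\set{e}$, the core-side endpoint of $P_{i+1}$ is exactly $a_1$, so every edge of $core(K)$ other than $e$ lies on $p_T(a_1,w_2)$ and is used by precisely $P_i,P_{i+2},P_{i+3}$ (together with any external path that reaches the core).

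\emph{Only $P_{i-1}$ can enter the core.} Let $P_j\notin\pp_K$ meet $core(K)$. As $N_G(i+1)=K$ and $j\notin K$, $P_j$ is disjoint from $P_{i+1}$, and by Lemma~\ref{lem:K4P4RepresentationsInBigCycles}(i) it avoids $e$; hence $P_j$ shares with the core some edge $f$ on $p_T(a_1,w_2)$, and $f$ lies in all of $P_i,P_{i+2},P_{i+3}$. In particular $P_i,P_{i+2},P_j$ all contain $f$, so $\set{i,i+2,j}$ is realised as an edge-clique. If this triangle were red (all three edges splitting), property $(P3)$ would force it to be a claw-clique, i.e.\ a pie on three leaves; but the three paths of such a pie have no common edge, contradicting that they all contain $f$. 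Therefore the triangle has a blue edge; since $\set{i,i+2}$ and $\set{i+2,j}$ are red (neither pair is consecutive on $C$, using $j\neq i+1,i+3$), the blue edge must be $\set{i,j}$, so $j$ is a neighbour of $i$ on $C$ and, as $j\neq i+1$, we conclude $j=i-1$. Thus $P_{i-1}$ is the only path outside $\pp_K$ that can intersect $core(K)$.

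\emph{Length of the core.} I would bound $\abs{core(K)}$ by deciding, for each core edge, whether contracting it leaves the pair $(G,C)$ unchanged; by minimality no such edge may exist. The only split vertices on the core are $w_1,w_2$. The edge $e$ is never contractible, since $P_{i+1}\cap P_i=\set{e}$ and contracting $e$ would delete the blue edge $\set{i+1,i}$. If no external path meets the core, then every edge of $p_T(a_1,w_2)$ is contractible: no pair among $P_i,P_{i+2},P_{i+3}$ meets on a single such edge (they share all of $p_T(a_1,w_2)$, and in addition $e$), a middle edge is incident to no split vertex, and the last edge may be merged into $w_2$ while preserving the split at $w_2$ and the relation $P_{i+1}\sim P_i$; hence $p_T(a_1,w_2)$ is empty and $\abs{core(K)}=1$. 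If $P_{i-1}$ meets the core, then first $\abs{core(K)}\geq2$ (a length-one core is $e$, which $P_{i-1}$ cannot use). Next I would note that $P_{i-1}$ lies in the subtree $T_2$ and must cross $w_2$ in order to meet $P_{i-2}$, which (by the previous step) does not reach the core; hence its intersection with the core is a contiguous sub-path ending at $w_2$, and in particular $P_{i-1}$ never meets another path on a single middle core edge. Exactly as in the non-external case, any core edge not incident to $w_2$ is then contractible, which is impossible once $\abs{core(K)}\geq3$; therefore $\abs{core(K)}=2$.

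\emph{Main obstacle.} The delicate part is the minimality bookkeeping in the last step: to certify that a candidate edge is contractible I must know, for \emph{every} pair of paths covering it, both that they share another edge (so no intersection is destroyed) and that contracting it collapses no split vertex (so no red edge turns blue). This is exactly what forces the detailed determination, carried out in the first step, of which paths cover each portion of the core and where their endpoints and split vertices sit; the asymmetry between $e$ (kept alive by $P_{i+1}\cap P_i$) and the edge at $w_2$ (kept alive, only in the external case, by the fact that $P_{i-1}$ and $P_{i+2}$ meet solely there) is precisely what separates the two values $1$ and $2$.
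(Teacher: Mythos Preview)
Your proof is correct and follows essentially the same approach as the paper's: both arguments pin down the geometry of $\pp_K$ along $core(K)$, use the $(P3)$ contradiction on the edge-clique $\{i,i+2,j\}$ to force $j=i-1$, and then invoke minimality of $\rep$ to bound the number of edges in $core(K)$.

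Two small remarks. First, your identification step is more explicit than the paper's (you derive that $P_i,P_{i+2}$ end at $w_1$, that $P_{i+3}$ traverses the whole core, and that $P_{i+1}\cap core(K)=\{e\}$), whereas the paper argues more tersely by a three-way case split on how $P_j$ sits relative to $core(K)$; your streamlined use of ``a pie has no common edge'' is a nice way to collapse those cases. Second, in the external-case length argument your sentence ``any core edge not incident to $w_2$ is then contractible'' is literally false, since $e$ is such an edge and you yourself showed $e$ is never contractible; what you need (and what your reasoning actually supports) is that any edge of $p_T(a_1,w_2)$ not incident to $w_2$ is contractible, and that at least one such edge exists once $|core(K)|\geq 3$. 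With that wording fixed, your argument matches the paper's.
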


\begin{proof}
Let $\allintr = \set{e}$, and assume that $j \notin K$ and $P_j \cap core(K) \neq \emptyset$. Recall that $e \notin P_j$. If $P_j$ splits from $core(K)$ then it splits from each one of $P_i, P_{i+2}, P_{i+3}$. In particular $\set{j,i,i+2}$ constitutes a red edge-clique, thus violating $(P3)$. If $P_j \subseteq core(K)$ then $P_j \sim P_{i+2}$ implying $j \in \set{i+1,i+3} \subset K$, contradicting our assumption. Therefore, $P_j$ crosses the endpoint $w_2$ of $core(K)$. Then $P_j$ intersects with each one of $P_i, P_{i+2}, P_{i+3}$ in the last edge of $core(K)$. Therefore, a) $P_j \nsim P_{i+2}$ because $j \notin \set{i+1,i+3}$, and b) $P_i \nsim P_{i+2}$. If $P_j \nsim P_i$ then $\set{j,i,i+2}$ constitutes a red edge-clique, violating $(P3)$. Therefore, $P_j \sim P_i$, implying $j=i-1$. Note that $P_{i+1} \cap P_{i-1} = \emptyset$ because $i+1$ is isolated. $P_{i-1} \cap core(K)$ consists of a single edge $e'(\neq e)$, because otherwise they can be contracted to a single edge without affecting the relationships between the paths $P_{i-1},P_i,P_{i+2},P_{i+3}$ that are the only paths that intersect the contracted edges. Then $core(K)$ consists of the two edges $e,e'$. If $P_{i-1}$ does not intersect $core(K)$ then $\pp_K$ are the only paths that intersect $core(K)$. Therefore, all the edges of $core(K)$ can be contracted to one edge.
\end{proof}

\begin{lemma}\label{lem:NoAggressiveContractableC6}
A pair $(G,C)$ with $6$ vertices satisfying $(P3)$ does not contain an induced $\kp$.
\end{lemma}
\begin{proof}
Assume without loss of generality that $[0,1,2,3]$ is a $\kp$ of $(G,C)$. Let $\rep$ be a representation of $(G,C)$ satisfying $(P3)$. For $i \in \set{0,3}$ let $T_i$ be the unique connected component of $T \setminus core(K)$ intersecting $P_i$. By Lemma \ref{lem:OtherPathsEnteringTheCore}, $P_4$ does not cross $w_2$. Therefore, $P_4$ is completely in $T_3$. As $P_4 \cap P_5 \neq \emptyset$, $P_5$ intersects $T_3$. If $P_5$ is completely in $T_3$ then $P_5 \parallel P_0$, otherwise $P_5 \nsim P_0$. Both cases contradict the fact that $\set{5,0}$ is an edge of $C$.
\end{proof}

\subsection{Intersection of $\kp$ pairs and Aggressive Contraction }\label{subsec:K4P4IntersectionsAgressiveContraction}
We now focus on pairs with at least $7$ vertices. We start by analyzing the intersection of their $\kp$ sub-pairs.

\begin{lemma}\label{lem:twok4p4s}
Let $(G,C)$ be a pair with at least $7$ vertices satisfying $(P3)$, and $K=[i,i+1,i+2,i+3]$ a $\kp$ of $(G,C)$. Then\\
(i) there is at most one $\kp$, $K' \neq K$ such that $E(C[K]) \cap E(C[K']) \neq \emptyset$ and if such a $\kp$ exists then $K' = [i+5,i+4,i+3,i+2]$ (and therefore $\set{i+2, i+4}$ is an edge of $G$),\\
(ii) if $\set{i+2 , i+4}$ is an edge of $G$ then $K'=[i+5,i+4,i+3,i+2]$ induces a $\kp$ of $(G,C)$.
\end{lemma}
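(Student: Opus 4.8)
The plan is to open with a structural fact used throughout: since $\enptgp$ is the cycle $C$ and the $P_4$ of any $\kp$ is an induced path of this cycle, every $\kp$ of $(G,C)$ is carried by four consecutive vertices of $C$ (for $n\geq 7$ three consecutive cycle edges span exactly four vertices, and the missing edge $\set{i,i+3}$ is automatically a chord, hence red). Consequently any $\kp$ $K'$ sharing a cycle edge with $K=[i,i+1,i+2,i+3]$ must be an arc $\set{j,j+1,j+2,j+3}$ with $j\in\set{i-2,i-1,i+1,i+2}$, these being the only length-three arcs overlapping $C[K]$ in an edge. Since a vertex set determines its induced sub-pair, this already gives the uniqueness asserted in (i).

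For part (i) I would exploit that $i+1$ is the isolated vertex of $K$, so $N_G(i+1)=K=\set{i,i+1,i+2,i+3}$ and $i+1$ has no neighbour outside $K$. Each candidate $j\in\set{i-2,i-1,i+1}$ forces $K'$ to be a $K_4$ containing $i+1$ together with a vertex of $\set{i-2,i-1,i+4}$, all of which lie outside $N_G(i+1)$ (using $n\geq 7$); the chord needed for the $K_4$ would then contradict the isolation of $i+1$. Thus only $j=i+2$ survives, giving $K'=\set{i+2,i+3,i+4,i+5}$ and, since $K'$ is a $K_4$, the chord $\set{i+2,i+4}$. To locate the isolated vertex of $K'$, observe that its two middle vertices are $i+3$ and $i+4$; because $i+3\in K$ is adjacent in $G$ to $i$ and $i+1$, which lie outside $K'$ for $n\geq 7$, we get $N_G(i+3)\neq K'$, so by Lemma \ref{lem:K4P4RepresentationsInBigCycles}(iv) the isolated vertex is $i+4$ and $K'=[i+5,i+4,i+3,i+2]$.

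For part (ii) the starting point is that $\set{i+2,i+4}$ is not a cycle edge, hence red, so $\set{i+2,i+3,i+4}$ is a $BBR$ triangle and therefore an edge-clique: $P_{i+2},P_{i+3},P_{i+4}$ share a common edge $e_0$, with $P_{i+3}$ the non-splitting middle path. Since $i+4\notin K$, Lemma \ref{lem:OtherPathsEnteringTheCore} gives that $P_{i+4}$ avoids $core(K)$, so $e_0$ lies in the subtree off $core(K)$ into which $P_{i+2}$ and $P_{i+3}$ cross $w_2$. I would then read the local picture from Lemma \ref{lem:K4P4Representation}: $P_{i+2}$ and $P_{i+4}$ overlap on a sub-path containing $e_0$ and split at a single vertex, while $P_{i+3}$ spans from this region back through $w_2$ to the cherry. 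Placing $P_{i+5}$ via the blue edge $\set{i+4,i+5}$, the goal is to prove $P_{i+5}$ must overlap both $P_{i+3}$ and $P_{i+2}$; being chords of $C$, the resulting edges $\set{i+3,i+5}$ and $\set{i+2,i+5}$ are red, and together with the three cycle edges being the only blue edges among the four vertices this makes $\set{i+2,i+3,i+4,i+5}$ a $\kp$, which by (i) is $[i+5,i+4,i+3,i+2]$.

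The main obstacle is precisely this last step of (ii): ruling out the configuration in which $P_{i+4}$ reaches strictly beyond $P_{i+3}$, so that $P_{i+5}$ could attach to an exposed tail of $P_{i+4}$ and miss $P_{i+3}$ (and hence $P_{i+2}$). I expect to close it with two tools. First, minimality of $\rep$ forbids a prunable tail on $P_{i+4}$: any edge of $P_{i+4}$ not needed for an intersection or a split can be trimmed, which pins the endpoints of $P_{i+4}$ to its overlaps with $P_{i+2},P_{i+3},P_{i+5}$. Second, $(P3)$ forbids a red edge-clique: were $P_{i+5}$ to meet $P_{i+4}$ on a segment disjoint from $P_{i+3}$, tracking the splits of $P_{i+5}$ against $P_{i+3}$ and $P_{i+2}$ along $P_{i+4}$ would yield three pairwise-splitting paths through a common edge, i.e. a red triangle that is an edge-clique, contradicting $(P3)$. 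Combining these forces the overlaps and hence the two missing red chords, completing the identification of $K'$.
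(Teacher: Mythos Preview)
Your argument for part (i) is correct and matches the paper's, just phrased as a case analysis over the four candidate arcs rather than the paper's single observation that $i+1\notin K'$ immediately forces $K'=\set{i+2,i+3,i+4,i+5}$.

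For part (ii), however, there is a genuine gap. You correctly isolate the obstacle: $P_{i+4}$ could reach past the endpoint of $P_{i+3}$ so that $P_{i+5}$ attaches only to that exposed portion and misses $P_{i+3}$ entirely. But your two proposed tools do not close it. Minimality does not prevent $P_{i+4}$ from extending beyond $P_{i+3}$: that extension is precisely what is needed to meet $P_{i+5}$, so it is not a prunable tail. And your $(P3)$ argument cannot produce a red edge-clique from $\set{P_{i+2},P_{i+3},P_{i+4},P_{i+5}}$ alone: among these four vertices the only red edge already known is $\set{i+2,i+4}$, and the cycle edges $\set{i+3,i+4}$ and $\set{i+4,i+5}$ are blue, so every triangle you can form has at most two red edges. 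Worse, if $P_{i+5}$ is disjoint from $P_{i+3}$ there are no ``splits of $P_{i+5}$ against $P_{i+3}$'' to track at all.

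The missing idea, which the paper supplies, is global rather than local: the remaining paths $P_{i+6},\ldots,P_{i-1}$ form a connected subtree that must reach from $P_{i+5}$ all the way back to $P_i$ on the far side of $core(K)$. If $P_{i+5}$ sits beyond the endpoint $w_3$ of $P_{i+3}$, this subtree is forced to cross the last edge of $P_{i+3}$, so some $P_j$ with $j\notin K\cup\set{i+4,i+5}$ contains that edge together with $P_{i+2}$ and $P_{i+4}$. Now all three pairs split (since $j$ is a cycle-neighbor of none of $i+2,i+4$), giving a red edge-clique and the desired contradiction with $(P3)$. You need a third path from outside the local picture; the four paths you are working with cannot do the job.
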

\begin{proof}
Let without loss of generality $i=0$.

(i) Since $1$ is isolated, we have $1 \notin K'$. Therefore, if $E(C[K]) \cap E(C[K']) \neq \emptyset$ for some $\kp$ $K'$ then $E(C[K]) \cap E(C[K']) = \set{\set{2,i}}$, i.e. $K' = (2,3,4,5)$. Since $3$ is adjacent to $1$, $3$ is not isolated in $K'$. Therefore, $K'=[5,4,3,2]$.

(ii) Assume $\set{2,4}$ is an edge of $G$ and that, by way of contradiction, $K'=\set{2,3,4,5}$ is not a $\kp$. Consult Figure \ref{fig:K4P4PairCharacterization} for the following discussion. For $j \in \set{0,3}$ let $T_j$ be the connected component of $T \setminus core(K)$ intersecting $P_j$. As $P_4 \sim P_3$, Lemma \ref{lem:OtherPathsEnteringTheCore} implies that $P_4$ is completely in $T_3$. $P_4 \nsim P_2$, by our assumption. Let $w_3$ be the endpoint of $P_3$ in $T_3$ and $w_4$ be the split vertex of $P_2$ and $P_4$. Then $w_3 \in p_T(w_2,w_4)$ (possibly  $w_3=w_4$). $P_5$ does not intersect at least one of $P_2$ and $P_3$, because otherwise $K'$ is a $\kp$. Then it does not intersect $P_3$. The union of the paths $P_6, \ldots P_{n-1}$ constitutes a subtree $T'$ of $T$ that intersects both $P_0$ and $P_5$. Therefore, there is at least one path $P_j \in \set{P_6, \ldots P_{n-1}}$ crossing the last edge of $P_3$ (incident to $w_3$). Then $\set{2,4,j}$ is an edge-clique defined by this edge. Moreover, a) $P_2 \nsim P_4$, b) $P_j \nsim P_2$ because $j \notin \set{1,3}$, $P_j \nsim P_4$ because $j \notin \set{3,5}$. Therefore, $\set{2,4,j}$ is a red edge-clique, contradicting the assumption that $(P3)$ is satisfied.
\end{proof}

\begin{figure}[htbp]
\centering
\includegraphics{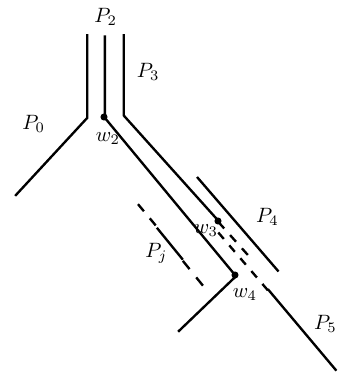}
\caption{Proof of Lemma \ref{lem:twok4p4s}.}\label{fig:K4P4PairCharacterization}
\end{figure}

By the above lemma $\kp$ sub-pairs may intersect only in pairs. We term two intersecting $\kp$ pairs as \emph{twins}, and a $\kp$ not intersecting with another as a \emph{single} $\kp$.

Given a $\kp$ $K=[i,i+1,i+2,i+3]$ of a pair $(G'',C'')$ satisfying $(P3)$, the \emph{aggressive contraction} operation is the replacement of the vertices $i+2,i+3$ by a single vertex $(i+2).(i+3)$. We denote by $(\contract{G''}{e},\contract{C''}{e})$ the resulting pair (where $e=\set{i+2,i+3}$) as $\aggressivegcdpk$. The following lemma characterizes the aggressive contraction operation in the representation domain.

\begin{lemma}\label{lem:aggressiveContraction}
Let $(G'',C'')$ be a pair with at least $7$ vertices, $\repdprime$ be a representation of it satisfying $(P3)$, and
$K=[i,i+1,i+2,i+3]$ be a $\kp$ of $(G'',C'')$. Then:\\
$\aggressivegcdpk$ is a pair satisfying $(P3)$ and a representation $\repprime$ of $(G',C') = \aggressivegcdpk$ satisfying $(P3)$ is obtained from $\repdprime$ by first removing $cherry(K)$ and also $cherry(K')$ if $K$ and $K'$ are twins, and then applying the union operation to $P_{i+2}$ and $P_{i+3}$.
\end{lemma}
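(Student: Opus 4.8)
The plan is to exploit the explicit picture of a minimal $(P3)$-representation of a $\kp$ supplied by Lemmas~\ref{lem:K4P4Representation}, \ref{lem:K4P4RepresentationsInBigCycles} and \ref{lem:OtherPathsEnteringTheCore}, and then to track precisely how the two prescribed operations (removing the cherry, then taking the union of $P_{i+2}$ and $P_{i+3}$) alter the intersection and split relations, matching the result against the purely combinatorial contraction $\aggressivegcdpk$. As a preliminary I note that $\aggressivegcdpk$ is indeed a pair: $C'=\contract{C''}{e}$ (with $e=\set{i+2,i+3}$) is obtained from the Hamiltonian cycle $C''$ by contracting one of its edges, hence is a Hamiltonian cycle $C_{n-1}$ of $G'=\contract{G''}{e}$ (since $n\geq 7$), and $E(C')\subseteq E(G')$ because contraction preserves this containment. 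I then record the local geometry: $\allintr=\set{e_0}$ for an edge $e_0$ used only by $\pp_K$, splitting $T''$ into $T_1=cherry(K)$ with center $w_1$ and $T_2$ (containing $w_2$ and every path outside $\pp_K$); moreover $\split(P_i,P_{i+2})=\split(P_i,P_{i+3})=\set{w_2}$ and $\split(P_{i+1},P_{i+3})=\set{w_1}$. From these one reads off that the isolated path $P_{i+1}$ and the path $P_{i+3}$ are exactly the two paths reaching the two cherry leaves (the relation $\split(P_{i+1},P_{i+3})=\set{w_1}$ forces $P_{i+3}$ to branch at $w_1$, hence to reach a leaf), whereas $P_i$ and $P_{i+2}$ terminate at $w_1$ and $P_{i+2}$ does not cross $w_1$.

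Next I analyse the operations. Removing $cherry(K)$ truncates $P_{i+1}$ and $P_{i+3}$ so that each now ends at $w_1$. Since $T_1$ meets only paths of $\pp_K$ and only $P_{i+1},P_{i+3}$ enter its branches, the only relation that changes is $\split(P_{i+1},P_{i+3})$, which collapses from $\set{w_1}$ to $\emptyset$; crucially, no EPT edge is destroyed, because $P_{i+1}$ and $P_{i+3}$ still share the edge $e_0\subseteq core(K)$. I then set $Q\defined P_{i+2}\cup P_{i+3}$, which is a path because $P_{i+2}\sim P_{i+3}$ is untouched by the cherry removal; this $Q$ represents the merged vertex $(i+2).(i+3)$.

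The verification that the resulting representation $\repprime$ represents $(G',C')$ proceeds through Observation~\ref{obs:splitofunion}: for every remaining path $P_j$ one has $Q\cap P_j=(P_{i+2}\cap P_j)\cup(P_{i+3}\cap P_j)$ and $\split(P_j,Q)=\split(P_j,P_{i+2})\cup\split(P_j,P_{i+3})$, which I compare with $\contract{G''}{e}$ and $\contract{C''}{e}$. The decisive cases are $j=i$, where $\split(P_i,Q)=\set{w_2}\cup\set{w_2}=\set{w_2}$ keeps $\set{i,(i+2).(i+3)}$ \emph{red} as required, and $j=i+1$, where after the cherry removal $\split(P_{i+1},Q)=\emptyset\cup\emptyset=\emptyset$, so $\set{i+1,(i+2).(i+3)}$ is \emph{blue} --- exactly the cycle edge of $C'$. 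It is precisely this last computation that forces the cherry removal to precede the union: without it the split $\split(P_{i+1},P_{i+3})=\set{w_1}$ would survive and wrongly colour that edge red. The twin situation is symmetric: if $K'=[i+5,i+4,i+3,i+2]$ is a twin then $\set{i+2,i+4}$ is red with $\split(P_{i+4},P_{i+2})=\set{w_1'}$, and deleting $cherry(K')$ collapses this split so that $\set{i+4,(i+2).(i+3)}$ becomes the blue cycle edge of $C'$. All relations not involving $i+2,i+3$ are untouched, hence agree with $G''$ and $C''$ off the merged vertex, and Lemma~\ref{lem:OtherPathsEnteringTheCore} pins down the only outside path that can meet $core(K)$ (namely $P_{i-1}$), making the residual cases routine.

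Finally I must check that $\repprime$ satisfies $(P3)$, i.e.\ that every red triangle is a pie. A red triangle avoiding $Q$ is already a red triangle of $\repdprime$ (the cherry removal only turns a red edge blue, never the reverse, and destroys no EPT edge), hence a pie by hypothesis, and its pie structure, being localised away from $T_1$ and from $P_{i+2}\cup P_{i+3}$, is preserved. A red triangle through $(i+2).(i+3)$ lifts via Observation~\ref{obs:splitofunion} to a red configuration of $\repdprime$ on $\set{i+2 \text{ or } i+3,\, j,\, k}$. I expect this to be the main obstacle: one must show that the merge does not convert the pie governing the split vertex $w_2$ into a forbidden red edge-clique. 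This I plan to settle by using that the merged paths are non-splitting ($P_{i+2}\sim P_{i+3}$), so $Q$ contributes the \emph{single} split vertex $w_2$ to any such triangle and the claw-clique (pie) at $w_2$ present in $\repdprime$ is inherited rather than flattened onto one edge. Once the geometry of the first paragraph is fixed, the colour/intersection bookkeeping of the third paragraph is mechanical; the genuinely delicate points are this $(P3)$-preservation at $w_2$ and the symmetric handling of the twin case.
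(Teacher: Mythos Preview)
Your approach --- tracking intersections and splits directly through the two operations --- is sound for establishing that $\repprime$ represents $(G',C')$, and your analysis of the colour of the edges $\set{i,(i+2).(i+3)}$, $\set{i+1,(i+2).(i+3)}$, and (in the twin case) $\set{i+4,(i+2).(i+3)}$ is correct. Where your proposal departs from the paper is in the handling of $(P3)$, and there you leave a genuine gap.

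The paper's argument is much shorter because it recognises that the two steps fit the existing machinery. After removing the cherry (or cherries), the only change is that $\set{i+1,i+3}$ (and $\set{i+2,i+4}$ if $K$ is a twin) turn from red to blue; since this cannot create a red edge-clique, the intermediate representation still satisfies $(P3)$. The crucial observation is that in this intermediate pair the edge $\set{i+2,i+3}$ has become \emph{contractible} in the sense of pair contraction: the only candidate $BBR$ triangles through it are $\set{i+1,i+2,i+3}$ and $\set{i+2,i+3,i+4}$, and the relevant red edges $\set{i+1,i+3}$, $\set{i+2,i+4}$ have just been eliminated (the latter either was never an edge, or was killed by removing $cherry(K')$). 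Now Lemma~\ref{lem:contraction-pairs} says the union $P_{i+2}\cup P_{i+3}$ represents the contracted pair, and Lemma~\ref{lem:ContractionPreservesP2P3}(ii) says contraction preserves $(P3)$. This dispatches your ``main obstacle'' in one line, and also subsumes your edge-by-edge colour verification.

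Your proposed direct attack on $(P3)$ is flawed as stated: it is not true that $Q=P_{i+2}\cup P_{i+3}$ contributes only the single split vertex $w_2$. Both $P_{i+2}$ and $P_{i+3}$ extend into $T_2$ past $w_2$ and can split from other paths there (for instance $P_{i+3}$ splits from $P_{i+5}$ in the twin case). So a red triangle $\set{(i+2).(i+3),j,k}$ in $\repprime$ need not have its pie centred at $w_2$, and lifting it back via Observation~\ref{obs:splitofunion} tells you only that each of $P_j,P_k$ splits from \emph{some} member of $\set{P_{i+2},P_{i+3}}$, possibly different ones, which does not immediately yield a red triangle of $\repdprime$. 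You would still owe an argument ruling out a red edge-clique. The paper's route via contractibility and Lemma~\ref{lem:ContractionPreservesP2P3} bypasses this entirely.
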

\begin{proof}
Let without loss of generality $i=0$. Recall that by Lemma \ref{lem:twok4p4s}, $\set{2,4}$ is an edge of $G''$, if and only if $K$ is a twin. Figure \ref{fig:ACofOneK4P4} illustrates the following two steps in the case that $K$ is a single.

(Step $1$) We remove $cherry(K)$ (and also $cherry(K')$ when $K$ and $K'$ are twins) from $T''$. By Lemma \ref{lem:K4P4RepresentationsInBigCycles} we know that by removing cherries we don't lose any edge intersection, and we lose exactly one split vertex per cherry, namely the center of the cherry. This vertex (or vertices) is $\split(P_1, P_3)$ (and also $\split(P_2, P_4)$ when $K$ is a twin). Thus the edge $\set{1,3}$ (and also $\set{2,4}$ when $K$ is a twin) becomes blue. As no new red edges are introduced, the resulting representation does not contain red edge-cliques, i.e. satisfies $(P3)$.

(Step $2$) We contract the resulting graph on the edge $\set{2,3}$. We claim that this contraction is defined. Indeed assume by contradiction that $\set{2,3}$ participates in a $BBR$ triangle. This $BBR$ triangle is one of $\set{1,2,3}$ and $\set{2,3,4}$. Then one of $\set{1,3}$ and $\set{2,4}$ is a red edge, contradicting the fact that these edges (if exist) becomes blue after step $1$. This contraction corresponds to the union operation on the paths $P_2,P_3$, and by Lemma \ref{lem:ContractionPreservesP2P3} the resulting graph satisfies $(P3)$.
\end{proof}

\subsection{Algorithm}\label{subsec:K4P4Algorithm}
\newcommand{\algPThree}{\textsc{FindMinimalRepresentation-P3}}
\newcommand{\procSplitEar}{\textsc{MakeCherry}}

\begin{figure}[htbp]
\centering
\includegraphics{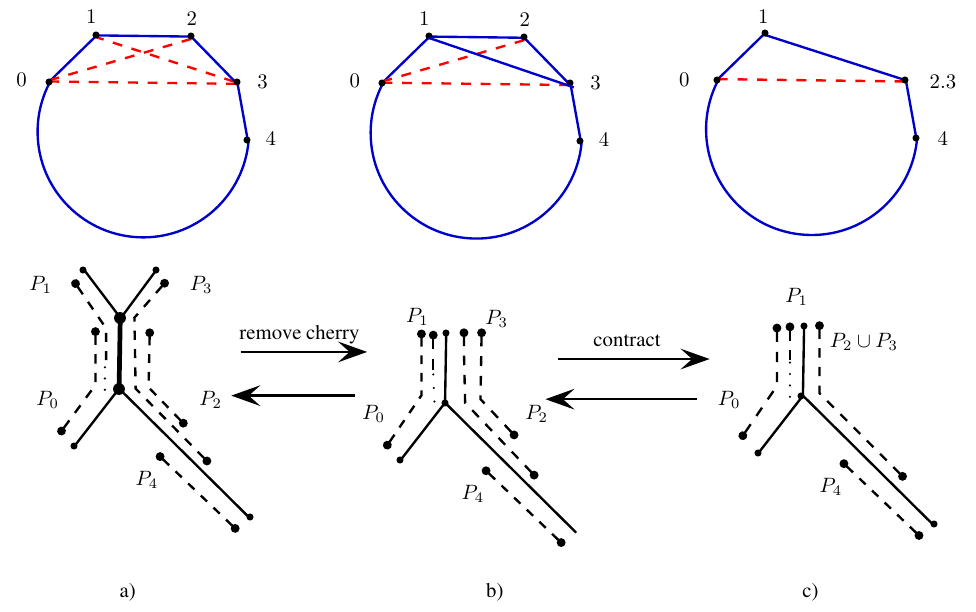}
\caption{Aggressive contraction of a single $\kp$.}\label{fig:ACofOneK4P4}
\end{figure}

\begin{figure}[htbp]
\centering
\includegraphics{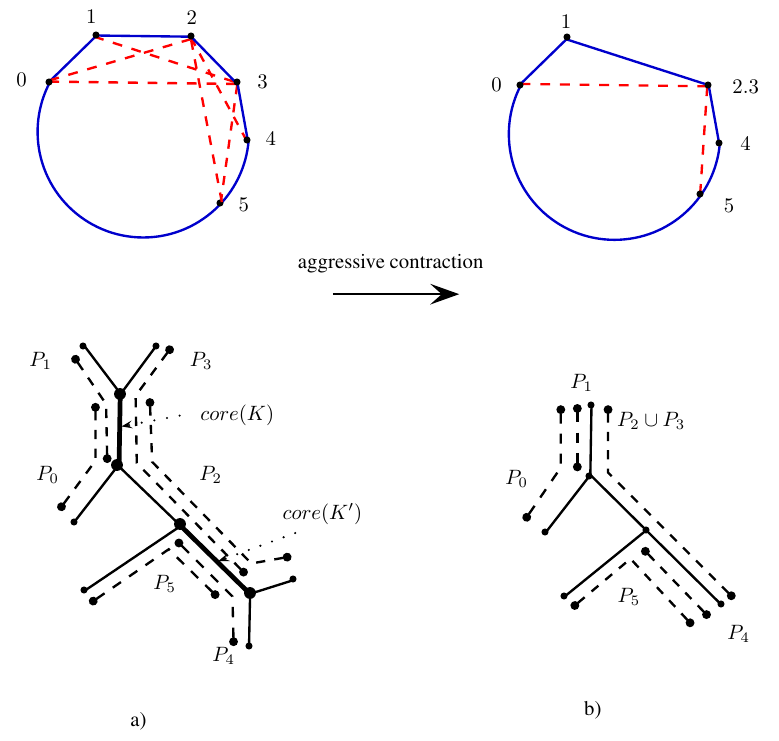}
\caption{Aggressive contraction of twins.}\label{fig:ACOfTwoK4P4s}
\end{figure}

Lemma \ref{lem:aggressiveContraction} implies an algorithm for finding the unique minimal representation of
pairs satisfying $(P3)$. Algorithm $\algPThree$ is a recursive algorithm that processes a single $\kp$ or a twin of $\kp$s at every invocation. The processing is done by applying aggressive contraction to convert the involved $\kp$(s) to $(K_3,P_3)$ (s),  solving the problem recursively, and finally transforming the representation of the $(K_3, P_3)$ to a representation of a $\kp$. In the Build Representation phase, Algorithm ~\algPThree~ performs the reversal of steps 1 and 2 described in Lemma \ref{lem:aggressiveContraction}, (see Figures \ref{fig:ACofOneK4P4}, \ref{fig:ACOfTwoK4P4s}).

\alglanguage{pseudocode}
\begin{algorithm}
\caption{$\algPThree(G'',C'')$}
\label{alg:algtwo}
\begin{algorithmic}[1]
\Require {$C''=\set{0,1,\ldots,\abs{V(G'')}-1}$ is an Hamiltonian cycle of $G''$ and $\abs{V(G'')} \geq 6$}
\Ensure {A minimal  representation $\repbar$ of $(G'',C'')$ satisfying $(P3)$ if any}
\If {$(G'',C'')$ is $\kp$-free}
\State \Return $\algPTwoPThree(G'',C'',\wdtgdprime)$
\EndIf
\State
\Statex \textbf{Aggressive Contraction:}
\State Pick a $\kp$, $K=[i,i+1,i+2,i+3]$ of $(G'',C'')$.
\Comment Renumber vertices if necessary.
\State $(G',C') \gets \aggressivegcdpk$.
\State
\Statex \textbf{Recurse:}
\State $\repbarprime \gets \algPThree(G',C')$.
\State
\Statex \textbf{Build Representation:}
\State $\repbar \gets \repbarprime$.
\State Replace $P_{(i+2).(i+3)}$ by two copies $P_{i+2}$ and $P_{i+3}$ of itself.
\If {$i+2$ is adjacent to $i+4$ in $G''$}
\State
\Comment $K'=[i+5,i+4,i+3,i+2]$ is the twin of $K$ in $(G'',C'')$
\State \procSplitEar($\repbar, i+4, i+2$).
\Else
\Comment $K$ is a single
\State $w \gets$ the endpoint of $P_{i+2}$ which is not in $core(K)$.
\State \procAdjustEndpoint $(\repbar, G'', P_{i+2}, w)$.
\EndIf
\State \procSplitEar($\repbar, i+1, i+3$).
\State
\Statex \textbf{Validate:}
\If {$\eptg{\ppbar}=G''$ and $\ppbar$ satisfies $(P3)$}
\State \Return $\repbar$
\Else
\State \Return ``NO''
\EndIf
\Function{\procSplitEar}{$\repbar,p,q$}
\State Let $v \in V(\bar{T})$ be the common endpoint of $P_p, P_q$.
\State Add two new vertices $v', v''$ and two edges $\set{v,v'}, \set{v,v''}$ to $\bar{T}$.
\State Extend $P_p$ so that the endpoint $v$ is moved to $v'$.
\State Extend $P_q$ so that the endpoint $v$ is moved to $v''$.
\EndFunction
\end{algorithmic}
\end{algorithm}

A \emph{broken tour with cherries} is a representation obtained by adding cherries to a broken tour. See Figure \ref{fig:eptn-P3-cycle-representation} for an example of a broken planar tour with cherries and the graph pair induced by it.

\begin{theorem}
$\prbpthree$ can be solved in polynomial time. YES instances have a unique solution, and whenever $n \geq 6$ this solution is a broken planar tour with cherries.
\end{theorem}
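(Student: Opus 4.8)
The plan is to prove the theorem by induction on the number of $\kp$ sub-pairs of $(G'',C'')$, following the recursion of Algorithm $\algPThree$. The base case is when $(G'',C'')$ is $\kp$-free; here the algorithm calls $\algPTwoPThree$, so correctness, uniqueness, and the claim that the output is a broken planar tour all follow directly from Theorem \ref{thm:FindMinimalRepresentation}. (For $n=5$ the separate characterizations of Lemma \ref{lem:unique-c5} and Theorem \ref{thm:C5WithK4P4} settle the problem; the remaining assertions of the theorem concern $n \geq 6$, which is also the precondition of the algorithm.)

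For the inductive step, suppose $(G'',C'')$ has at least one $\kp$. Then $n \geq 7$, since by Lemma \ref{lem:NoAggressiveContractibleC6} a pair with $6$ vertices satisfying $(P3)$ contains no $\kp$. The algorithm picks a $\kp$ $K=[i,i+1,i+2,i+3]$ and forms $(G',C')=\aggressivegcdpk$. By Lemma \ref{lem:aggressiveContraction} the pair $(G',C')$ still satisfies $(P3)$ and has $n-1 \geq 6$ vertices, and by Lemma \ref{lem:twok4p4s} it has strictly fewer $\kp$ sub-pairs (one fewer if $K$ is single, two fewer if $K$ is a twin, because merging $i+2$ and $i+3$ destroys $K$ and, when present, its twin $K'=[i+5,i+4,i+3,i+2]$ simultaneously). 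Hence the inductive hypothesis applies: the recursive call returns the unique minimal representation $\repbarprime$ of $(G',C')$, which is a broken planar tour with cherries. What remains is to show that the Build Representation phase transforms $\repbarprime$ into the unique minimal representation of $(G'',C'')$.

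This reversal is the crux of the argument. I would first establish, in analogy with Lemma \ref{lem:MinimizationOfBrokenPlanarTours}, that if $\repdprime$ is \emph{any} minimal representation of $(G'',C'')$ satisfying $(P3)$, then aggressively contracting $\repdprime$ yields a representation that is equivalent and $\lesssim$-comparable to $\repbarprime$, and hence equal to it by minimality. The structural Lemmas \ref{lem:K4P4RepresentationsInBigCycles} and \ref{lem:OtherPathsEnteringTheCore} are precisely what force this. They guarantee that in any minimal representation, $K$ contributes exactly the subtree $cherry(K)$, attached through the single shared edge $e$ with $\allintr=\set{e}$; that $\split(P_i,P_{i+3})$ is the single vertex $w_2$; and that the only foreign path possibly entering $core(K)$ is $P_{i-1}$, in which case $\abs{core(K)}=2$ and otherwise $\abs{core(K)}=1$. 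Consequently the inverse of aggressive contraction is uniquely determined: breaking $P_{(i+2).(i+3)}$ into two copies and re-attaching the cherry (or both cherries in the twin case) via $\procSplitEar$, together with the endpoint adjustment $\procAdjustEndpoint$ for a single $\kp$ (whose correctness is exactly the reversal analyzed in Lemmas \ref{lem:MinimizationAfterUnion} and \ref{lem:MinimizationOfBrokenPlanarTours}), reproduces $\repdprime$. Uniqueness of the output follows because every choice in this reconstruction is forced by the cited lemmas, so the recovered representation does not depend on the arbitrary choice of $K$.

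Since adding cherries to a broken planar tour yields, by definition, a broken planar tour with cherries (and the local cherry additions preserve planarity), the output has the claimed form for $n \geq 6$. For the running time, each invocation performs a constant number of $\procSplitEar$ and $\procAdjustEndpoint$ operations followed by the polynomial-time Validate step, and the recursion depth is at most $n$ because each aggressive contraction removes a vertex; hence the algorithm runs in polynomial time. The Validate phase returns ``NO'' exactly when $\eptg{\ppbar}\neq G''$ or $(P3)$ fails, which by the forcing argument happens precisely when $(G'',C'')$ admits no representation satisfying $(P3)$. I expect the main obstacle to be the forcing step of the reconstruction, namely showing that the cherry and endpoint data discarded by aggressive contraction are uniquely recoverable; this is where the detailed structural analysis of Lemmas \ref{lem:K4P4RepresentationsInBigCycles} through \ref{lem:OtherPathsEnteringTheCore} is indispensable.
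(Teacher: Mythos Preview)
Your proposal is correct and follows essentially the same approach as the paper: induction on the number of $\kp$ sub-pairs, with the base case delegated to Theorem \ref{thm:FindMinimalRepresentation}, the inductive step justified via Lemmas \ref{lem:aggressiveContraction}, \ref{lem:NoAggressiveContractibleC6}, and \ref{lem:twok4p4s}, and the reversal of aggressive contraction forced by Lemmas \ref{lem:K4P4RepresentationsInBigCycles} and \ref{lem:OtherPathsEnteringTheCore}. Your treatment is in fact slightly more explicit than the paper's in justifying why the number of $\kp$ sub-pairs strictly decreases and in spelling out the polynomial-time bound.
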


\begin{proof}
As the case $\abs{V(G'')} < 6$ is already solved, we will show that for any given pair $(G'',C'')$ with $\abs{V(G'')} \geq 6$, $\algPThree$ solves $\prbpthree$. If $(G'',C'')$ is a ''NO'' instance, then the instance has no representation satisfying $(P3)$. In this case then the algorithm returns ``NO'' at the validation phase. Therefore, we assume that $(G'',C'')$ is a ''YES'' instance, and prove the claim by induction on the number $k$ of induced $\kp$ pairs of $(G'',C'')$.

If $k=0$ then $(G'',C'')$ does not contain any $\kp$ pairs, therefore satisfies $(P2)$. In this case the algorithm invokes $\algPTwoPThree$ and the claim follows from Theorem \ref{thm:FindMinimalRepresentation}.

Otherwise $k>0$. We assume that the claim holds for any $k'<k$ and prove that it holds for $k$. In this case, as the pair contains at least one $\kp$, one such pair $K$ is chosen arbitrarily by the algorithm and aggressively contracted. The resulting pair $(G',C')=\aggressivegcdpk$ has the following properties:
\begin{itemize}
\item{Satisfies $(P3)$.} (By Lemma \ref{lem:aggressiveContraction})
\item{The number of $\kp$ pairs is strictly less than $k$.}
\item{$\abs{V(G')} \geq 6$.} This is because $\abs{V(G')}=\abs{V(G'')}-1$ and $\abs{V(G'')}>6$. Indeed, if $\abs{V(G'')}=6$, we have $k=0$ by Lemma \ref{lem:NoAggressiveContractableC6}.
\end{itemize}

Therefore, $(G',C')$ satisfies the assumptions of the inductive hypothesis. Then, $\repbarprime$ is the unique minimal representation of $\aggressivegcdpk$ satisfying $(P3)$. It remains to show that the representation $\repbarprime$ is obtained from the representation $\repbar$ returned by the algorithm, by applying the steps described in Lemma \ref{lem:aggressiveContraction}.

Let without loss of generality $K=[i,i+1,i+2,i+3]$. By Lemma \ref{lem:twok4p4s}, $K$ has a twin $K'=[i+5,i+4,i+3,i+3]$ if and only if $\set{i+2,i+4}$ is an edge of $G''$. The algorithm checks the existence of this edge and takes two different actions, accordingly.

If $K$ is not a twin then step 2, i.e. the union operation is reversed by breaking apart the path $P_{(i+2).(i+3)}$ into two paths $P_{i+2}$ and $P_{i+3}$. Then step 1 is reversed by invoking procedure $\procSplitEar$ (see Figure \ref{fig:ACofOneK4P4}).

If $K$ is a twin, then $cherry(K)$ and $cherry(K')$ are uniquely determined by Lemma \ref{lem:K4P4RepresentationsInBigCycles} (ii) and procedure $\procSplitEar$ acts accordingly. This determines all the endpoints of $P_i,P_{i+1},P_{i+2},P_{i+3},P_{i+4},P_{i+5}$ that are different from the representation $\repbarprime$ (see Figure \ref{fig:ACOfTwoK4P4s}).
\end{proof}

\begin{figure}[htbp]
\centering
\includegraphics[width=\textwidth]{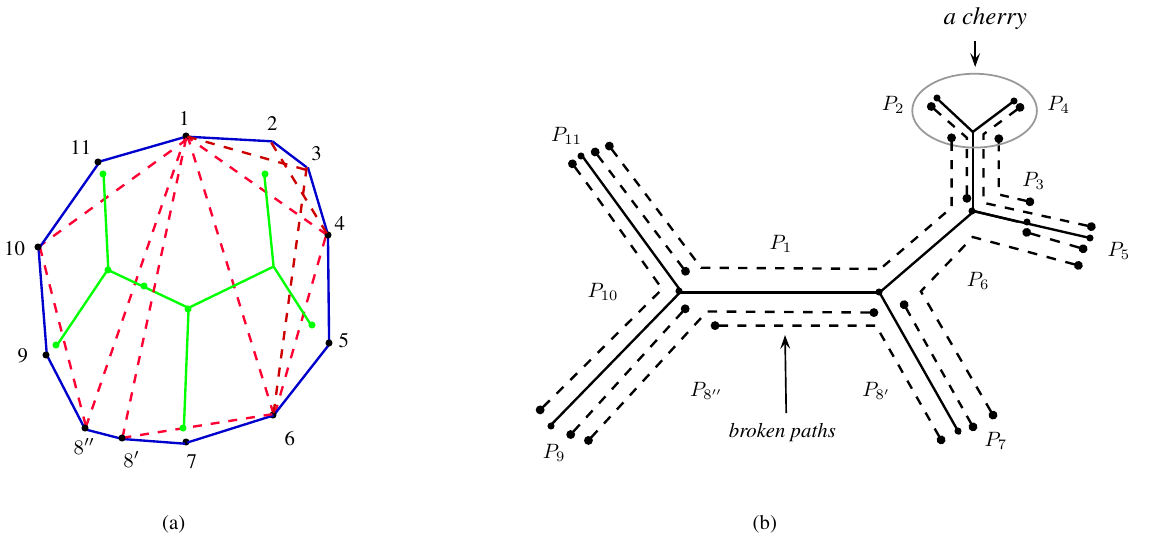}
\caption{(a) A pair ($G,C$) with a contractable edge $\set{8',8''}$ and a subgraph $\kp$ induced by $\set{1,2,3,4}$ (b) A representation of $(G,C)$: a broken planar tour with a cherry.}
\label{fig:eptn-P3-cycle-representation}
\end{figure}

\section{General Pairs $(G,C)$}\label{sec:General}
In this section we show that it is impossible to generalize the algorithms presented in the previous sections to the case where $(P3)$ does not hold, unless $\textsc{P}=\textsc{NP}$.

We start with a definition and a related lemma that are central to this section. Given a pair $(G,G')$ and a subset $S$ of $V(G)$, the \emph{component graph} $\comp(G,G',S)$ is a graph whose vertices correspond to the connected components $G_1, G_2, \ldots$ of $G \setminus S$ and two vertices corresponding to components $G_i, G_j$ are connected by an edge if and only if there is a vertex $v \in S$ adjacent to both of $G_i$ and $G_j$ in $G'$ (see Figure \ref{fig:ReductionPair} for an example). Whenever $G'$ is a cycle, we term a connected component of $G' \setminus S$ an \emph{arc} of $G'$ separated by $S$. Clearly, whenever $\abs{S} \geq 2$ every arc is adjacent to exactly $2$ vertices of $S$.

\begin{lemma}\label{lem:ComponentGraph3Colorable}
Let $(G,C)$ be a pair where $C$ is a Hamiltonian cycle of $G$, and $K$ be a maximal clique of $G \setminus C$. If there is a representation $\rep$ of $G$ where $\Delta(T) \leq 3$, then $\comp(G,C,K)$ is $3$-colorable.
\end{lemma}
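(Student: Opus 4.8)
The plan is to turn the representation $\rep$ into a proper $3$-colouring of $\comp(G,C,K)$ by using the local branch structure of $T$ at the place where the clique $K$ sits. Since $\{P_v : v\in K\}$ is a clique of $\eptgp$, Theorem~\ref{thm:golumbiccliques} tells us that in $\rep$ these paths either all contain a common edge $e$ (an edge-clique) or all use two of the three edges of a common claw (a claw-clique); as $\Delta(T)\le 3$, such a claw is just a vertex $c$ of degree $3$ with its incident edges. Deleting $e$ (respectively $c$) from $T$ therefore breaks $T$ into at most three subtrees, which I will call the \emph{branches}, and I use these (at most three) branches as the colours, so the claw case needs three colours and the edge case only two.

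Next I would colour a component $G_i$ of $G\setminus K$ by the branch containing it. For this to make sense I must argue that each $G_i$ lies inside a single branch. Two intersecting paths share an edge and the branches are pairwise edge-disjoint, so if every path $P_u$ with $u\notin K$ were confined to one branch, then any edge of $G\setminus K$ would join two paths in the same branch, and by connectivity each component --- in particular each arc of $C$ separated by $K$ --- would be monochromatic. The delicate point is a path $P_u$ ($u\notin K$) that crosses the centre, i.e.\ uses two branch-edges: such a $P_u$ meets every $P_v$ ($v\in K$), since any two of the three branch-edges coincide in at least one branch, so $u$ is adjacent in $G$ to all of $K$. Maximality of $K$ as a clique of $G\setminus C$ then forces $u$ to be joined to some vertex of $K$ by an edge of $C$, and I would use this to show such crossing vertices are few and can be assigned a branch consistently (or ruled out), in the spirit of the split-analysis of Lemma~\ref{lem:K4P4Representation}.

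The main step, and the one I expect to be hardest, is properness: whenever a vertex $v\in K$ witnesses an edge of $\comp(G,C,K)$ between $G_i$ and $G_j$, the branches of $G_i$ and $G_j$ must differ. By definition $v$ has a $C$-neighbour $x\in G_i$ and a $C$-neighbour $y\in G_j$; since $P_v$ spans only two branches and each of $P_x,P_y$ meets $P_v$, both branches lie among those two, so I must show they are the \emph{two different} ones. Intuitively the two cycle-neighbours attach to the two distinct ends of $P_v$, which sit in distinct branches, but without property $(P3)$ there is no planar tour to justify this: a priori $P_x$ and $P_y$ could overlap $P_v$ in two disjoint sub-paths inside the same branch. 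I would attack this by exploiting that $x$ and $y$ are at distance two on $C$ (so if $\{x,y\}\notin E(G)$ then $P_x\parallel P_y$, while if $\{x,y\}\in E(G)$ they lie in the same component and give only a loop), and by analysing $\split(P_x,P_v)$ and $\split(P_y,P_v)$ at the centre together with the maximality of $K$, to force $x$ and $y$ onto the two distinct branches that $P_v$ uses. Granting this, the branch assignment is a proper colouring using at most three colours, so $\comp(G,C,K)$ is $3$-colourable.
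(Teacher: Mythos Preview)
Your overall framework---use the edge/claw where $K$ sits to split $T$ into at most three subtrees and colour each component of $G\setminus K$ by the subtree it lives in---is exactly the paper's approach. The gap is in your properness argument, and your proposed line of attack cannot close it.

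You say you would ``analyse $\split(P_x,P_v)$ and $\split(P_y,P_v)$ at the centre''. But $\{x,v\}$ and $\{y,v\}$ are edges of $C=\enptgp$, so $P_x\sim P_v$ and $P_y\sim P_v$, hence $\split(P_x,P_v)=\split(P_y,P_v)=\emptyset$; there is nothing to analyse. And there is no local contradiction using only $P_v,P_x,P_y$: if both $P_x$ and $P_y$ sit in the same subtree $T_i$, they can simply occupy two disjoint sub-intervals of $P_v\cap T_i$ (say $P_x$ closer to the root $r_i$, $P_y$ farther), with $P_x\parallel P_y$ and $P_x,P_y\sim P_v$ all satisfied. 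Maximality of $K$ and the split structure of $P_x,P_y$ with $P_v$ alone do not rule this out; the reference to Lemma~\ref{lem:K4P4Representation} is not apt here.

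The missing idea is to bring in a \emph{second} clique vertex. Let $v'$ be the other vertex of $K$ bounding the arc of $C\setminus K$ that contains $y$. Then $P_{v'}$ must enter $T_i$ through $r_i$ and reach the subtree $T'=\bigcup_{u\in G_j}P_u$ (since $v'$ is $C$-adjacent to an endpoint of that arc). Because $P_x$ lies strictly between $r_i$ and $T'$ along $P_v$ (it is closer to $r_i$ and disjoint from $T'$), the path $P_{v'}$ is forced to contain $P_x$; hence $P_x\sim P_{v'}$, i.e.\ $\{x,v'\}\in E(C)$. Now $x$ has both its $C$-neighbours $v,v'$ in $K$, so the arc of $x$ is the singleton $\{x\}$, bounded by $v$ and $v'$. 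But $y$'s arc is also bounded by $v$ and $v'$, and going round $C$ this forces $K=\{v,v'\}$, contradicting $|K|>3$. (The case $|K|\le 3$ is disposed of at the outset: then $G\setminus K$ has at most three components.)

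On your ``delicate point'' about a vertex $u\notin K$ whose path crosses the centre: you are right that maximality of $K$ in $G\setminus C$ (as opposed to $G$) does not immediately exclude this. The paper simply asserts that the paths $\{P_v:v\in K\}$ are exactly those containing $e$ (resp.\ two edges of the claw), i.e.\ that $K$ coincides with the full edge/claw clique in $G$; this is what makes every $P_u$ with $u\notin K$ live in a single subtree. Your instinct that this step deserves care is sound, but your sketch (``few and can be assigned a branch consistently'') is not a proof, and the appeal to Lemma~\ref{lem:K4P4Representation} does not help here.
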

\begin{proof}
If $\abs{K} \leq 3$, $G \setminus K$ has at most $3$ connected components, thus $\comp(G,C,K)$ is $3$-colorable. Therefore, we assume $\abs{K} > 3$. If $K$ is an edge-clique defined by an edge $e$ then the paths $\pp_K=\set{P_v: v \in K}$ are exactly the paths in $\pp$ that contain $e$. The edge $e$ divides $T$ into two subtrees $T_1,T_2$ rooted at the endpoints $r_1,r_2$ of $e$. Similarly, if $K$ is a claw-clique defined by a claw $\set{e_1,e_2,e_3}$, as $T$ has maximum degree $3$, the claw divides the tree into three subtrees $T_1, T_2, T_3$, rooted at the center $r_1=r_2=r_3=r$ of the claw. In both cases the following two statements hold: a) every path of $\pp \setminus \pp_K$ is contained in one of these subtrees, b) every path of $\pp_K$ that intersects a subtree $T_i$ crosses its root $r_i$.

All the vertices of a connected component $G_i$ are represented by paths that are in the same subtree $T_j$ ($j \in \set{1,2,3}$). This is because otherwise there are at least two adjacent vertices in $G_i$ that are in two different subtrees, a contradiction. We color every vertex $G_i$ of $\comp(G,C,K)$ with color $j \in \set{1,2,3}$ depending on the subtree on which the paths representing its vertices reside. It remains to show that if two connected components are adjacent in $\comp(G,C,K)$ they are colored with different colors.

Assume by contradiction that two components $G_1, G_2$ of $G \setminus K$ which are adjacent in $\comp(G,C,K)$ are colored with the same color $i$. Then, there is a vertex $v \in K$ and two vertices $v_1 \in G_1, v_2 \in G_2$ adjacent to $v$ in $C$. Moreover, $v_1$ and $v_2$ are not adjacent in $G$, because they are in different connected components. Therefore, (i) $P_v \sim P_{v_1}, P_v \sim P_{v_2}$, (ii) $P_{v_1} \parallel P_{v_2}$, (iii) $P_{v_1}$ and $P_{v_2}$ are in $T_i$, (iv) $P_v$ intersects $T_i$ and crosses its root $r_i$. Furthermore, we assume without loss of generality that $P_{v_1}$ is closer to $r_i$ than $P_{v_2}$ (see Figure \ref{fig:AdjacentSegmentsOfTheSameTree}). Consider the subtree $T'= \cup_{u \in G_2} P_u$ of $T_i$. $P_{v_1} \cap T' = \emptyset$, because otherwise there is a path $P_u$ representing a vertex $u \in G_2$ that intersects $P_{v_1}$, in other words $u \in G_2$ is adjacent to $v_1 \in G_1$, a contradiction.
Let $\set{v,v'}$ be the vertices of $K$ adjacent to the arc $v_2$ belongs to. $P_{v'}$ intersects $T_i$ and crosses its root $r_i$. Moreover, $P_{v'}$ intersects $T'$, as it is adjacent to at least one vertex of $G_2$. We conclude that $P_{v'}$ contains $P_{v_1}$. Then $v_1 \sim v'$, i.e. $v'$ and $v_1$ are adjacent in $C$. Therefore, $K=\set{v,v'}$, contradicting $\abs{K} > 3$.
\end{proof}

\begin{figure}[htbp]
\centering
\includegraphics{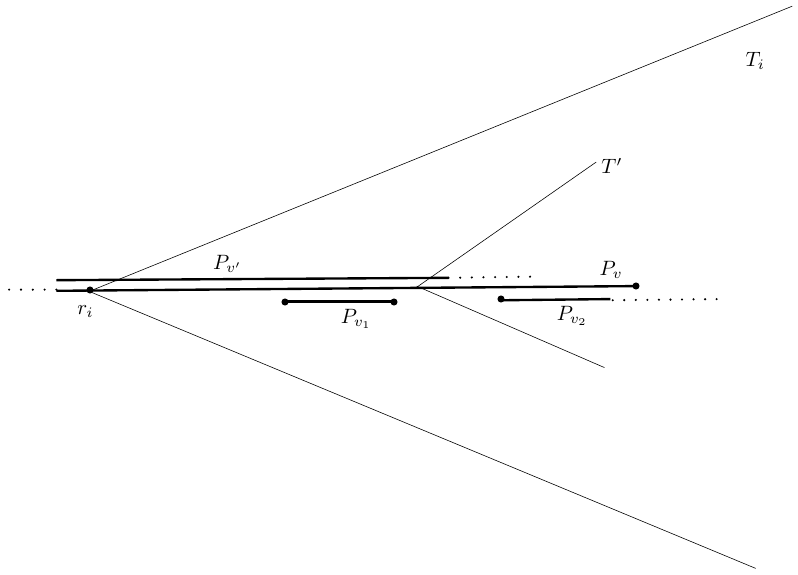}
\caption{Proof of Lemma \ref{lem:ComponentGraph3Colorable}.}\label{fig:AdjacentSegmentsOfTheSameTree}
\end{figure}

\begin{lemma}\label{lem:Degree3NPH}
It is $\nph$ to determine whether a given pair $(G,C)$ where $C$ is a Hamiltonian cycle of $G$ has representation $\rep$ with $\Delta(T) \leq 3$.
\end{lemma}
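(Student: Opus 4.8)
The plan is to give a polynomial-time reduction from \textsc{3-Coloring}, using Lemma \ref{lem:ComponentGraph3Colorable} as the engine for the forward direction. Given a connected graph $H=(V_H,E_H)$ whose $3$-colorability we wish to decide (3-coloring remains \nph{} on connected graphs), I would construct in polynomial time a pair $(G,C)$ together with a distinguished vertex set $K$ such that: (a) $C$ is a Hamiltonian cycle of $G$; (b) $K$ is a \emph{maximal} clique of $G \setminus C$; and (c) $\comp(G,C,K)$ is isomorphic to $H$. The reduction then claims the equivalence ``$(G,C)$ has a representation $\rep$ with $\Delta(T)\le 3$ if and only if $H$ is $3$-colorable''. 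The ``only if'' part is immediate from Lemma \ref{lem:ComponentGraph3Colorable}: a representation with $\Delta(T)\le 3$ forces $\comp(G,C,K)\cong H$ to be $3$-colorable. Hence all the work lies in the construction and in the ``if'' direction.

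For the construction, I would encode each vertex of $H$ by a connected component of $G\setminus K$ and each edge of $H$ by a vertex of $K$. Concretely, the vertices of $K$ are intended to be represented as a claw-clique (three branches at a degree-$3$ center), so that $K$ is an independent set of $C$, consistent with $K$ being a clique of $G\setminus C$. Since $H$ is connected, I would fix a closed walk traversing every edge of $H$ (for instance an Eulerian circuit of the multigraph obtained by doubling every edge) and read off a cyclic sequence of \emph{arcs}: one short arc of $C$ for each step of the walk, labeled by the vertex of $H$ it represents, with one vertex $w_{ij}\in K$ inserted between each pair of consecutive arcs whose labels are the endpoints $i,j$ of the edge traversed. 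All arcs bearing the same label $i$ are tied into a single connected component $G_i$ of $G\setminus K$ by adding chords (edges of $G\setminus C$) among their vertices; these chords are engineered to be red (splitting) so that they do not enlarge $C=\enptg{\pp}$. Along the cycle the arcs and $K$-vertices alternate, so each $w_{ij}$ is $C$-adjacent to exactly one vertex of $G_i$ and one of $G_j$, making $G_i,G_j$ adjacent in $\comp(G,C,K)$; consecutive pairs in the walk are exactly the edges of $H$, giving $\comp(G,C,K)\cong H$. Maximality of $K$ is arranged by letting chords among non-$K$ vertices stay inside a single component, so that no vertex outside $K$ is chord-adjacent to all of $K$.

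For the ``if'' direction I would turn a proper $3$-coloring of $H$ into an explicit representation with $\Delta(T)\le 3$. Take a claw with center $r$ and three branches $T_1,T_2,T_3$, realize $K$ as the claw-clique on the three claw edges, and place every component $G_c$ in the branch $T_{\phi(c)}$ indexed by its color $\phi(c)$; the arcs of a component are routed collinearly inside their branch so that intra-component chords become splitting (red) edges, while each cycle edge stays non-splitting (blue). Each path $P_{w_{ij}}$ uses the two claw edges leading into the branches $T_{\phi(i)}$ and $T_{\phi(j)}$ (distinct because $\phi(i)\ne\phi(j)$, as $\{i,j\}\in E_H$) and is extended so that it non-splits with its designated $C$-neighbor in each of the two branches and splits with every other $K$-path at the center. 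One then verifies $\eptg{\pp}=G$ and $\enptg{\pp}=C$, and that every internal vertex of the host tree (the center and the branch vertices) has degree at most $3$.

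The main obstacle is precisely this ``if'' direction: checking that the built host tree with $\Delta(T)\le 3$ realizes the pair \emph{exactly}, i.e.\ that no spurious $\ept$ or $\enpt$ edge is created and that each intended edge is present, that the intra-component chords are red while the cycle edges are blue, and that the claw-clique paths split pairwise yet non-split with their two cycle neighbours. Getting the coloring-to-branch assignment to cooperate with the non-splitting requirement for the $K$-paths—so that a $3$-coloring is genuinely enough to route all of $\set{P_{w_{ij}}}$ without introducing degree-$4$ junctions—is the delicate point; everything else (polynomial size, Hamiltonicity of $C$, maximality of $K$, and the isomorphism $\comp(G,C,K)\cong H$) is bookkeeping once the gadgets are fixed.
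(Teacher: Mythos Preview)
Your high-level strategy---reduce from \textsc{3-Coloring}, use Lemma \ref{lem:ComponentGraph3Colorable} for the ``only if'' direction, and build an explicit degree-$3$ representation for the ``if'' direction---is exactly the paper's approach. The difficulty you correctly flag is the ``if'' direction, and that is where your sketch has a genuine gap.

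The problem is the sentence ``the arcs of a component are routed collinearly inside their branch so that intra-component chords become splitting (red) edges.'' This cannot work: if all paths representing vertices of a component lie on a single path of $T$, then any two of them that intersect have a path as their union, so they \emph{never} split. Every intra-component intersection would be an $\enpt$ edge, forcing your chords to be blue rather than red, and hence to lie in $C$---a contradiction. The same issue bites the clique $K$: two $K$-paths that happen to use the same pair of claw edges and extend collinearly into both branches will not split either, so the corresponding edge of $K$ would be blue, contradicting $K\subseteq E(G)\setminus E(C)$. You need genuine branching inside each subtree to manufacture split vertices; ``collinear'' routing is precisely what prevents that.

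The paper resolves this by making each branch a caterpillar rather than a path: every short path (the analogue of your arc vertices and your $u_k$'s) is given its own pendant leaf as one endpoint. Two such paths then split at the spine vertex where one of them turns off to its private leaf, yielding red edges exactly where needed while keeping $\Delta(T)\le 3$. The paper also sidesteps your Euler-walk construction: it simply alternates $K$-vertices with non-$K$ vertices along $C$ (so every arc of $C\setminus K$ is a single vertex), and it does not insist on $\comp(G,C,K)\cong H$ exactly---instead $\comp(G,C,K)$ is $H$ together with an independent set of degree-$\le 2$ vertices, which is $3$-colorable iff $H$ is. Both simplifications make the explicit representation much easier to verify.
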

\begin{proof}
The proof is by reduction from the $3$-colorability problem. Given a graph $H$, we transform it to a pair $(G,C)$ such that $(G,C)$ has a representation on a tree with maximum degree $3$ if and only if $H$ is $3$-colorable.

Consult Figure \ref{fig:ReductionPair} for the following construction. Let $V(H)=\set{v_0, \ldots, v_{n-1}}$, $E(H)=\set{e_0, \ldots, e_{m-1}}$, and let $d_i=d_H(v_i)$.
The pair $(G,C)$ consists of $6m$ vertices. For every edge $e_k=\set{v_i,v_j}$ we construct a path $S_k=(u_{i,k}-u'_{i,k}-u_{j,k}-u'_{j,k}-u_k-u'_k)$ with $6$ vertices. The graph $C$ is a cycle obtained by concatenating these $m$ paths, in the order $S_0,S_1,\ldots,S_{m-1},S_0$, i.e. $u'_k$ is connected to $u_{i',k+1}$ where $e_{k+1}=\set{v_{i'},v_{j'}}$. $K$ is a clique of all the vertices in the even positions of the paths, i.e. $K=\set{u'_{i,k},u'_k : 0 < k < m, i \in e_k}$ (most of the edges induced by $K$ are not shown in the figure). For every $i < n$, $Q_i$ is a path $(u_{i,k_1}-\cdots-u_{i,k_{d_i}})$ where $e_{k_1},\ldots,e_{k_{d_i}}$ are the edges incident to $v_i$ in $H$. The set $E^{KQ}_i$ of edges connects vertices of $Q_i$ with vertices of $K$. Specifically, $E^{KQ}_i=\set{\set{u_{i,k_j},u'_{i,k_{j'}}}~|~1 \leq j' < j \leq d_i}$. Finally, $G=C \cup K \cup \left(\cup_{i<n} Q_i \right)  \cup \left( \cup_{i<n} E^{KQ}_i \right)$.

We claim that the vertices of the graph $H'=\comp(G,C,K)$ can be partitioned into two sets $A,B$ such that a) $H'[A]$ is isomorphic to $H$, b) $H'[B]$ is an independent set, c) $d_{H'}(v) \leq 2$ for every vertex $v \in B$. Indeed, $G \setminus K$ contains the vertices $\set{u_{i,k},u_k : k < m, i \in e_k}$ where each $u_k$ is an isolated vertex and the rest is the disjoint union of the paths $Q_i$. Therefore, the component graph $H'$ consists of the vertices $A=\set{Q_i : i<n}$ and $B=\set{u_k: k < m}$. For two vertices $v_i,v_j$ of $H$, $Q_i$ and $Q_j$ are connected by the vertex $u'_{i,k} \in V(K)$ if and only if $e_k=\set{v_i,v_j}$ is an edge of $H$. Therefore, $H'[A]$ is isomorphic to $H$. Moreover, a vertex $u_k$ of $G$ is connected to at most two paths $Q_i$ via its two neighbors in $C$. Therefore, $H'[B]$ is an independent set and every vertex of $B$ has degree at most $2$ in $H'$. We conclude that $H$ is $3$-colorable if and only if $H'$ is $3$-colorable. If $(G,C)$ has a representation $\rep$ with $\Delta(T) \leq 3$ then, by Lemma \ref{lem:ComponentGraph3Colorable}, $H'$ is $3$-colorable. It remains to show that if $H'$ is $3$-colorable then $(G,C)$ has such a representation. Given a $3$-coloring of $H'$, in the sequel we present such a representation $\rep$ (see Figure \ref{fig:ReductionRepresentation}).

We start with the construction of the tree $T$. $T$ has a vertex $r$ of degree at most $3$ that divides it into at most $3$ subtrees $T_1, T_2, T_3$, each of which with maximum degree $3$. Each $T_i$ corresponds to one color of the given $3$-coloring of $H'$. We describe in detail the subtree $T_1$, assuming without loss of generality that the vertices of $H'$ colored with color $1$ are $Q_1,Q_2,\ldots,Q_{n'}$ and $u_1,u_2,\ldots,u_{m'}$. $T_1$ contains a path $(r-e_1-\cdots-e_{m'}-v_1-\cdots-v_{n'})$. Each vertex $e_k$ starts a path $(e_k-\ell_k)$ of length $1$. Each vertex $v_i$ starts a path $(v_i-w_i-w_{i,k_1}-\cdots-w_{i,k_{d_i}})$ where $e_{k_1},\ldots,e_{k_{d_i}}$ are the edges incident to $v_i$ in $G$. Each vertex $w_{i,k}$ starts a path $(w_{i,k}-\ell_{i,k})$ of length $1$.

We proceed with the construction of the paths $\pp$. Every vertex $u_k$ of $G$ is represented by a path $P_k$ of length $1$ starting at vertex $\ell_k$. Each vertex $u_{i,k}$ of $G$ is represented by a path $P_{i,k}$ of length $3$ starting at $\ell_{i,k}$ and towards $r$. It remains to describe the representation of the vertices of $K$. Every vertex $u'$ of $K$ is adjacent to two vertices of $V(C) \setminus K$ in $C$. We represent $u'$ by a path between two leaves of $T$ (not all of them shown in the figure). These leaves are exactly the leaves that constitute endpoints of the paths corresponding to the two neighbors of $u'$. Specifically:
\begin{itemize}
\item{}
A vertex $u'_{i,k}$ of $S_k$ that is between two vertices $u_{i,k}$ and $u_{j,k}$ of $S_k$ is represented by a path $P'_{i,k}$ between the two leaves $\ell_{i,k}$ and $\ell_{j,k}$.
\item{}
A vertex $u'_{j,k}$ of $S_k$ that is between two vertices $u_{j,k}$ and $u_k$ of $S_k$ is represented by a path $P'_{j,k}$ between the two leaves $\ell_{j,k}$ and $\ell_k$.
\item{}
A vertex $u'_k$ of $S_k$ that is between two vertices $u_k$ of $S_k$ and $u_{i,k+1}$ of $S_{k+1}$ is represented by a path $P'_k$ between the two leaves $\ell_k$ and $\ell_{i,k+1}$.
\end{itemize}

The vertices $u_{i,k}$ and $u_{j,k}$ are in the connected components $Q_i$ and $Q_j$ respectively, which in turn are adjacent in $H'$ (by the existence of $u'_{i,k} \in K$ between them). They are therefore assigned different colors, i.e. the leaves $\ell_{i,k}$ and $\ell_{j,k}$ are in different subtrees of $T$. Therefore, $P'_{i,k}$ crosses $r$. It can be verified that this holds for the other two cases too. We conclude that the vertices of $K$ are represented by paths that cross $r$. If $H'$ is $2$-colorable then they constitute an edge-clique, otherwise they constitute a claw-clique. We leave to the reader to verify that $\rep$ is a representation of $(G,C)$.
\end{proof}
\begin{figure}[htbp]
\centering
\includegraphics[width=\textwidth]{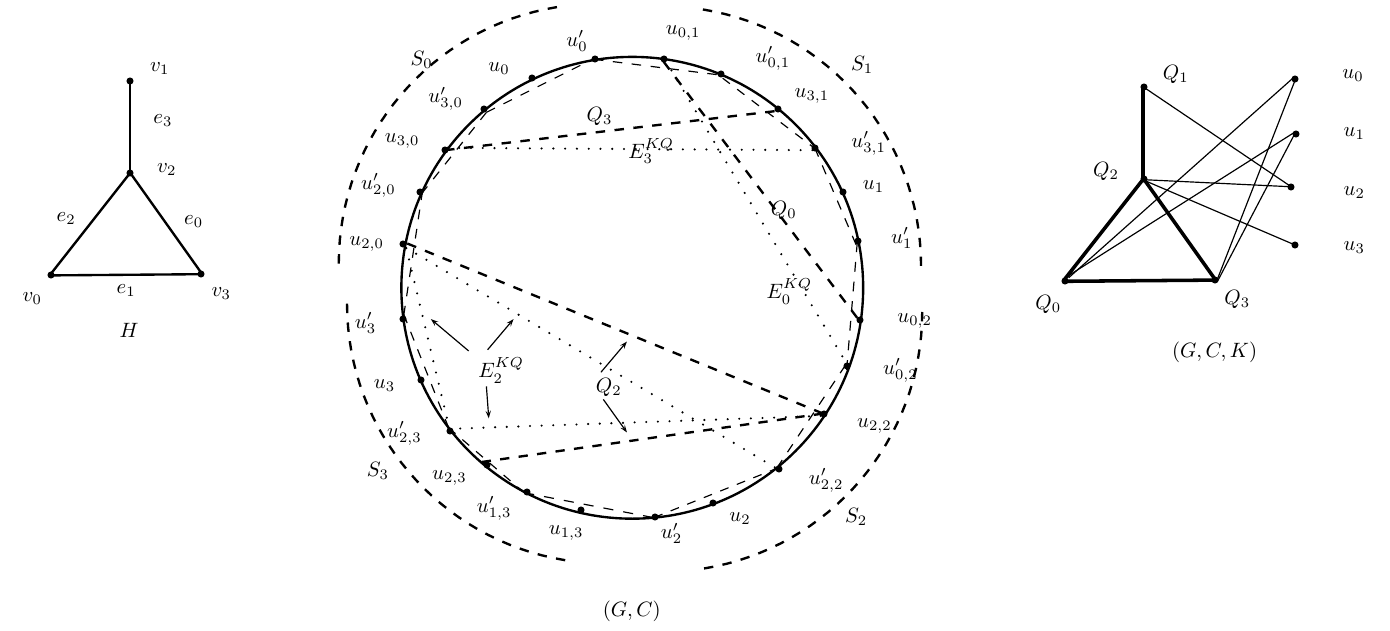}
\caption{A graph $H$, the corresponding pair $(G,C)$ and the component graph $\comp(G,C,K)$ where $K=\set{u'_{i,k},u'_k : 0 < k < m, i \in e_k}$.}
\label{fig:ReductionPair}
\end{figure}
\begin{figure}[htbp]
\centering
\includegraphics[width=\textwidth]{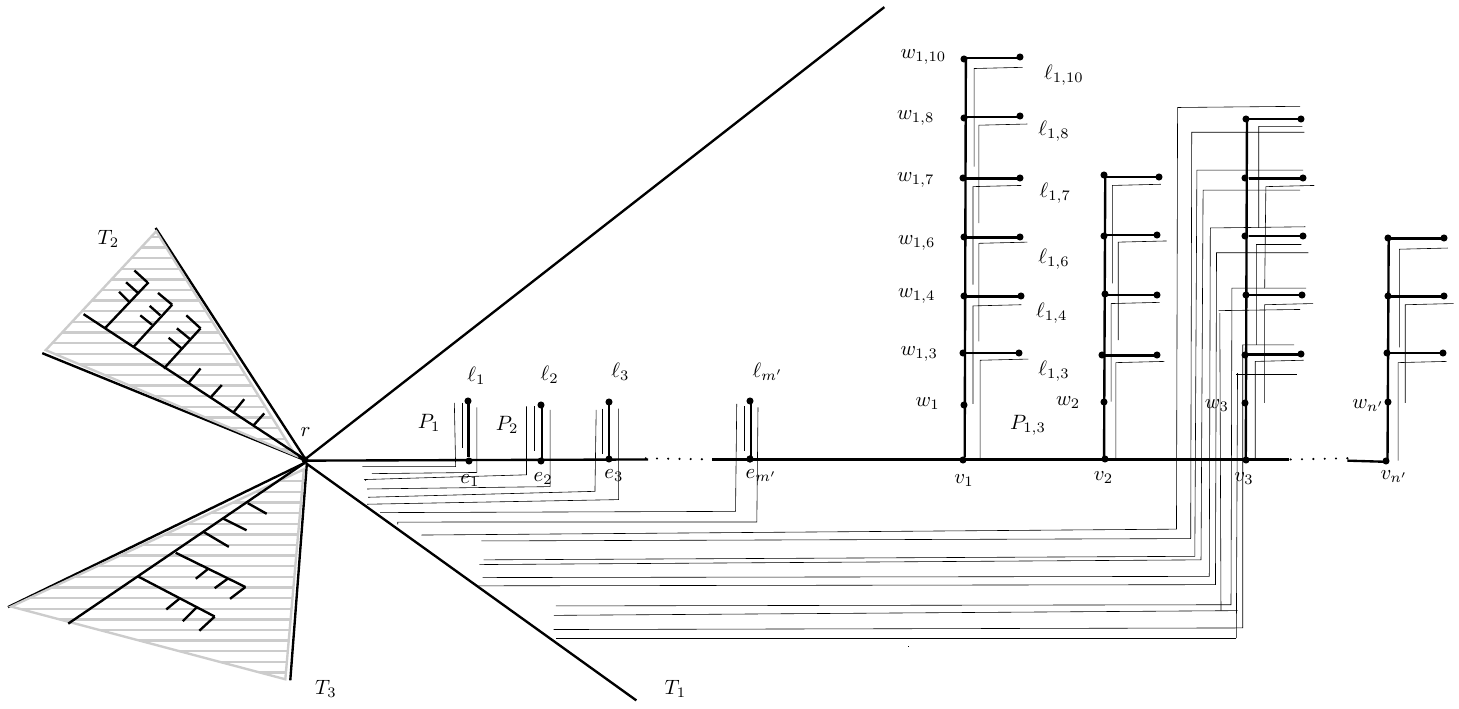}
\caption{A representation $\rep$ of a pair $(G,C)$ corresponding to some $3$-colorable graph $H$.}
\label{fig:ReductionRepresentation}
\end{figure}

\begin{theorem}
$\prb$ is $\nph$.
\end{theorem}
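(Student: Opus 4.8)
The plan is to reduce $3$-colorability to $\prb$, reusing verbatim the gadget constructed in the proof of Lemma~\ref{lem:Degree3NPH}. Given a graph $H$ we build the same pair $(G,C)$ together with the same distinguished clique $K$, and claim that $(G,C)$ admits an $\enpt$ representation — equivalently, that $\prb$ returns a minimal representation rather than ``NO'' — if and only if $H$ is $3$-colorable. Observe first that whether the problem asks for \emph{a} representation or for a \emph{minimal} one is immaterial: a representation exists iff a minimal one does, since the minifying operations only shrink a given representation inside its equivalence class. Since the construction of $(G,C)$ from $H$ is clearly polynomial, establishing this equivalence suffices.

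One direction is essentially done. If $H$ is $3$-colorable then so is $H'=\comp(G,C,K)$ (recall from the proof of Lemma~\ref{lem:Degree3NPH} that $H'[A]\cong H$, that $H'[B]$ is independent, and that every vertex of $B$ has degree at most $2$ in $H'$, whence $H$ and $H'$ are simultaneously $3$-colorable), and the explicit construction in that proof turns any $3$-coloring of $H'$ into a representation $\rep$ of $(G,C)$ (incidentally one with $\Delta(T)\le 3$). Hence $(G,C)$ is a YES instance.

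For the converse I must show that \emph{every} representation of $(G,C)$, not just one on a tree of maximum degree $3$, forces $H'$ — and therefore $H$ — to be $3$-colorable. The skeleton is that of Lemma~\ref{lem:ComponentGraph3Colorable}: the clique $K$ is represented either by an edge-clique on a single edge $e$, or by a claw-clique on a claw with edges $e_1,e_2,e_3$ and center $r$ (Theorem~\ref{thm:golumbiccliques}). In the edge-clique case $e$ splits $T$ into exactly two subtrees and $H'$ is even $2$-colorable, with no degree assumption needed. The entire difficulty is concentrated in the claw-clique case when $\deg_T(r)>3$ — exactly the point where Lemma~\ref{lem:ComponentGraph3Colorable} invokes $\Delta(T)\le 3$. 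When the center has higher degree, deleting $e_1,e_2,e_3$ leaves, besides the three subtrees $T_1,T_2,T_3$, a fourth region $T_0$ containing $r$, and a priori a component of $G-K$ might straddle two of the $T_j$ by passing through $T_0$, which would break the three-way coloring.

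The key step, and the expected main obstacle, is to rule this out for our gadget. I would argue as follows. Each component of $G-K$ contains an arc of $C-K$, and every arc-endpoint $v$ is joined by a $C$-edge (a blue, non-splitting edge) to a clique vertex $u'$; as $P_v\cup P_{u'}$ is a path, $P_v$ is collinear with $P_{u'}$, and since $v\notin K$ uses at most one of the three claw edges, its relevant part lies inside one of $T_1,T_2,T_3$ — the paths of $\pp_K$ simply do not $\sim$-reach $T_0$. It then remains to forbid any path of a component from crossing $r$ into $T_0$: such a path would contain a claw edge $e_a$, hence share $e_a$ with \emph{all} clique paths of the corresponding type and, because of the extra junction created at $r$, become $\ept$-adjacent (by a red edge) to every one of them. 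In the gadget the $\ept$-neighborhood of each vertex of $G-K$ inside $K$ is exactly the set prescribed by $C$ and by the edges $E^{KQ}_i$, so these forced adjacencies would be spurious, contradicting $\eptgp=G$. Consequently each component is confined to a single $T_j$, the three subtrees induce a proper $3$-coloring of $H'$ precisely as in Lemma~\ref{lem:ComponentGraph3Colorable}, and $H$ is $3$-colorable. Combining the two directions establishes that $\prb$ is $\nph$.
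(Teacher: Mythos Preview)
Your route is different from the paper's and it works. The paper dispatches the theorem in two lines: it observes that when $G$ is a $\vpt$ graph, a theorem of Golumbic--Jamison makes the existence of \emph{any} representation of the pair equivalent to the existence of one on a tree of maximum degree $3$, and then invokes Lemma~\ref{lem:Degree3NPH}. You bypass that external citation and argue directly, for the specific gadget $(G,C)$ of Lemma~\ref{lem:Degree3NPH}, that every representation on an arbitrary tree already yields a proper $3$-colouring of $H'=\comp(G,C,K)$. Your version is longer but self-contained and exposes why the gadget is rigid.

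A couple of tightenings. First, the construction alternates $K$ and $V(G)\setminus K$ along $C$, so every arc of $C-K$ is a single vertex; hence \emph{every} $v\notin K$ is an arc endpoint and your collinearity argument --- $P_v\sim P_{u'}$ forces $P_v$ to be a subpath of a path through $r$ using two claw edges, hence (as $v\notin K$ uses at most one of them) $P_v$ lies entirely inside one $T_a$ --- already covers all non-$K$ vertices. Your second step (``forbid crossing $r$ into $T_0$'') is therefore redundant; this is fortunate, because the ``spurious adjacencies'' reasoning you give there is not airtight: the vertices $u_{i,k_j}$ may have many legitimate red neighbours in $K$ via $E^{KQ}_i$, and you have not controlled how many $K$-paths traverse a given claw edge. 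Second, to pass from ``each $P_v$ lies in some $T_j$'' to ``each component lies in a single $T_j$'' you need one more line: two $\ept$-adjacent paths confined to $T_a$ and $T_b$ with $a\ne b$ would share only the vertex $r$, not an edge. Finally, when you appeal to Lemma~\ref{lem:ComponentGraph3Colorable} for properness of the colouring, note that you are really invoking only the second half of its proof (adjacent components land in distinct subtrees), which indeed does not use the hypothesis $\Delta(T)\le 3$ once the components are known to sit inside $T_1,T_2,T_3$.
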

\begin{proof}
We claim that the decision version of the problem is $\nph$ even when $G$ is restricted to the family of $\vpt$ graphs. If the instance is a ``YES" instance, then $G$ is both a $\vpt$ and an $\ept$ graph. In this case, by Theorem 2 of \cite{Golumbic1985151}, $(G,C)$ has a representation on a tree with maximum degree $3$. If the instance is a ``NO" instance then, clearly, $(G,C)$ does not have a representation on a tree with maximum degree $3$. By Lemma \ref{lem:Degree3NPH} it is $\nph$ to decide whether $(G,C)$ has a representation on a tree with maximum degree $3$.
\end{proof}

\section{Conclusions and Future Work}\label{sec:conclusion}
In this study, we considered the characterization of minimal representations of $\enpt$ cycles. We described an algorithm finding the unique minimal representation of a pair of $\enpt$ and $\ept$ graphs that satisfy assumption $(P3)$, i.e. every red clique is represented by a claw-clique. Through this algorithm we characterized the representations of $\enpt$ cycles as broken planar tours with cherries. We have shown that there is no efficient algorithm to achieve this goal in general (i.e. without this assumption) unless $\textsc{P}=\textsc{NP}$.

Note that if we allow red edge-cliques, the representation is not necessarily a planar tour. The first such representation is a non-planar tour whose $\enpt$ graph is a cycle. Another example is depicted in Figure \ref{fig:eptn-monster-cycle}. This representation is not a tour since the set of its ``long'' paths does not define a cyclic permutation of the leaves of the tree.

Another direction of research would be to investigate the relation of $\enpt$ graphs with other graph classes. It is easy to see that $\enpt \setminus \ept \neq \emptyset$; for example consider the wheel $W_{5,1}=C_5+K_1$: it is not $\ept$ graph but is an $\enpt$ graph.
\cite{Golumbic1985151} characterized the graphs in $\vpt \cap \ept$. Another interesting research topic could be the characterization of the graphs in $\ept \cap \enpt$.

Last but not least, restriction to $\ept$ graphs of decision/optimization problems known to be $\nph$ in general graphs, such as minimum vertex coloring, maximum stable set, and hardness of recognition of $\enpt$ graphs seem to be major problems to investigate on these graphs.

\begin{figure}[htbp]
\centering
\includegraphics{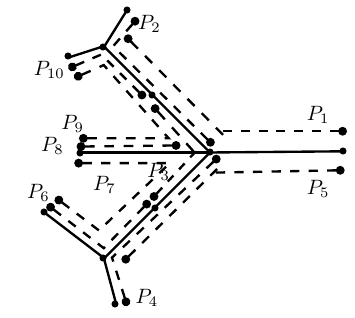}
\caption{A representation of $C_{10}$ which is not a tour.}
\label{fig:eptn-monster-cycle}
\end{figure}

\small

\bibliographystyle{abbrvnat}
\bibliography{GraphTheory,Optical,Mordo}

\begin{thebibliography}{24}
\providecommand{\natexlab}[1]{#1}
\providecommand{\url}[1]{\texttt{#1}}
\expandafter\ifx\csname urlstyle\endcsname\relax
  \providecommand{\doi}[1]{doi: #1}\else
  \providecommand{\doi}{doi: \begingroup \urlstyle{rm}\Url}\fi

\bibitem[Biedl and Stern(2010)]{BiedlS10}
T.~C. Biedl and M.~Stern.
\newblock On edge-intersection graphs of k-bend paths in grids.
\newblock \emph{Discrete Mathematics {\&} Theoretical Computer Science},
  12\penalty0 (1):\penalty0 1--12, 2010.

\bibitem[Boyac{\i} et~al.(2013)Boyac{\i}, Ekim, Shalom, and Zaks]{BESZ13}
A.~Boyac{\i}, T.~Ekim, M.~Shalom, and S.~Zaks.
\newblock Graphs of edge-intersecting non-splitting paths in a tree: Towards
  hole representations.
\newblock In \emph{WG 2013, 39th International Workshop on Graph-Theoretic
  Concepts in Computer Science}, volume 8165 of \emph{Lecture Notes in Computer
  Science}, pages 115--126, June 2013.

\bibitem[Boyac{\i} et~al.(2016)Boyac{\i}, Ekim, Shalom, and
  Zaks]{BESZ13-ENPT1-DAM}
A.~Boyac{\i}, T.~Ekim, M.~Shalom, and S.~Zaks.
\newblock Graphs of edge-intersecting non-splitting paths in a tree: Towards
  hole representations-{P}art {I}.
\newblock \emph{Discrete Applied Mathematics}, 215:\penalty0 47--60, December
  2016.

\bibitem[Cameron et~al.(2016)Cameron, Chaplick, and Ho\`{a}ng]{CCH16}
K.~Cameron, S.~Chaplick, and C.~T. Ho\`{a}ng.
\newblock Edge intersection graphs of {L}-shaped paths in grids.
\newblock \emph{Discrete Applied Mathematics}, 210:\penalty0 185 -- 194, 2016.

\bibitem[Chartrand and Harary(1967)]{CH67}
G.~Chartrand and F.~Harary.
\newblock Planar permutation graphs.
\newblock \emph{Annales de l'institut Henri Poincar\'{e} (B) Probabilit\'{e}s
  et Statistiques}, 3\penalty0 (4):\penalty0 433--438, 1967.

\bibitem[Chlamtac et~al.(1992)Chlamtac, Ganz, and Karmi]{CGK92}
I.~Chlamtac, A.~Ganz, and G.~Karmi.
\newblock Lightpath communications: An approach to high bandwidth optical
  wan's.
\newblock \emph{IEEE Transactions on Communications}, 40\penalty0 (7):\penalty0
  1171--1182, Jul 1992.

\bibitem[Du and Vetter(1993)]{DV93}
D.~H.~C. Du and R.~J. Vetter.
\newblock Distributed computing with high-speed optical networks.
\newblock In \emph{Proceeding of IEEE Computer}, volume~26, pages 8--18, 1993.

\bibitem[Epstein et~al.(2013)Epstein, Golumbic, and Morgenstern]{EGM2013}
D.~Epstein, M.~Golumbic, and G.~Morgenstern.
\newblock Approximation algorithms for {B1-EPG} graphs.
\newblock \emph{Lecture Notes in Computer Science (including subseries Lecture
  Notes in Artificial Intelligence and Lecture Notes in Bioinformatics)}, 8037
  LNCS:\penalty0 328--340, 2013.

\bibitem[Gavril(1978)]{Fanica1978211}
F.~Gavril.
\newblock A recognition algorithm for the intersection graphs of paths in
  trees.
\newblock \emph{Discrete Mathematics}, 23\penalty0 (3):\penalty0 211 -- 227,
  1978.

\bibitem[Gavril(2000)]{Fanica2000181}
F.~Gavril.
\newblock Maximum weight independent sets and cliques in intersection graphs of
  filaments.
\newblock \emph{Information Processing Letters}, 73\penalty0 (5-6):\penalty0
  181 -- 188, 2000.

\bibitem[Gerstel et~al.(1998)Gerstel, Ramaswami, and Sasaki]{GRS98}
O.~Gerstel, R.~Ramaswami, and G.~Sasaki.
\newblock Cost effective traffic grooming in wdm rings.
\newblock In \emph{INFOCOM'98, Seventeenth Annual Joint Conference of the IEEE
  Computer and Communications Societies}, 1998.

\bibitem[Golumbic(2004)]{Golumbic:2004:AGT:984029}
M.~C. Golumbic.
\newblock \emph{Algorithmic Graph Theory and Perfect Graphs (Annals of Discrete
  Mathematics, Vol 57)}.
\newblock North-Holland Publishing Co., Amsterdam, The Netherlands, 2004.
\newblock ISBN 0444515305.

\bibitem[Golumbic and Jamison(1985{\natexlab{a}})]{GJ85}
M.~C. Golumbic and R.~E. Jamison.
\newblock The edge intersection graphs of paths in a tree.
\newblock \emph{Journal of Combinatorial Theory, Series B}, 38\penalty0
  (1):\penalty0 8 -- 22, 1985{\natexlab{a}}.

\bibitem[Golumbic and Jamison(1985{\natexlab{b}})]{Golumbic1985151}
M.~C. Golumbic and R.~E. Jamison.
\newblock Edge and vertex intersection of paths in a tree.
\newblock \emph{Discrete Mathematics}, 55\penalty0 (2):\penalty0 151 -- 159,
  1985{\natexlab{b}}.

\bibitem[Golumbic et~al.(2008{\natexlab{a}})Golumbic, Lipshteyn, and
  Stern]{Golumbic2008Equivalences}
M.~C. Golumbic, M.~Lipshteyn, and M.~Stern.
\newblock {Equivalences and the complete hierarchy of intersection graphs of
  paths in a tree}.
\newblock \emph{Discrete Appl. Math.}, 156:\penalty0 3203--3215, Oct.
  2008{\natexlab{a}}.

\bibitem[Golumbic et~al.(2008{\natexlab{b}})Golumbic, Lipshteyn, and
  Stern]{Golumbic2008Kedge}
M.~C. Golumbic, M.~Lipshteyn, and M.~Stern.
\newblock {The k-edge intersection graphs of paths in a tree}.
\newblock \emph{Discrete Appl. Math.}, 156:\penalty0 451--461, Feb.
  2008{\natexlab{b}}.

\bibitem[Golumbic et~al.(2008{\natexlab{c}})Golumbic, Lipshteyn, and
  Stern]{GolumbicLS08}
M.~C. Golumbic, M.~Lipshteyn, and M.~Stern.
\newblock Representing edge intersection graphs of paths on degree 4 trees.
\newblock \emph{Discrete Mathematics}, 308\penalty0 (8):\penalty0 1381--1387,
  2008{\natexlab{c}}.

\bibitem[Golumbic et~al.(2009)Golumbic, Lipshteyn, and Stern]{GolumbicLS09}
M.~C. Golumbic, M.~Lipshteyn, and M.~Stern.
\newblock Edge intersection graphs of single bend paths on a grid.
\newblock \emph{Networks}, 54\penalty0 (3):\penalty0 130--138, 2009.

\bibitem[Heldt et~al.(2014)Heldt, Knauer, and Ueckerdt]{HKU10}
D.~Heldt, K.~Knauer, and T.~Ueckerdt.
\newblock Edge-intersection graphs of grid paths: the bend-number.
\newblock \emph{Discrete Applied Mathematics}, 167:\penalty0 144--162, 2014.

\bibitem[L\'{e}v\^{e}que et~al.(2009)L\'{e}v\^{e}que, Maffray, and
  Preissmann]{Leveque2009Characterizing}
B.~L\'{e}v\^{e}que, F.~Maffray, and M.~Preissmann.
\newblock {Characterizing path graphs by forbidden induced subgraphs}.
\newblock \emph{J. Graph Theory}, 62:\penalty0 369--384, Dec. 2009.

\bibitem[Pergel and Rza{\.{z}}ewski(2016)]{PR16}
M.~Pergel and P.~Rza{\.{z}}ewski.
\newblock On edge intersection graphs of paths with 2 bends.
\newblock In \emph{Graph-Theoretic Concepts in Computer Science - 42nd
  International Workshop, {WG} 2016, Istanbul, Turkey, June 22-24, 2016,
  Revised Selected Papers}, Lecture Notes in Computer Science, pages 207--219,
  2016.

\bibitem[Ramaswami(1993)]{R93}
R.~Ramaswami.
\newblock Multi-wavelength lightwave networks for computer communication.
\newblock \emph{IEEE Communications Magazine}, 31:\penalty0 78--88, 1993.

\bibitem[Spinrad and Sritharan(1995)]{Spinrad1995181}
J.~Spinrad and R.~Sritharan.
\newblock Algorithms for weakly triangulated graphs.
\newblock \emph{Discrete Applied Mathematics}, 59\penalty0 (2):\penalty0 181 --
  191, 1995.

\bibitem[Tarjan(1985)]{RobertE1985221}
R.~E. Tarjan.
\newblock Decomposition by clique separators.
\newblock \emph{Discrete Mathematics}, 55\penalty0 (2):\penalty0 221 -- 232,
  1985.

\end{thebibliography}
\label{sec:biblio}

\normalsize

\end{document}